\definecolor{exodus}{RGB}{104, 109, 224}  
\definecolor{asbestos}{RGB}{127, 140, 141}  
\definecolor{gold}{RGB}{225, 177, 44}  
\definecolor{vanadyl}{RGB}{0, 151, 230}  
\definecolor{clouds}{RGB}{236, 240, 241}
\definecolor{shy}{RGB}{162, 155, 254}
\definecolor{paradise}{RGB}{224, 253, 224}
\definecolor{electron}{RGB}{9, 132, 227}
\definecolor{alizarin}{RGB}{231, 76, 60}
\newcommand{\cc}[1]{\mathcal{#1}}  
\newcommand{\ctt}[1]{\texttt{#1}}  
\newcommand{\mtt}[2]{\texorpdfstring{#1}{#2}}  
\newcommand{\NP}{\mathbf{NP}}  
\newcommand{\st}{:}  
\newcommand{\U}{X}  
\newcommand{\pow}[1]{\mathbf{2}^{#1}}  
\newcommand{\card}[1]{\vert #1 \vert}  
\newcommand{\ftr}{\uparrow}  
\newcommand{\idl}{\downarrow}  
\newcommand{\upp}{\,\scalebox{0.9}{$\nearrow$}\,}  
\newcommand{\dpp}{\,\scalebox{0.9}{$\searrow$}\,}  
\newcommand{\cl}{\phi}  
\newcommand{\cs}{\cc{F}} 
\newcommand{\Bp}{\cc{B}^+}  
\newcommand{\Bm}{\cc{B}^-}  
\newcommand{\imp}{\rightarrow}  
\newcommand{\is}{\Sigma}  
\newcommand{\iscd}{\is_{\delta}}  
\newcommand{\cd}{\delta}  
\newcommand{\G}{G}  
\DeclareMathOperator{\Mi}{Mi}  
\DeclareMathOperator{\gen}{gen}  
\DeclareMathOperator{\genD}{gen_{\mathit{D}}}  
\declaretheorem[style=plain, name=Theorem]{thm} 
\declaretheorem[style=plain, name=Question]{ques}
\declaretheorem[style=plain, name=Corollary]{cor}
\declaretheorem[style=plain, name=Proposition]{prop}
\declaretheorem[style=plain, name=Lemma]{lem}
\declaretheorem[style=definition, name=Example]{exam}
\declaretheorem[style=definition, name=Remark]{rem}
\declaretheorem[style=remark, numbered=no, name=Proof sketch]{proofsketch}
\newcommand{\problemtitle}[1]{\gdef\@problemtitle{#1}}
\newcommand{\probleminput}[1]{\gdef\@probleminput{#1}}
\newcommand{\problemquestion}[1]{\gdef\@problemquestion{#1}}
	\par\addvspace{.5\baselineskip}
	\par\addvspace{.5\baselineskip}}
	\par\addvspace{.5\baselineskip}
	\par\addvspace{.5\baselineskip}}
\newcommand{\userparagraph}[1]{\paragraph{#1}}
\title{Computing the $D$-base and $D$-relation in finite closure systems}
\author[1]{Kira Adaricheva}
\author[2]{Lhouari Nourine}
\author[3]{Simon Vilmin}
\affil[1]{Department of Mathematics, Hofstra University, Hempstead, NY 11549, USA.}
\affil[2]{Université Clermont Auvergne, Clermont Auvergne INP, CNRS, LIMOS, F-63000 Clermont–Ferrand, France.}
\affil[3]{Aix-Marseille Université, CNRS, LIS, Marseille, France.}
\begin{document}
\maketitle

\begin{abstract}
Implicational bases (IBs) are a common representation of finite closure systems and lattices, along with meet-irreducible elements.
They appear in a wide variety of fields ranging from logic and databases to Knowledge Space Theory.

Different IBs can represent the same closure system.
Therefore, several IBs have been studied, such as the canonical and canonical direct bases.
In this paper, we investigate the $D$-base, a refinement of the canonical direct base.
It is connected with the $D$-relation, an essential tool in the study of free lattices.
The $D$-base demonstrates desirable algorithmic properties, and together with the $D$-relation, it conveys essential properties of the underlying closure system.
Hence, computing the $D$-base and the $D$-relation of a closure system from another representation is crucial to enjoy its benefits. 
However, complexity results for this task are lacking. 

In this paper, we give algorithms and hardness results for the computation of the $D$-base and $D$-relation.
Specifically, we establish the $\NP$-completeness of finding the $D$-relation from an arbitrary IB; we give an output-quasi-polynomial time algorithm to compute the $D$-base from meet-irreducible elements; and we obtain a polynomial-delay algorithm computing the $D$-base from an arbitrary IB.
These results complete the picture regarding the complexity of identifying the $D$-base and $D$-relation of a closure system.
\end{abstract}


\section{Introduction}

A \emph{closure system} over a finite groundset $\U$ is a set system containing $\U$ and closed under taking intersections.
The sets in a closure system are \emph{closed sets}, and when ordered by inclusion they form a \emph{lattice}.
Lattices and closure systems are used in a number of fields of mathematics and computer science such as algebra \cite{gratzer2011lattice}, databases \cite{mannila1992design}, logic \cite{hammer1995quasi}, or Knowledge Space Theory \cite{doignon1999knowledge} to mention a few.

Often, a closure system is implicitly given by one of the following two representations: \emph{meet-irreducible elements} or \emph{implicational bases} (IBs).
The family of meet-irreducible elements of a closure system is the unique minimal subset of closed sets from which the whole system can be rebuilt using intersections.
An implication over groundset $\U$ is a statement $A \imp c$ where $A$ is a subset of $\U$ and $c$ an element of $\U$.
In $A \imp c$, $A$ is the \emph{premise} and $c$ the \emph{conclusion}.
The implication $A \imp c$ stands for ``if a set includes $A$, it must also contain the element $c$''.
An implicational base (IB) over $\U$ is a collection of implications over $\U$.
An IB encodes a unique closure system.
On the other hand, a closure system can be represented by several equivalent IBs.
IBs are known as pure Horn CNFs in logic \cite{boros2009subclass, hammer1995quasi}, covers of functional dependencies in databases \cite{mannila1992design}, association rules in data mining \cite{agrawal1996fast}, or entailments in Knowledge Space Theory \cite{doignon1999knowledge}.

This variety of applications brought a rich theory of implications still at the core of several works, as witnessed by surveys \cite{bertet2018lattices,wild2017joy} and recent contributions \cite{berczi2024hypergraph, berczi2024matroid, bichoupan2022complexity, bichoupan2023independence, nourine2023hierarchical}. 
Specifically, since two distinct IBs can represent the same closure system, several IBs have been studied \cite{adaricheva2013ordered, bertet2010multiple, guigues1986familles, wild1994theory}.
Within this galaxy, the \emph{canonical base} and the \emph{canonical direct base} are arguably the two most shining stars.
They are unique, and they reflect two complementary ways of understanding IBs:
\begin{enumerate}[(1)]
\item \emph{the canonical base} \cite{guigues1986familles}, also known as the \emph{Duquenne-Guigues base}, puts emphasis on the premises of implications.
Its implications are of the form $A \imp C$ (rather than $A \imp c$) where $A$ must be \emph{pseudo-closed} and $C$ is the \emph{closure} of $A$, i.e., the set of all elements derived from $A$.
The canonical base has a minimum number of implications, and can be reached from any other IB in polynomial time \cite{guigues1986familles, day1992lattice}.

\item \emph{The canonical direct base} is defined by its conclusions (see the survey \cite{bertet2010multiple}).
It describes, for each element $c$, the inclusion-wise minimal sets $A$ such that $A \imp c$ holds in the closure system.
Such minimal sets are the \emph{minimal generators} of $c$.
The canonical direct base then gathers all implications of the form $A \imp c$ (known in logic as the prime implicates in Horn CNFs).
It enjoys the property of being \emph{direct}, meaning that the closure of a set can be computed with a single pass over the implications.
Moreover, the canonical direct base naturally captures the $\cd$-relation of lattice theory \cite{monjardet1997dependence}, where $c \cd a$ means that $a$ belongs to some minimal generator of $c$.
This relation and its transitive closure have been used implicitly in the context of pure Horn CNFs and implication-graphs for minimization \cite{boros1998horn} or to recognize and optimize acyclic Horn functions and some of their generalizations \cite{boros2009subclass, hammer1995quasi}.
\end{enumerate}
In this paper, we study a subset of the canonical direct base: the $D$-base \cite{adaricheva2013ordered}.
It relies on particular minimal generators called \emph{$D$-generators}.
A minimal generator $A$ of $c$ is a \emph{$D$-generator} of $c$ if its closure with respect to binary implications---implications of the form $a \imp c$---is minimal as compared to those of other minimal generators of $c$.
The $D$-generators also appear under the name \emph{minimal pairs} in the study of semimodular closure operators \cite{faigle1981projective}.
The $D$-base then consists in all implications of the form $A \imp c$ where $A$ is a $D$-generator of $c$, along with all valid (and non-trivial) binary implications.
In general, the $D$-base is much smaller than the canonical direct base.
Yet, it still enjoys directness as long as its implications are suitably ordered.
This makes the $D$-base appealing for computational purposes. 
Several application projects were carried out recently that provide analysis of data by employing the $D$-base \cite{adaricheva2015measuring, nation2021algorithms,adaricheva2023seabreese}.

Besides, the $D$-base embeds the $D$-relation of a closure system: $cDa$ holds if $a$ belongs to a $D$-generator of $c$.
The $D$-relation has played a major role in lattice theory since the 1970s \cite{JonsNat75}.
It is crucial in the study of free lattices \cite{freese1995free} and for the doubling of convex sets \cite{day1992double}.
Also, the $D$-relation compactly convey structural information of the underlying closure system.
The most striking examples are lower bounded closure systems that are precisely characterized by an acyclic $D$-relation \cite{freese1995free}.
Equivalently, lower bounded closure systems are known to be representable as lattices of suborders of partial orders \cite{sivak1978representations}.
Acyclic closure systems (i.e., acyclic Horn functions) \cite{adaricheva2017optimum, hammer1995quasi, wild1994theory} and closure systems generated by sub-semilattices of semilattices \cite{adaricheva1991ssl} are well-known examples of lower bounded closure systems.

Despite their importance, the algorithmic solutions for several questions related to $D$-base and $D$-relation are still lacking.
The first question is about computing the $D$-relation from an IB.

\begin{ques} \label{ques:D-relation}
Is it possible to compute in polynomial time the $D$-relation of a closure system given by an IB?
\end{ques}

The other two questions are two sides of the same coin for they ask to compute the $D$-base of a closure system given by either of the two representations previously mentioned, namely an IB or meet-irreducible elements.
Algorithms for such tasks are \emph{enumeration algorithms} for they aim at outputting, without repetitions, all the implications of the $D$-base.
However, the $D$-base may contain a number of implications being exponential in the size of the input IB or meet-irreducible elements.
Hence, we need to express the complexity of our algorithms in terms of the combined size of their input and their output.
This is \emph{output sensitive complexity} \cite{johnson1988generating, strozecki2019enumeration}.
More precisely, an enumeration algorithm runs in \emph{output-polynomial time} (resp.~ \emph{output-quasi-polynomial time}) if its execution time is polynomially (resp.~ quasi-polynomially) bounded by its input and output sizes.
A more refined description of an enumeration algorithm's complexity can be conducted by analyzing the regularity with which solutions are output. 
Specifically, if the delay between two solutions output and after the last one is polynomial in the size of the input, the algorithm runs with \emph{polynomial-delay}.
Note that if an algorithm runs with polynomial-delay, it runs in output-polynomial time.
We can now formalize the questions regarding the computation of the $D$-base.

\begin{ques} \label{ques:D-base-Mi}
Can the $D$-base of a closure system given by its meet-irreducible elements be computed in output-polynomial time?
\end{ques}

\begin{ques} \label{ques:D-base-IB}
Can the $D$-base of a closure system given by an IB be computed in output-polynomial time? 
\end{ques}

\userparagraph{Contributions.}
In this paper, we study the aforementioned questions.
Without loss of generality, our results are stated for \emph{standard} closure systems (see Section~\ref{sec:preliminaries}).
To begin with, we answer Question~\ref{ques:D-relation} negatively under different restrictions related to the acyclicity of $\delta$ or $D$.
Our results are summarized by the next two theorems.

\begin{thm}[restate=DACG, label=thm:D-relation-ACG]
The problem of identifying the $D$-relation given an IB is $\NP$-complete even for acyclic (i.e., $\delta$-acyclic) closure systems and IBs with premises of size at most 2.
\end{thm}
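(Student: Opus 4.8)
The plan is to establish $\NP$-completeness via a reduction from a known $\NP$-complete problem, and the natural candidate is some variant of SAT or a graph problem whose combinatorics match the definition of the $D$-relation. Recall that $cDa$ holds when $a$ lies in a $D$-generator of $c$, i.e., in a minimal generator $A$ of $c$ whose binary closure is minimal among all minimal generators of $c$. The subtlety---and the source of hardness---is the \emph{global} comparison: deciding whether a given minimal generator is a $D$-generator requires knowing that no other minimal generator has a strictly smaller binary closure. So the target pair $(c,a)$ should be designed so that $a \in A$ for some minimal generator $A$ of $c$, and $A$ is a $D$-generator if and only if a certain combinatorial object (a satisfying assignment, a small vertex cover, an independent set) does \emph{not} exist. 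I would aim the reduction so that the ``yes'' instances of the source problem correspond to the existence of a \emph{competing} minimal generator with smaller binary closure, hence $(c,a) \notin D$; this way a polynomial algorithm for the $D$-relation would decide the source problem.

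**Controlling the closure system via a size-2 IB.** First I would fix the groundset to encode the variables/vertices of the source instance plus a few gadget elements, and write down an IB $\is$ with all premises of size at most $2$. Using only binary implications $a \imp b$ and size-$2$ implications $\{a,b\} \imp c$, one still has a lot of expressive power: the binary implications induce a quasi-order (whose strongly connected components collapse in a standard closure system), and after standardization I can assume this quasi-order is in fact a partial order, which is exactly what is needed to make the closure system $\delta$-acyclic. The key design constraints are: (i) the binary part must be acyclic, which is easy to arrange by laying the gadget elements out in ``layers''; (ii) the minimal generators of the special element $c$ must be in bijection with the candidate solutions of the source instance together with one ``canonical'' generator $A$ containing the target element $a$; and (iii) the binary closures of these minimal generators must be comparable in the intended way---the canonical one being minimal unless a genuine solution exists, in which case the solution's generator has strictly smaller binary closure. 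Verifying (ii) and (iii) is the bulk of the routine work: one has to chase closures under $\is$ and check that no spurious minimal generators appear.

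**The main obstacle.** The hard part will be (ii)---engineering the family of minimal generators of $c$ so that it is \emph{exactly} what the reduction needs, with no unwanted minimal generators sneaking in and no intended generator failing to be minimal. With premises capped at size $2$, each non-binary implication contributes generators of size $2$ only at the ``top level,'' and the real generators of $c$ are built up by repeatedly substituting premises of earlier implications; controlling this substitution process so that it produces precisely the sets encoding solutions (and the canonical set) is delicate. A standard trick is to introduce, for each variable $x_i$, a pair of elements (``$x_i$ true'' / ``$x_i$ false'') wired so that exactly one can appear in a minimal generator, plus per-clause gadgets forcing at least one literal per clause into any competing generator; then the binary implications are arranged so that the literal-elements have small binary closures while the canonical element $a$ drags in a large binary closure. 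One then argues: if the source instance is satisfiable, the corresponding set is a minimal generator of $c$ with binary closure strictly below that of the canonical generator containing $a$, so $a$ is excluded from every $D$-generator, i.e.\ $\neg(cDa)$; conversely, if it is unsatisfiable, every minimal generator of $c$ has binary closure at least that of the canonical one, and a short argument upgrades ``at least'' to ``the canonical generator is a $D$-generator,'' giving $cDa$. Membership in $\NP$ is the easy direction: a certificate for $cDa$ is a minimal generator $A \ni a$ of $c$ together with, for every other minimal generator, a witness element in its binary closure but not in that of $A$---but since there may be exponentially many minimal generators, one instead certifies $cDa$ by exhibiting $A$ and then arguing that checking ``$A$ is a $D$-generator'' reduces to checking a polynomial-size condition on the binary closure, which in the $\delta$-acyclic, size-$2$-premise setting can be verified directly; I would make sure the construction keeps this verification polynomial so that the problem sits in $\NP$ rather than merely in $\Sigma_2^p$.
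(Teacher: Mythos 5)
There is a genuine flaw in your plan, and it is the direction of the reduction. You propose to arrange matters so that a \emph{yes}-instance of the NP-complete source problem corresponds to the existence of a competing minimal generator with smaller binary closure, hence to $(c,a)\notin D$. That is a many-one reduction from an $\NP$-hard problem to the \emph{complement} of $D$-relation identification, which would establish coNP-hardness, not $\NP$-hardness. Since the problem is unconditionally in $\NP$ (the paper's Corollary~\ref{cor:DB-M}: a $D$-generator $A\ni a$ of $c$ is a certificate, verifiable in polynomial time via the characterization of Lemma~\ref{lem:D-carac}), your design could only be completed if $\NP=\mathbf{coNP}$; so somewhere the intended equivalence must break. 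The observation that ``a polynomial algorithm for the $D$-relation would decide the source problem'' gives Turing-reduction hardness at best, which is not what $\NP$-completeness requires. The paper's reduction goes the other way: from $1$-in-$3$ SAT, with a $1$-in-$3$ assignment of $\varphi$ existing if and only if $c_{m+1}Dc_1$ holds.

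Relatedly, you misdiagnose the source of hardness for this theorem. You locate it in the ``global comparison'' among minimal generators (whether some competitor has strictly smaller binary closure), but the paper's construction uses an IB with \emph{no binary implications at all}, so $\cl^b$ is trivial, every minimal generator is a $D$-generator, and $D=\delta$ on the constructed instances. The hardness is already present at the level of deciding whether $a$ lies in \emph{some} minimal generator of $c$ (a prime-attribute-type question), even with acyclic $G(\is)$ and premises of size $2$; the clause gadgets $c_iv_j\imp c_{i+1}$ chain the clauses so that minimal generators of $c_{m+1}$ containing $c_1$ are exactly $1$-in-$3$ assignments (plus conflict pairs). Finally, your treatment of $\NP$ membership is insufficient: membership must hold for arbitrary instances, not just those produced by your reduction, and your fallback (``make sure the construction keeps this verification polynomial'') does not supply the needed general fact that ``$A$ is a $D$-generator of $c$'' is polynomial-time checkable, which the paper gets from $A\in\genD(c)$ iff $c\in\cl(A)$ and $c\notin\cl(\cl^b(A)\setminus\{a\})$ for all $a\in A$.
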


\begin{thm}[restate=DLB, label=thm:D-relation-LB]
The problem of identifying the $D$-relation given an IB is $\NP$-complete even for standard lower-bounded (i.e., $D$-acyclic) closure systems where $D$-paths have length at most $2$.
\end{thm}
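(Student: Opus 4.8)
The plan is to reduce from a suitable $\NP$-hard problem (such as \textsc{3-SAT} or \textsc{Vertex Cover}) and encode instances as implicational bases whose $D$-relation is acyclic and whose $D$-paths have bounded length, so that deciding whether a particular pair $(c,a)$ satisfies $cDa$ captures the answer to the hard problem. The key difficulty, as compared with Theorem~\ref{thm:D-relation-ACG}, is that here we are not allowed to use cycles in $D$ (the closure system must be \emph{lower-bounded}, equivalently $D$-acyclic), nor long $D$-paths. So the hardness must be concentrated entirely in the combinatorial choice of \emph{which} minimal generator of $c$ is a $D$-generator — that is, in the optimization over minimal generators with respect to the binary-implication closure — rather than in any transitive propagation along $D$. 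I would therefore design a gadget where $c$ has exponentially many minimal generators, all of size comparable to the input, and where a fixed element $a$ lies in a $D$-generator of $c$ if and only if the hard instance is a yes-instance.

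First, I would fix the target element $c$ and a candidate element $a$, and build the premises of implications $A \imp c$ so that the minimal generators of $c$ are in bijection with the feasible solutions of the combinatorial problem (e.g.\ truth assignments satisfying a formula, or vertex covers of a graph). Second, I would arrange the binary implications $b \imp b'$ so that the $\idl$-closure (closure under binary implications) of a minimal generator $A$ is small precisely when $A$ corresponds to a ``good'' solution — for instance, a solution that avoids $a$, or that has a certain structure. Since a minimal generator is a $D$-generator iff its binary closure is inclusion-minimal among all minimal generators of $c$, the element $a$ will belong to some $D$-generator of $c$ iff the optimum over minimal generators is attained by one containing $a$, which we arrange to be equivalent to the yes-instance condition. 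Third, I would verify the structural constraints: that the constructed closure system is standard, that $D$ is acyclic, and crucially that every $D$-path has length at most $2$. This last point forces a careful layering of the groundset — roughly, a layer of ``variable'' elements, a layer of ``clause/constraint'' elements, and the sink $c$ together with $a$ — so that $D$-edges only go between adjacent layers and no chain of three $D$-edges exists. I would also need to rule out spurious $D$-edges created by the interaction of the binary implications with the larger premises.

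The main obstacle I anticipate is precisely this tension between needing enough structure in the minimal generators of $c$ to encode an $\NP$-hard choice, while keeping $D$ acyclic and $D$-paths of length $\le 2$. In the $\delta$-acyclic case of Theorem~\ref{thm:D-relation-ACG} one has more freedom because $\delta \supseteq D$ can be acyclic while the harder-to-control relation is kept tame; here the acyclicity demand is on $D$ itself, so I must ensure that the binary-closure comparison that singles out $D$-generators never produces a back-edge. I expect the resolution to be a bipartite-style construction: the binary implications only ever push elements ``downward'' toward a small set of designated sinks, so that binary closures are easy to compare and $D$ inherits the acyclicity and shallowness of this layering. A secondary technical point will be confirming correctness of the reduction in the direction where a yes-instance yields $cDa$: one must exhibit an actual minimal generator of $c$ containing $a$ whose binary closure is inclusion-minimal, which may require a slightly clever choice of weights/gadget so that ties are broken in the intended way. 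Once the gadget is pinned down, checking lower-boundedness reduces to exhibiting the $D$-acyclicity directly from the layering, and standardness can be arranged by a routine preprocessing as in Section~\ref{sec:preliminaries}.
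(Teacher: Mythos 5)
Your proposal is a plan rather than a proof: it identifies the right tension (encode an $\NP$-hard choice into the minimal generators of a target element while keeping $D$ acyclic and all $D$-paths of length at most $2$), but it never produces the gadget, and for this theorem the gadget and its verification \emph{are} the proof. Every load-bearing step is left as "I would arrange\dots" or "I expect the resolution to be\dots": there is no concrete implicational base, no proof that a yes-instance yields a minimal generator of $c$ containing $a$ whose $\cl^b$-closure passes the $D$-generator test, no argument ruling out spurious $D$-edges, and no verification of $D$-acyclicity, path length, or standardness. As it stands the statement is not proved.

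It is also worth flagging that the place where you propose to concentrate the hardness — "the optimization over minimal generators with respect to the binary-implication closure" — is not where the paper's construction puts it, and your framing risks leading you into a harder design problem than necessary. The paper reduces from $1$-in-$3$ SAT with groundset $V \cup C \cup \{a, b\}$, implications $a v_i \imp c_j$ (for $v_i \in C_j$), $C \imp b$, conflict implications $v_i v_j \imp b$, and binary implications $c_i \imp a$. On this instance the $D$-generator test is deliberately made \emph{trivial} on the relevant candidates: every element of $V \cup \{a\}$ has a singleton $\cl^b$-closure, so any minimal generator of $b$ contained in $V \cup \{a\}$ is automatically a $D$-generator. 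The hardness lies entirely in whether a minimal generator of $b$ containing $a$ \emph{exists} (equivalently, whether $\varphi$ has a $1$-in-$3$ assignment, enforced by the conjunction $C \imp b$ and the conflict implications), while the binary implications $c_i \imp a$ serve only to exclude generators mixing $a$ with clause elements (any such set has a strictly larger $\cl^b$-closure than $\{a, v_i\}$, which already derives $c_j$) and to truncate $D$-paths at length $2$. If you pursue your version — making the inclusion-minimality comparison of $\cl^b$-closures itself the hard step — you must contend with the fact that testing whether a \emph{given} set is a $D$-generator is polynomial (Corollary~\ref{cor:D-poly-cl}), so the hardness can only come from the exponential search over generators, and you would still need to exhibit a concrete construction certifying all the structural constraints.
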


For Question~\ref{ques:D-base-Mi}, we prove that the problem is equivalent to \emph{dualization in distributive closure systems}.
This task is a generalization of the famous \emph{hypergraph dualization} problem (see e.g., \cite{eiter1995identifying, eiter2008computational}), and has been shown recently to admit an output-quasi-polynomial time algorithm \cite{elbassioni2022dualization}.
Our results can thus be summarized as follows:

\begin{thm}[restate=DBDual, label=thm:D-base-dual]
There exists an output-polynomial time finding the $D$-base of a standard closure system given by its meet-irreducible elements if and only if there exists an output-polynomial time solving dualization in distributive closure systems.
\end{thm}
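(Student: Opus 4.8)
The plan is to establish the claimed equivalence by building, in both directions, a polynomial-time many-one reduction that also translates solutions so that output-polynomial running time is preserved. The key structural fact to exploit is that, given the meet-irreducible elements of a standard closure system, the $D$-base decomposes into (i) the binary implications $a \imp c$, which are easy to read off directly, and (ii) for each element $c$, the implications $A \imp c$ where $A$ ranges over the $D$-generators of $c$. A $D$-generator of $c$ is a minimal (under inclusion) set $A$ whose closure under the binary implications is inclusion-minimal among all minimal generators of $c$; the point is that, once the binary part of the closure operator is fixed, the collection of $D$-generators of a fixed $c$ is exactly the family of minimal transversals of a suitable hypergraph, or equivalently, one side of a dualization instance in a distributive closure system associated with $c$.

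First I would make the ``only if'' direction precise: assume an output-polynomial algorithm computing the $D$-base from meet-irreducible elements, and reduce an arbitrary instance of dualization in distributive closure systems to it. Here one takes the distributive closure system (presented by meet-irreducibles, i.e. essentially by the underlying poset, since distributive lattices are the lattices of down-sets of posets) and constructs a standard closure system whose meet-irreducibles are polynomially computable, such that the $D$-generators of some designated element $c$ are in bijection with the ``missing'' minimal sets that the dualization problem must produce. Running the assumed $D$-base algorithm then yields those sets, and restricting attention to the implications with conclusion $c$ gives the dualization output in time polynomial in the combined input/output size. The reverse (``if'') direction is the more delicate one: assuming an output-polynomial dualization algorithm, I would process the given meet-irreducibles to first extract all binary implications (hence the binary closure operator $\phi_{\mathrm{bin}}$), and then, for each $c$, set up a dualization instance in the distributive closure system of $\phi_{\mathrm{bin}}$-closed sets not containing $c$ so that its output is precisely the set of $D$-generators of $c$; concatenating over all $c$, together with the binary implications, reconstructs the $D$-base. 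Some care is needed to argue that looping over all $c \in \U$ keeps the total running time output-polynomial: the output of each per-$c$ call is a genuine subset of the $D$-base, so the sum of the per-call output sizes is at most the total output size, and an output-polynomial bound for each call composes into an output-polynomial bound overall.

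I expect the main obstacle to be the precise identification, in each direction, of the distributive closure system and the designated element so that $D$-generators correspond exactly to one side of a dualization instance --- in particular, verifying that the ``minimality among minimal generators'' clause defining $D$-generators (the comparison of binary closures) matches the transversality/dualization condition, rather than merely the plain minimal-generator condition, which would only give the canonical direct base. This is where the distributivity of the relevant closure system is essential: the $\phi_{\mathrm{bin}}$-closed sets form a distributive lattice, and the $D$-generators of $c$ are exactly the minimal new sets relative to the ideal of $\phi_{\mathrm{bin}}$-closed sets that fail to generate $c$, which is precisely a dualization instance over that distributive lattice. A secondary technical point is bookkeeping for the output-sensitive bounds: one must confirm that the reductions are parsimonious enough (no blow-up in the number of solutions, and polynomial blow-up at most in ground-set size) so that output-polynomial on one side transfers to output-polynomial on the other. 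Once these correspondences are nailed down, both reductions and the running-time analyses are routine.
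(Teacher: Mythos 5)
Your proposal is correct and follows essentially the same route as the paper: the reduction from $D$-base computation to dualization works per element $c$ by dualizing the antichain $\{M \in \Mi(\cs) \st c \upp M\}$ in the distributive lattice of $\cl^b$-closed sets, whose dual antichain is exactly $\{\cl^b(A) \st A \in \genD(c)\}$ together with $\cl^b(c)$ (the paper's Lemma~\ref{lem:gen-D-dual} and Lemma~\ref{lem:Dual-harder-D-base}), and the converse reduction introduces a designated gadget element whose $D$-generators encode the dual antichain (Lemmas~\ref{lem:D-base-harder-dual} and Proposition~\ref{prop:D-base-dual-DB}). The only details you leave unspecified that the paper must work out are the concrete gadget, namely $\Mi(\cs') = \Bp \cup \{M \cup \{d\} \st M \in \Mi(\cs)\}$ over $\U' = \U \cup \{d\}$, and the post-processing for members of the dual antichain whose minimal spanning set is a singleton, which surface as binary implications $a \imp d$ rather than as $D$-generators and must be recovered separately.
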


\begin{thm}[restate=DBMi, label=thm:D-base-Mi]
There is an output-quasi-polynomial time algorithm that computes the $D$-base of a standard closure system given by its meet-irreducible elements.
\end{thm}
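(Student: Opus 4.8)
\begin{proofsketch}
The plan is to obtain the algorithm by combining the forward direction of Theorem~\ref{thm:D-base-dual} with the output-quasi-polynomial time algorithm for dualization in distributive closure systems from \cite{elbassioni2022dualization}, making sure that the reduction we invoke is fine-grained enough to transport a quasi-polynomial (and not merely a polynomial) bound. Let $N$ denote the combined size of the groundset $\U$ together with the given list of meet-irreducible elements, and let $\isD$ be the $D$-base to be produced. First I would split $\isD$ into its binary part and its non-binary part. The binary part consists of the valid non-trivial implications $a \imp c$; since $a \imp c$ holds exactly when $c \in \bigcap\{M \in \Mi : a \in M\}$, a single pass over the meet-irreducible elements lists all of them, and along the way yields the data about binary closures that the reduction below will need. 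This part costs only polynomial time.

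For the non-binary part I would proceed conclusion by conclusion. Fix $c \in \U$ and consider the hypergraph $\H_c = \{\,\U \setminus M : M \in \Mi,\ c \notin M\,\}$: a set $A$ satisfies $A \imp c$ in the closure system precisely when $A$ is a transversal of $\H_c$, so the minimal generators of $c$ are exactly the minimal transversals of $\H_c$. The $D$-generators of $c$ are the minimal generators whose binary closure is inclusion-minimal among all minimal generators of $c$, and the proof of Theorem~\ref{thm:D-base-dual} shows that, once this selection condition is encoded through the distributive lattice of binary-closed sets, enumerating the $D$-generators of $c$ becomes an instance of dualization in a distributive closure system whose groundset and lattice have size polynomial in $N$. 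Running the algorithm of \cite{elbassioni2022dualization} on that instance lists all $D$-generators of $c$ in time quasi-polynomial in $N$ plus the number of $D$-generators of $c$.

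It remains to aggregate the cost. There are at most $\card{\U}$ conclusion-wise subproblems, no two of which can produce the same implication because distinct conclusions contribute disjoint blocks of $\isD$. Each subproblem runs in time quasi-polynomial in $N$ plus the number of $D$-generators of its conclusion, hence in time quasi-polynomial in $N + \card{\isD}$; summing the $\card{\U}$ such bounds and absorbing the polynomial factor $\card{\U}$ gives a total running time quasi-polynomial in $N + \card{\isD}$, which is the claimed output-quasi-polynomial bound.

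The hard part is not this theorem but the reduction on which it rests, namely the forward direction of Theorem~\ref{thm:D-base-dual}: one must express ``having inclusion-minimal binary closure among all minimal generators of $c$'' as a genuine distributive-dualization instance of size polynomial in $N$. A naive approach---enumerate all minimal generators of $c$ by ordinary hypergraph dualization and then filter---would bound the running time only against the canonical direct base, which can be exponentially larger than $\isD$, and hence would not give output-(quasi-)polynomiality with respect to $\isD$ itself. Once that polynomial-size distributive encoding is in hand, the argument above is routine bookkeeping and requires no enumeration machinery beyond \cite{elbassioni2022dualization}.
\end{proofsketch}
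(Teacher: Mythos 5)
Your proposal is correct and follows essentially the same route as the paper: compute the binary part of $\isD$ directly from $\Mi(\cs)$, then for each conclusion $c$ reduce the enumeration of $\genD(c)$ to a dualization instance in the distributive lattice $(\cs^b, \subseteq)$ of $\cl^b$-closed sets (the content of Lemma~\ref{lem:gen-D-dual} and Lemma~\ref{lem:Dual-harder-D-base}) and invoke \cite{elbassioni2022dualization}, aggregating over the at most $\card{\U}$ conclusions. Your explicit check that the reduction is fine-grained enough to transport a quasi-polynomial bound, rather than only the output-polynomial equivalence stated in Theorem~\ref{thm:D-base-dual}, is a point the paper leaves implicit but which its reduction indeed satisfies.
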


Finally, we tackle Question~\ref{ques:D-base-IB} using the well-known \emph{solution graph traversal} method (see e.g., \cite{elbassioni2015polynomial, johnson1988generating}) and results rooted in database theory \cite{ennaoui2025polynomial}.
Specifically, we prove:

\begin{thm}[restate=DBIB, label=thm:D-base-IB]
There is a polynomial-delay algorithm that computes the $D$-base of a standard closure system given by an arbitrary IB.
\end{thm}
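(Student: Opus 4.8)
The plan is to apply the \emph{solution graph traversal} technique to enumerate the implications of the $D$-base $\is_D$ with polynomial delay. The implications of $\is_D$ are of two kinds: the valid non-trivial binary implications $a \imp c$, and the implications $A \imp c$ where $A$ is a $D$-generator of $c$. The binary part can be read off directly from the input IB $\is$ in polynomial time (compute the closure of each singleton and collect the covering relation), so the bulk of the work is enumerating, for every element $c \in \U$, the $D$-generators of $c$. The key point is that the $D$-generators of a fixed $c$ are exactly those minimal generators $A$ of $c$ whose closure under the binary implications is inclusion-minimal among all minimal generators of $c$; equivalently (as set up in the earlier sections), they are the minimal transversals of a suitable hypergraph, or they arise as the minimal solutions of a Horn-type system derived from $\is$. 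I would organize the enumeration element by element: fix $c$, build the relevant auxiliary structure from $\is$ in polynomial time, enumerate its solutions with polynomial delay, and concatenate over all $c \in \U$ (this preserves polynomial delay up to a factor of $|\U|$ and a polynomial reorganization of the last-solution delay).

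The core step is to show that, for a fixed conclusion $c$, the $D$-generators of $c$ can be enumerated with polynomial delay. Here I would invoke the results from database theory cited as \cite{ennaoui2025polynomial}: the minimal generators of $c$ with respect to an IB form the solution set of an enumeration problem for which a polynomial-delay traversal algorithm is known, and the $D$-generators are the subset of those minimal generators that additionally minimize the binary closure. To handle the extra minimality constraint, I would set up the \emph{solution graph} whose vertices are the $D$-generators of $c$, define a canonical neighborhood relation (e.g., via a flashlight/backtracking search that, given a partial premise, decides in polynomial time whether it extends to a $D$-generator), and argue that this graph is connected and that each neighbor can be produced in polynomial time. The decision oracle ``does this partial set extend to a $D$-generator of $c$?'' reduces to computing closures (one forward-chaining pass over $\is$, polynomial time) together with a comparison of binary closures, so it is polynomial; connectivity of the solution graph follows from a standard exchange argument on minimal generators combined with the fact that the binary-closure-minimal ones form an antichain-like stratum that the traversal can move within.

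The main obstacle I anticipate is precisely the interaction between the two minimality requirements defining a $D$-generator: ordinary set-inclusion minimality of the premise, \emph{and} inclusion-minimality of the binary closure among all minimal generators of $c$. A naive flashlight search enumerating minimal generators may waste exponential time skipping over minimal generators that are not $D$-generators, so the delay bound is not immediate; the fix is to build the minimality test for the binary-closure condition directly into the extension oracle rather than filtering afterwards, which requires showing that ``partial set $S$ extends to some minimal generator of $c$ whose binary closure is contained in a prescribed target closure'' is decidable in polynomial time. Once that refined oracle is in place, the traversal yields polynomial delay per element $c$, and summing over $c \in \U$ together with the trivially computable binary part gives the polynomial-delay algorithm for the whole $D$-base. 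I would also remark that, as usual for traversal methods, the delay is polynomial in the input size alone (not the output), which is stronger than output-polynomial time and thus answers Question~\ref{ques:D-base-IB} affirmatively in the strongest sense.
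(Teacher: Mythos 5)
Your overall architecture (handle the binary part directly, then enumerate $\genD(c)$ for each $c$ by traversing a solution graph, invoking \cite{ennaoui2025polynomial}) matches the paper's, but the step you rely on to make the traversal run with polynomial delay is not available. You propose a flashlight/backtracking neighborhood whose oracle is ``does this partial set extend to a $D$-generator of $c$?'' and assert it is polynomial because it only requires forward chaining and a comparison of binary closures. That oracle is $\NP$-hard: already for a partial set of size one it asks exactly whether $cDa$ holds, which is the problem shown $\NP$-complete from an IB in Theorems~\ref{thm:D-relation-ACG} and~\ref{thm:D-relation-LB}. So the ``refined extension oracle'' you present as the fix cannot be implemented in polynomial time unless $\mathbf{P}=\NP$, and the connectivity of your solution graph cannot be delegated to ``a standard exchange argument on minimal generators''---that connectivity is precisely the hard content of the theorem. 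The paper's proof avoids any extension query: it reduces $\genD(c)$ to the $D$-minimal keys of an auxiliary standard base $(\U_c,\is_c)$ (Lemma~\ref{lem:link-minimal-D}), and from a $D$-generator $A$ \emph{already found} and a non-binary implication $B\imp d\in\is_c$ it forms $\cl_c^b((\cl_c^b(A)\setminus\cl_c^b(d))\cup B)$ and greedily reduces it to a new $D$-generator; Lemma~\ref{lem:transition} (a saturation argument adapted from \cite{ennaoui2025polynomial}, not a generic exchange) shows that these moves suffice to reach every $D$-generator, which is what makes the solution graph strongly connected and each neighborhood computable in input-polynomial time.

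Two smaller points. Your parenthetical reformulation of $\genD(c)$ as ``minimal transversals of a suitable hypergraph'' is the meet-irreducible route of Section~\ref{sec:D-base-Mi}; from an IB the relevant hypergraph can be exponentially large, so it cannot serve here. Also, the paper deliberately does not concatenate the per-$c$ enumerations but traverses the union graph $\bigcup_{c}\cc{G}(c)$ component by component, emitting all implications $A\imp c$ with $A\in\genD(c)$ when a set $A$ is first reached, since the same premise may be a $D$-generator of several elements; and your closing claim about resource usage should be tempered: the delay is indeed input-polynomial, but the traversal still requires possibly exponential space to avoid repetitions, as the paper notes.
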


\userparagraph{Related work.}
We begin with complexity results regarding the $D$-relation.
In \cite{adaricheva2013ordered}, the authors provide sub and supersets of the $D$-relation that can be obtained in polynomial time from any IB.
On the one hand, it is possible to refine any IB to a subset of the $D$-base, which gives a subset of the $D$-relation.
On the other hand, the transitive closure of the $D$-relation can be recovered from any IB.
Similar results also exist for the $\cd$-relation \cite{boros1998horn}.
The hardness of computing the $\cd$-relation can be obtained as a corollary of the $\NP$-completeness of the prime attribute problem in databases \cite{lucchesi1978candidate}.
However, since $D \subset \delta$ in general, these results do not straightforwardly apply to the $D$-relation, especially when the structure of the closure system is to be studied.
Let us mention that when one is given meet-irreducible elements instead of an IB, both the $D$-relation and the $\cd$-relation can be computed in polynomial time by means of lattice-theoretic characterizations \cite{monjardet1997dependence, freese1995free}.
Still, to our knowledge, none of these results formally settles the complexity of computing the $D$-relation from an IB, especially with regard to the structure of the underlying closure system as we do in this paper.

We move to the tasks of computing the $D$-base from an IB or from meet-irreducible elements.
In \cite{rodriguez2015implicational, rodriguez2017formation}, the authors use simplification logic to come up with two algorithms computing the $D$-base from an arbitrary IB.
However, the complexity of these algorithms is not analyzed.
As for meet-irreducible elements as input, \cite{adaricheva2017discovery} give an algorithm based on hypergraph dualization.
Yet, their algorithm will produce in general a proper superset of the $D$-base.
Note that hypergraph dualization is a central open problem in algorithmic enumeration \cite{eiter1995identifying, eiter2008computational}. 
To date, the best algorithm for this task is due to Fredman and Khachiyan \cite{fredman1996complexity} and runs in output-quasi-polynomial time.
We finally mention the algorithm of Freese et al. \cite[Algorithm 11.12, p. 232]{freese1995free}  which computes the $D$-base from the whole closure system (in fact, a meet-join table of the lattice) as input.
As there is usually an exponential gap between a closure and either of its representation, IB or meet-irreducible elements, we cannot afford using this algorithm in our context. 

We also mention algorithms for finding the related canonical direct base.
If the input representation is meet-irreducible elements, it is well-known that the problem is equivalent to hypergraph dualization (see e.g., \cite{mannila1992design, wild2017joy}).
When an IB is given, the algorithm of \cite{lucchesi1978candidate} (see also \cite{berczi2024hypergraph}) listing the minimal keys of a closure system can be adapted to list the minimal generators of a single element with polynomial-delay.
Besides, the algorithm of \cite{boros1990polynomial} computes the whole canonical direct base with a total time linearly dependent on the number of implications in the output.
These techniques cannot be used out of the box to find the $D$-base as there may be an exponential gap between the two IBs. 
Instead, we adapt the procedure of \cite{ennaoui2025polynomial} to list the so called ideal-minimal keys---called $D$-minimal keys in our paper---of a closure system.
Nevertheless, our approach naturally extends results on minimal generators, all the while taking care of the binary implications playing an important role in the $D$-base.

\userparagraph{Paper Organization.}
In Section~\ref{sec:preliminaries} we give necessary definitions regarding closure systems and their representations.
We also formally define the problems we study in the paper.
In Section~\ref{sec:D-NPC} we establish the complexity of computing the $D$-relation.
Section~\ref{sec:D-base-Mi} is dedicated to the computation of the $D$-base of a closure system given by its meet-irreducible elements.
We investigate the dual problem of finding the $D$-base from an IB in Section~\ref{sec:D-base-IB}.
We conclude the paper with some remarks for future research in Section~\ref{sec:conclusion}.

\section{Preliminaries} \label{sec:preliminaries}

All the objects considered in this paper are finite.
For notions not defined here, we redirect the reader to \cite{gratzer2011lattice}.
If $\U$ is a set, $\pow{\U}$ is its powerset.

\userparagraph{Closure operators, closure systems.} 
Let $\U$ be a set.
A \emph{closure operator} over $\U$ is a map $\cl \colon \pow{\U} \to \pow{\U}$ such that, for every $A, B \subseteq \U$: 
\begin{enumerate}[(1)]
    \item $A \subseteq \cl(A)$;
    \item $A \subseteq B$ implies $\cl(A) \subseteq \cl(B)$; and
    \item $\cl(\cl(A)) = \cl(A)$.
\end{enumerate}
The pair $(\U, \cl)$ is a \emph{closure space}.
A subset $F$ of $\U$ is \emph{closed} if $\cl(F) = F$.
For brevity, we write $\cl(a)$ instead of $\cl(\{a\})$ for $a \in \U$.
The family of closed sets of $\cl$ is called $\cs$.
We have $\cs = \{F \subseteq \U \st F = \cl(F)\} = \{\cl(A) \st A \subseteq \U\}$. 
We say that $\cl$ is \emph{standard} if 
for every $a \in \U$, $\cl(a) \setminus \{a\}$ is closed. 
Consequently, $\emptyset$ is closed and $\cl(a) \neq \cl(b)$ unless $a = b$.
If $\cl(a) = \{a\}$ for each $a \in \U$, $(\U, \cs)$ is \emph{atomistic}.
Let $(\U, \cl)$ be a closure space and let $K \subseteq \U$.
We say that $K$ is a \emph{minimal spanning set} of its closure if $\cl(K') \subset \cl(K)$ for every $K' \subset K$.
If moreover $\cl(K) = \U$, $K$ is a \emph{minimal key} of the closure space.
For $c \in \U$, a subset $A \subseteq \U$ is a \emph{minimal generator} of $c$ if $c \in \cl(A)$ but $c \notin \cl(A')$ for every $A' \subset A$.
Any minimal generator of $c$ distinct from $c$ is called non-trivial.
We define $\gen(c)$ to be the non-trivial minimal generators of $c$.
Remark that minimal generators must be minimal spanning sets of their closure.

We move to closure systems.
A set system $(\U, \cs)$ is a \emph{closure system} if $\U \in \cs$ and $F_1 \cap F_2 \in \cs$ whenever $F_1, F_2 \in \cs$.
There is a well-known correspondence between closure operators and closure systems.

More precisely, if $(\U, \cl)$ is a closure space, the set system $(\U, \cs)$ is a closure system.
Dually, a closure system $(\U, \cs)$ induces a closure operator $\cl \colon \pow{\U} \to \pow{\U}$ given by $\cl(A) = \bigcap \{F \in \cs \st A \subseteq F\}$.
This correspondence is one-to-one.
The poset $(\cs, \subseteq)$ of closed sets ordered by inclusion is a \emph{(closure) lattice}.

\begin{rem}
In this paper, we always assume without loss of generality that closure systems are standard.
If $(\U, \cs)$ is a closure system, $\cl$ is its closure operator.
\end{rem}

Let $(\U, \cs)$ be a closure system with closure operator $\cl$.
We introduce a closure system devised from $(\U, \cs)$ that will be useful all along the paper.
Let $\cl^b$ be the closure operator defined by $\cl^b(A) = \bigcup_{a \in A} \cl(a)$.
It induces a closure system $(\U, \cs^b)$.
We have $\cs \subseteq \cs^b$.
Now let $M \in \cs$ such that $M \neq \U$.
The closed set $M$ is a \emph{meet-irreducible element} of $\cs$ if for every $F_1, F_2 \in \cs$, $F_1 \cap F_2 = M$ implies $F_1 = M$ or $F_2 = M$.
We denote by $\Mi(\cs)$ the meet-irreducible elements of $\cs$.
Remark that, in fact, $\Mi(\cs)$ is the set of meet-irreducible elements in the lattice-theoretic sense of the closure lattice $(\cs, \subseteq)$.
For every closed set $F$, we have $F = \bigcap \{M \in \Mi(\cs) \st F \subseteq M \}$.
Given $a \in \U$ and $M \in \Mi(\cs)$, we write $a \upp M$ if $a \notin M$ but $a \in F$ for every $F \in \cs$ such that $M \subset F$.
Equivalently $a \upp M$ if $M \in \max_{\subseteq}(\{M \in \Mi(\cs) \st a \notin M\})$.
Dually, we write $M \dpp a$ if $a \notin M$ but $\cl(a) \setminus \{a\} \subseteq M$.

\begin{exam}\label{ex:def-lattice}
Let $\U = \{1, 2, 3, 4, 5, 6\}$ and consider the closure system $(\U, \cs)$ whose closure lattice $(\cs, \subseteq)$ is depicted in Figure~\ref{fig:def-lattice}.
For convenience we write a set as the concatenation of its elements, e.g., $1234$ stands for $\{1, 2, 3, 4\}$.
For instance, $\cl(25) = 2456$.
We have $\Mi(\cs) = \{356, 13, 15, 1356, 1456, 124, \allowbreak 2456, 12456\}$.

\begin{figure}[ht!]
    \centering
    \includegraphics[scale=0.9]{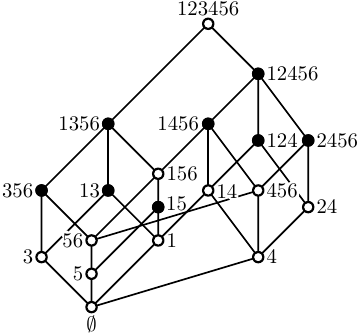}%
    \caption{The closure lattice of Example~\ref{ex:def-lattice}.
    Meet-irreducible elements are marked by black dots.}
    \label{fig:def-lattice}
\end{figure}

\end{exam}

We conclude the section with some classes of closure systems.
Let $(\U, \cs)$ be a closure system. 
If for every closed set $F \in \cs$, $F \neq \U$, there exists $a \notin F$ such that $F \cup \{a\} \in \cs$, then $(\U, \cs)$ is a \emph{convex geometry} \cite{edelman1985theory}.
Convex geometries are precisely the closure systems associated with \emph{anti-exchange} closure operators.
A closure operator $\cl$ is anti-exchange if for every closed set $F$ and $a, b \notin F$, $b \in \cl(F \cup \{a\})$ entails $a \notin \cl(F \cup \{b\})$.
Another useful characterizations of convex geometries relies on minimal spanning sets and extreme elements.
Let $A \subseteq \U$.
An element $a$ of $A$ is an \emph{extreme element} of $A$ if $a \notin \cl(A \setminus \{a\})$.
In particular if $A \in \cs$, $\cl(A \setminus \{a\}) = A \setminus \{a\}$ must hold.
Now a closure system is a convex geometry iff each closed set has a unique minimal spanning set which comprises precisely extreme elements.
For a given $A \subset \U$, we denote by $K_A$ the unique minimal spanning set of $\cl(A)$.
If $(\U, \cs)$ is a closure system and $F_1 \cup F_2 \in \cs$ for every $F_1, F_2 \in \cs$, then $(\U, \cs)$ (or its lattice $(\cs, \subseteq)$) is \emph{distributive}.
Every (standard) distributive closure system is a convex geometry.
In this case, we have $\Mi(\cs) = \{ \{c \in \U \st a \notin \cl(c)\} \st a \in \U\} $.
Hence, $\card{\Mi(\cs)} = \card{\U}$ and $\Mi(\cs)$ can be computed in polynomial time in the size of an IB $(\U, \is)$.
Observe that the closure system $(\U, \cs^b)$ defined previously is always distributive, regardless of the underlying closure system $(\U, \cs)$.
In the particular case when $(\U, \cs)$ is already distributive, we have $\cl = \cl^b$ and $\cs = \cs^b$.

\userparagraph{Implicational bases (IBs), $D$-generators, $D$-base.}  
We give some standard notations regarding implications and implicational bases (see also \cite{wild2017joy}).
An \emph{implication} over $\U$ is an expression $A \imp c$ where $A \cup \{c\} \subseteq \U$, also called \emph{unit implication} in \cite{bertet2010multiple}.
An implication is \emph{binary} if $A$ is a singleton.
For simplicity, we write them $a \imp c$ instead of $\{a\} \imp c$.
An \emph{implicational base} (IB) over $\U$ is a pair $(\U, \is)$ where $\is$ is a collection of implications over $\U$.
The number of implications of $\is$ is $\card{\is}$.
The \emph{total} size of $\is$ is $\sum \{\card{A} + \card{B} \st A \imp  B \in \is\}$.

An IB induces a closure operator $\cl$ where a subset $F$ of $\U$ is closed if for every implication $A \imp c \in \is$, $A \subseteq F$ implies $c \in F$.
The closure of $C \subseteq \U$ can be computed using the \emph{forward chaining procedure}.
This routine starts from $C$ and builds a sequence $C = C_0 \subseteq C_1 \subseteq \dots \subseteq C_m = \cl(C)$ of subsets of $\U$ such that, for every $1 \leq i \leq m$, $C_i = C_{i - 1} \cup \{c \in \U \st A \imp c \in \is, A \subseteq C_{i - 1}, c \notin C_{i-1}\}$.
A closure system admits several equivalent IBs.
An implication $A \imp c$ holds in a closure system if and only if $c \in \cl(A)$.

\begin{exam} \label{ex:def-IB}
An IB for the closure system of Example~\ref{ex:def-lattice} is $(\U, \is)$ with:
\[
\is = \{2 \imp 4, 6 \imp 5 \} \cup \{ 245 \imp 6, 34 \imp 1, 34 \imp 2, 34 \imp 5, 35 \imp 6, 45 \imp 6  \}
\]
\end{exam}

If an IB $(\U, \is)$ consists only of binary implications, the corresponding closure system is distributive.
Dually, if a closure system $(\U, \cs)$ is distributive, it can be represented by an IB with binary implications only.
Hence, its meet-irreducible elements and one of its IBs can be computed in polynomial time from one another.
For the distributive closure system $(\U, \cs^b)$ associated with a given $(\U, \cs)$, we can thus define the IB $(\U, \is^b)$ with $\is^b = \{a \imp c \st c \in \cl(a) \setminus \{a\}, a, c \in \U\}$ to represent $(\U, \cs^b)$.
It can be obtained in polynomial time from any IB $(\U, \is)$ of $(\U, \cs)$ or from $\Mi(\cs)$.

In this paper, we focus on the \emph{$D$-base} of a closure system $(\U, \cs)$ \cite{adaricheva2013ordered}.
It is an IB which, among equivalent IBs of a closure system, is uniquely defined.
In order to introduce the $D$-base, it is convenient to go through another unique IB: the \emph{canonical direct base} (see \cite{bertet2010multiple} for a survey).
It is the IB $(\U, \is_{\cd})$ devised from minimal generators of elements of $\U$, that is $\is_{\cd} = \{A \imp c \st A \in \gen(c), c \in \U\}$.

\begin{exam} \label{ex:def-cdb}
The IB of Example~\ref{ex:def-IB} is not the canonical direct base $(\U, \iscd)$ of the closure system of Example~\ref{ex:def-lattice}.
Indeed, $23$ is a minimal generator of $1$ but $23 \imp 1 \notin \is$.
Also, $245 \imp 6$ is in $\is$, but $245$ is not a minimal generator of $6$.
In fact, we have:
\[ 
\iscd = \left\{
\begin{array}{l l l l}
23 \imp 1, & 34 \imp 1, & 34 \imp 2, & 2  \imp 4, \\
23 \imp 5, & 34 \imp 5, &  6 \imp 5, & 23 \imp 6, \\
25 \imp 6, & 34 \imp 6, & 35 \imp 6, & 45 \imp 6
\end{array}
\right\}
\]
\end{exam}

To obtain the $D$-base from the canonical direct base, we need to use minimal generators and $\cl^b$.
A minimal generator $A$ of $c$ is a \emph{$D$-generator} of $c$ if $c \notin \cl^b(A)$ and for every minimal generator $A'$ of $c$, $A' \subseteq \cl^b(A)$ implies $A' = A$.
Observe that $c \notin \cl^b(A)$ implies $\card{A} \geq 2$ as $A$ is supposed to be a minimal generator of $c$.
It will be convenient to call $\genD(c)$ the family of $D$-generators of $c$ (w.r.t. $(\U, \cs)$).
The $D$-base of the closure system $(\U, \cs)$ is $(\U, \is_D)$ where:

\begin{align*}
\is_D & = \is^b \cup \{A \imp c \st A \in \genD(c), c \in \U\} \\ 
 & = \{a \imp c \st c \in \cl(a) \setminus \{a\}, a, c \in \U\} \cup \{A \imp c \st A \in \genD(c), c \in \U\}
\end{align*}

\begin{rem}
Observe that a given $c \in \U$ does not need to admit $D$-generators, that is $\genD(c) = \emptyset$ is possible.
This happens if and only if $c$ has only singleton non-trivial minimal generators, i.e., if $c$ is \emph{join-prime} in $(\cs, \subseteq)$.
This is in turn equivalent to $\{a \in \U \st c \notin \cl(a)\}$ being the unique meet-irreducible element $M$ of $(\U, \cs)$ satisfying $c \upp M$.
Therefore, whether $\genD(c) = \emptyset$ can be tested in polynomial time from both $\Mi(\cs)$ and an IB $(\U, \is)$.
\end{rem}

\begin{exam}\label{ex:def-D}
Continuing Example~\ref{ex:def-cdb}, we can extract the $D$-base $(\U, \is_D)$ from $(\U, \iscd)$.
Note that $25$ is a minimal generator of $6$ which does not belong to $\genD(6)$ since $\cl^b(45) = 45 \subset 245 = \cl^b(25)$.
Similarly, every non-binary implication in $\iscd$ with element 2 in the premise is not in the $D$-base, because there is another implication where 2 is replaced by 4.
Finally:
\[
\is_D = \{2 \imp 4, 6 \imp 5\} \cup \{34 \imp 1, 34 \imp 2, 34 \imp 5, \\ 34 \imp 6, 35 \imp 6, 45 \imp 6 \}.  
\]
\end{exam}
In this paper we seek to solve the following two enumeration problems.

\begin{genproblem}
\problemtitle{$D$-base computation with meet-irreducible elements}
\probleminput{The meet-irreducible elements $\Mi(\cs)$ of a standard closure system $(\U, \cs)$}
\problemquestion{The $D$-base $(\U, \is_D)$ associated to $(\U, \cs)$}
\end{genproblem}

\begin{genproblem}
\problemtitle{$D$-base computation with implications}
\probleminput{An IB $(\U, \is)$ of a standard closure system $(\U, \cs)$}
\problemquestion{The $D$-base $(\U, \is_D)$ associated to $(\U, \cs)$}
\end{genproblem}

Observe that, by definition, the $D$-base is always a subset of the canonical direct base.
However, there may be an exponential gap between these two IBs in the general case, even though they coincide in atomistic closure systems.
\begin{exam}\label{ex:gap}
Let $\U = \{a_1, \dots, a_n, b_1, \dots, b_n, c\}$ and let $\is = \{a_i \imp b_i \st 1 \leq i \leq n\} \cup \{b_1 \dots b_n \imp c\}$.
The IB $(\U, \is)$ is the $D$-base of its closure system, while we have $\is_{\cd} = \{d_1 \dots d_n \imp c \st d_i \in \{a_i, b_i\}, 1 \leq i \leq n\} \cup \{a_i \imp b_i \st 1 \leq i \leq n\}$.
Thus, $\card{\is} = n + 1$ and $\card{\is_{\cd}} = 2^n + 1$.
\end{exam}
As a consequence, one cannot straightforwardly apply the algorithms for finding the canonical direct base to find the $D$-base.

\userparagraph{Relations $\cd$ and $D$.}
Minimal and $D$-generators induce two binary relations over $\U$.
More precisely, we put $c \cd a$ if $a$ belongs to a non-trivial minimal generator of $c$.
This is the $\cd$-relation.
Similarly, $cDa$ means that $a$ belongs to a $D$-generator of $c$.\footnote{The notations $\cd$ and $D$ are standard terminology in lattice theory \cite{adaricheva2013ordered,bertet2010multiple,freese1995free,monjardet1997dependence}}
The resulting relation is the $D$-relation.
By definition, $D \subseteq \cd$, and equality holds when $\is^b = \emptyset$, i.e., when the underlying closure system is atomistic.
However, in general, $D \subset \cd$ as illustrated in Figure~\ref{fig:def-relations}.
We note that both $\cd$ and $D$ admit characterizations by means of $\Mi(\cs)$ \cite{freese1995free,monjardet1997dependence}.
For $\delta$, $c \delta a$ holds if and only if there exists $M \in \Mi(\cs)$ such that $c \upp M$ and $a \notin M$ with $c \neq a$.
For $D$, $c D a$ holds if and only if there is $M \in \Mi(\cs)$ such that $c \upp M \dpp a$, again with $c \neq a$.
It follows that both $\delta$ and $D$ can be recovered in polynomial time from $\Mi(\cs)$.

There is a close connection between the relations $\cd$ and $D$ and \emph{implication-graphs} \cite{boros1990polynomial, hammer1995quasi}.
The \emph{implication-graph} of an IB $(\U, \is)$ is the directed graph $\G(\is) = (\U, E)$ where an arc $(a, c)$ belongs to $\G(\is)$ if there exists an implication $A \imp c$ in $\is$ such that $a \in A$.
Now, the $\cd$-relation precisely consists of the reversed arcs of $G(\is_{\cd})$.
Similarly, the $D$-relation is obtained from $\G(\is_D)$ by removing the arcs coming from binary implications and reversing the remaining ones.

\begin{exam}\label{ex:def-relations}
Figure~\ref{fig:def-relations} illustrates the $\cd$- and $D$-relations of the closure system of Example~\ref{ex:def-lattice}.
Note that $D$ is a proper subset of $\cd$.
\begin{figure}[ht!]
    \centering
    \includegraphics[page=2, scale=0.9]{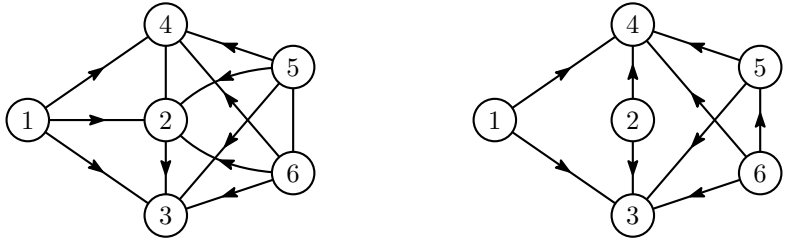}
    \caption{The $\cd$-relation (left) and the $D$-relation (right) of Example~\ref{ex:def-lattice} represented as directed graphs.
    An edge without arrow indicates arcs in both directions (such as between 2 and 4 on the left).}
    \label{fig:def-relations}
\end{figure}
\end{exam}
When the $\cd$-relation of a closure system $(\U,\is)$ does not have $\cd$-cycles, we say that $(\U, \is)$ is \emph{acyclic}.
For the closure system associated to $(\U, \is)$ to be acyclic, it is sufficient that $G(\is)$ is acyclic \cite{hammer1995quasi}.
Note that a standard distributive closure system is acyclic.
Acyclic closure systems are particular convex geometries that have been extensively studied as they correspond to acyclic Horn functions, poset type convex geometries and $G$-geometries \cite{boros2009subclass, hammer1995quasi, adaricheva2017optimum, wild1994theory}.
When the $D$-relation is acyclic the closure system is \emph{lower bounded}, or simply without $D$-cycles \cite{freese1995free, adaricheva2013ordered}.
This class strictly generalizes acyclic closure systems.
The closure system of Example~\ref{ex:def-lattice} is without $D$-cycles but it contains $\cd$-cycles.
Hence, it is lower bounded but not acyclic.

The third problem we study in this paper is the identification (or computation) of the $D$-relation from an IB.
Since the $D$-relation is a binary relation over elements of $\U$, this problem can be reduced to the problem of checking whether $cDa$ holds for each pair $a, c$ of $\U$:
\begin{decproblem}
\problemtitle{$D$-relation identification}
\probleminput{An IB $(\U, \is)$ of a standard closure system $(\U, \cs)$, and $a, c \in \U$.}
\problemquestion{Does $cDa$ hold?}
\end{decproblem}

\section{Finding the \mtt{$D$}{D}-relation from an implicational base} 
\label{sec:D-NPC}

In this section we establish the hardness of computing the $D$-relation of a closure system $(\U, \cs)$ given by an arbitrary IB $(\U, \is)$.
More precisely, we show that given $(\U, \is)$ and $a, c \in \U$, it is $\NP$-complete to check whether $cDa$ holds, even in acyclic closure systems or lower bounded closure systems with $D$-paths of constant size.
These are Theorems~\ref{thm:D-relation-ACG} and~\ref{thm:D-relation-LB} respectively.

As a preliminary step, we show that the problem belongs to $\NP$.
On this purpose, we rephrase in our terms a characterization of $D$-generators given in \cite[Algorithm 11.13, Equation 9, p.~230]{freese1995free}.
We could also use the expression of $D$ based on meet-irreducible elements as mentioned in the preliminaries, but the characterization of $D$-generators stated below will be useful in Section~\ref{sec:D-base-IB}.

\begin{lem}[{\cite[Equation 9, p.~230]{freese1995free}}] \label{lem:D-carac}
Let $(\U, \cs)$ be a closure system, $A \subseteq \U$ and $c \in \U \setminus A$.
Then, $A$ is a $D$-generator of $c$ if and only if $c \in \cl(A)$ and for every $a \in A$, $c \notin \cl(\cl^b(A) \setminus \{a\})$.
\end{lem}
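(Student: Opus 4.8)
The plan is to unfold the definition of a $D$-generator and prove the two implications separately. Recall that $A$ is a $D$-generator of $c$ precisely when (i) $A$ is a minimal generator of $c$, (ii) $c \notin \cl^b(A)$, and (iii) every minimal generator $A'$ of $c$ with $A' \subseteq \cl^b(A)$ equals $A$. Two elementary facts will be used repeatedly: since $c \notin A$, we have $c \neq a$ for every $a \in A$, so deleting an element of $A$ from $\cl^b(A)$ never deletes $c$; and whenever $c \in \cl(S)$ for some $S \subseteq \U$, the set $S$ contains a minimal generator of $c$ (take an inclusion-minimal subset of $S$ whose closure still contains $c$).

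\emph{From $D$-generator to the closure condition.} Assume $A$ is a $D$-generator of $c$. Then (i) gives $c \in \cl(A)$ directly. Suppose, for contradiction, that $c \in \cl(\cl^b(A) \setminus \{a\})$ for some $a \in A$. Then $\cl^b(A) \setminus \{a\}$ contains a minimal generator $A'$ of $c$. Since $c \notin \cl^b(A)$ by (ii), $A'$ cannot be the trivial generator $\{c\}$, and $A' \subseteq \cl^b(A) \setminus \{a\} \subseteq \cl^b(A)$; hence (iii) forces $A' = A$. But $a \in A$ while $a \notin \cl^b(A) \setminus \{a\} \supseteq A'$, a contradiction. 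Therefore $c \notin \cl(\cl^b(A) \setminus \{a\})$ for every $a \in A$.

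\emph{From the closure condition to $D$-generator.} Assume $c \in \cl(A)$ and $c \notin \cl(\cl^b(A) \setminus \{a\})$ for all $a \in A$. First, $c \notin \cl^b(A)$: otherwise, picking any $a \in A$, we would have $c \in \cl^b(A) \setminus \{a\}$ (using $c \neq a$), hence $c \in \cl(\cl^b(A) \setminus \{a\})$, contrary to the hypothesis. Next, $A$ is a minimal generator of $c$: $c \in \cl(A)$ holds by assumption, and if $A'' \subset A$ satisfied $c \in \cl(A'')$, then choosing $a \in A \setminus A''$ gives $A'' \subseteq \cl^b(A) \setminus \{a\}$ and thus $c \in \cl(\cl^b(A) \setminus \{a\})$, again a contradiction. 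This establishes (i) and (ii). For (iii), let $A'$ be a minimal generator of $c$ with $A' \subseteq \cl^b(A)$. If some $a \in A \setminus A'$ existed, then $A' \subseteq \cl^b(A) \setminus \{a\}$ would give $c \in \cl(A') \subseteq \cl(\cl^b(A) \setminus \{a\})$, contradicting the hypothesis; hence $A \subseteq A'$. Since $A$ is a minimal generator of $c$ (so $c \in \cl(A)$) and $A'$ is a minimal generator of $c$ containing $A$, minimality of $A'$ forces $A = A'$. Thus $A$ satisfies (i)--(iii) and is a $D$-generator of $c$.

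This argument is essentially a careful bookkeeping of definitions; there is no substantial obstacle. The one point requiring attention is the forward direction, where one must rule out that the minimal generator $A'$ extracted from $\cl^b(A) \setminus \{a\}$ is the trivial generator $\{c\}$ --- this is exactly where the hypothesis $c \notin \cl^b(A)$ enters --- and, throughout, the repeated use of $c \notin A$ to guarantee that removing an element $a \in A$ from $\cl^b(A)$ leaves $c$ untouched.
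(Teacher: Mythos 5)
Your proof is correct: both directions are a careful unfolding of the definition of a $D$-generator, and you handle the two genuinely delicate points properly --- extracting a minimal generator of $c$ from $\cl^b(A)\setminus\{a\}$ and using $c\notin\cl^b(A)$ (together with $c\notin A$) to rule out the trivial generator $\{c\}$ before invoking the $\cl^b$-minimality clause. The paper does not prove this lemma but imports it from Freese et al.~\cite{freese1995free}, so there is no in-paper argument to compare against; your self-contained derivation is the standard one and is a valid substitute for the citation.
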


\begin{cor} \label{cor:D-poly-cl}
Whether $A$ is indeed a $D$-generator of $c$ can be tested with at most $2\cdot \card{\U}$ calls to $\cl$.
\end{cor}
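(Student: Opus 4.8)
The plan is to turn the characterization of Lemma~\ref{lem:D-carac} into a verification procedure and simply count the oracle calls it makes. First I would dispose of the degenerate case: if $c \in A$ then $c \in \cl(c) \subseteq \cl^b(A)$, so $A$ cannot be a $D$-generator of $c$, and this is detected with no call to $\cl$ at all. Hence we may assume $c \in \U \setminus A$, which is exactly the setting of Lemma~\ref{lem:D-carac} and also gives $\card{A} \leq \card{\U} - 1$.

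Next I would assemble the data the characterization needs. By definition $\cl^b(A) = \bigcup_{a \in A} \cl(a)$, so computing $\cl^b(A)$ costs $\card{A}$ calls to $\cl$ (one per element of $A$) followed by a union, which involves no oracle. With $\cl^b(A)$ available, for each $a \in A$ I would compute $\cl(\cl^b(A) \setminus \{a\})$ and test whether $c$ lies in it; this is another $\card{A}$ calls, the membership tests themselves being mere set operations. Finally one more call computes $\cl(A)$ --- equivalently $\cl(\cl^b(A))$, since $A \subseteq \cl^b(A) \subseteq \cl(A)$ forces $\cl(\cl^b(A)) = \cl(A)$ --- and I check $c \in \cl(A)$.

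By Lemma~\ref{lem:D-carac}, $A$ is a $D$-generator of $c$ precisely when $c \in \cl(A)$ and $c \notin \cl(\cl^b(A) \setminus \{a\})$ for every $a \in A$, so the tests above correctly decide the question. The number of calls to $\cl$ is $2\card{A} + 1 \leq 2(\card{\U} - 1) + 1 = 2\card{\U} - 1 \leq 2\card{\U}$, which is the claimed bound.

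There is no genuine obstacle here; the statement is an immediate corollary of Lemma~\ref{lem:D-carac}. The only points deserving a little care are that $\cl^b$ is not assumed to be available as an oracle, so each evaluation of it must be charged as $\card{A}$ separate calls to $\cl$; that the conjunct $c \in \cl(A)$ of the characterization really does require its own call and cannot be recovered for free from the closed sets computed for the other conjunct; and the small amount of bookkeeping that uses $c \notin A$ to secure $\card{A} \leq \card{\U} - 1$ and hence the clean bound $2\card{\U}$.
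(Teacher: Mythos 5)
Your proposal is correct and follows exactly the paper's argument: apply Lemma~\ref{lem:D-carac}, compute $\cl^b(A)$ with $\card{A}$ calls, then make one further call per element of $A$, and bound the total by $2\card{\U}$. You are slightly more careful than the paper in handling the case $c \in A$ and in explicitly charging one extra call for the conjunct $c \in \cl(A)$, but the bound and the method are the same.
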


\begin{proof}
We apply Lemma~\ref{lem:D-carac}.
First, we compute $\cl^b(A) = \bigcup_{a \in A} \cl(a)$ with $\card{A}$ calls to $\cl$.
Then, for each $a \in A$, we check whether $c \in \cl(\cl^b(A) \setminus \{a\})$.
In total, we needed at most $2\cdot \card{\U}$ calls to $\cl$, which concludes the proof.
\end{proof}

The closure operator $\cl$ can be simulated in polynomial time given implications using forward chaining.
Note that it can also be computed in polynomial time from meet-irreducible elements using set intersections.
We deduce:

\begin{cor} \label{cor:DB-M}
The problem of deciding, given an IB $(\U, \is)$ and $a, c \in \U$, whether $cDa$ holds belongs to $\NP$.
\end{cor}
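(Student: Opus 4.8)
The plan is to show membership in $\NP$ by exhibiting a short certificate together with a polynomial-time verification procedure. The natural certificate for ``$cDa$ holds'' is a set $A \subseteq \U$ that is claimed to be a $D$-generator of $c$ with $a \in A$. Such a set has size at most $\card{\U}$, so it is polynomially bounded in the input. The verifier must then check three things: that $a \in A$, that $c \notin A$, and that $A$ is genuinely a $D$-generator of $c$. The first two are trivial; the third is exactly what Corollary~\ref{cor:D-poly-cl} handles.

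First I would invoke Corollary~\ref{cor:D-poly-cl}: testing whether $A$ is a $D$-generator of $c$ requires at most $2 \cdot \card{\U}$ calls to the closure operator $\cl$. Each such call—computing $\cl(B)$ for some $B \subseteq \U$—can be carried out in polynomial time in the total size of $\is$ via the forward chaining procedure described in the preliminaries. Concretely, forward chaining builds the chain $B = C_0 \subseteq C_1 \subseteq \dots \subseteq C_m = \cl(B)$ where at each step one scans all implications of $\is$; since $m \leq \card{\U}$ and each step costs $O(\lVert \is \rVert)$ where $\lVert \is \rVert$ is the total size, the whole computation is polynomial. Chaining together these two facts, the verification runs in time polynomial in the combined size of $(\U, \is)$, $a$, $c$, and the certificate $A$.

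Therefore the decision problem is in $\NP$: guess $A$, verify in polynomial time. There is essentially no obstacle here—the work has been front-loaded into Lemma~\ref{lem:D-carac} and Corollary~\ref{cor:D-poly-cl}, and the only remaining point is the (standard) observation that forward chaining simulates $\cl$ efficiently. The one thing to be careful about is that the certificate is for the existential statement ``there exists a $D$-generator $A$ of $c$ containing $a$,'' which is precisely the definition of $cDa$; so a valid certificate exists if and only if $cDa$ holds, giving both soundness and completeness of the nondeterministic procedure. I would write the proof in two or three sentences: state the certificate, cite Corollary~\ref{cor:D-poly-cl} for the polynomial-time check of the $D$-generator property, note that $\cl$ is polynomial-time computable by forward chaining, and conclude.
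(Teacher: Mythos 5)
Your proof is correct and follows the paper's argument exactly: the certificate is a $D$-generator $A$ of $c$ containing $a$, verified in polynomial time via Corollary~\ref{cor:D-poly-cl} together with the observation that forward chaining simulates $\cl$ in polynomial time. The extra detail you give on the cost of forward chaining is fine but not needed beyond the one-line remark the paper already makes.
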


\begin{proof}
A certificate is a $D$-generator $A$ of $c$ such that $a \in A$.
Whether $A$ is indeed a $D$-generator of $c$ can be checked in polynomial time due to Corollary~\ref{cor:D-poly-cl}.
\end{proof}

We proceed to the hardness of identifying the $D$-relation in acyclic closure systems.
This is Theorem~\ref{thm:D-relation-ACG} which we restate below.
In fact, the hardness result also holds under the assumption that the premises of the input IB have constant size.
We give here the reduction, but the complete proof is delayed to \ref{app:D-relation-ACG} for clarity.

\DACG*

\begin{proofsketch}
We showed in Corollary~\ref{cor:DB-M} that the problem belongs to $\NP$.
To show $\NP$-hardness, we use a reduction from the problem $1$-in-$3$-SAT:
\begin{decproblem}
\problemtitle{1-in-3 SAT}
\probleminput{A positive $3$-CNF $\varphi = \{C_1, \dots, C_m\}$ over variables $V = \{v_1, \dots, v_n\}$}
\problemquestion{Is there a truth assignment of the variables in $V$ which is a model of $\varphi$ and such that each clause of $\varphi$ has exactly one true literal?}
\end{decproblem}
Let $\varphi = \{C_1, \dots, C_m\}$ be a non-trivial positive $3$-CNF over $V = \{v_1, \dots, v_n\}$.
Recall that a positive $3$-CNF is a conjunction of disjunctive clauses consisting of exactly three positive literals.
For simplicity, we view clauses as sets of variables.
Moreover, we identify an assignment of $V$ to the sets of its variables assigned 1.
Consequently, a 1-in-3 assignment of $V$ is a set of variables $T \subset V$ such that $\card{C_i \cap T} = 1$ for each clause $C_i$ of $\varphi$.
Two variables $v_i, v_j$ are in conflict if there exists a clause $C_k$ in $\varphi$ that contains both $v_i$ and $v_j$.
For each $1 \leq i \leq m$, we define $\varphi_i = \{C_i, \dots, C_m\}$.
A subset $T$ of $V$ is $\varphi$-conflict-free (resp.~ $\varphi_i$-conflict-free) if it does not contain variables in conflict among the clauses of $\varphi$ (resp.~ $\varphi_i$).

We build an instance of the $D$-relation identification problem.
Let $C = \{c_1, \dots, c_m, \allowbreak c_{m + 1}\}$ be a set of new gadget elements and let $\U = C \cup V$.
An element $c_i$, $1 \leq i \leq m$, represents the clause $C_i$.
The element $c_{m + 1}$ is another gadget element.
Now consider the set $\is = \is_{\mathit{mod}} \cup \is_{\mathit{conf}}$ of implications, where:
\begin{align*}
    \is_{\mathit{mod}} &  = \{c_i v_j \imp c_{i + 1} \st 1 \leq i \leq m \text{ and } v_j \in C_i\} \\
    \is_{\mathit{conf}} & = \{v_i v_j \imp c_{m + 1} \st v_i, v_j \in V \text{ and } v_i, v_j \text{ are in conflict in } \varphi\}
\end{align*}
Informally, $\is_{\mathit{mod}}$ models the fact that whenever the $i$-th clause is satisfied, we can proceed to the $(i+1)$-th clause or $c_{m + 1}$, and $\is_{\mathit{conf}}$ represents the conflicts induced by the variables appearing together in a clause.

We consider $(\U, \is)$ and we seek to identify whether $c_{m+1} D c_1$ holds.
Note that $(\U, \is)$ can be constructed in polynomial time in the size of $\varphi$ and $V$.
The premises of $\is$ have size $2$ and $G(\is)$ has no cycle.
Hence, $(\U, \cs)$ is an acyclic closure system represented by an IB with premises of size at most $2$ as required.
Moreover $\is$ has no binary implications so that minimal generators and $D$-generators coincide.
The proof in appendices relies on the characterization of minimal generators of $c_{m + 1}$. \qed
\end{proofsketch}

\begin{exam} \label{ex:ex-reduction-ACG}
We illustrate the reduction of Theorem~\ref{thm:D-relation-ACG}.
Let $V = \{1, \dots, 5\}$ and let $\varphi = \{C_1, C_2, C_3\}$ with $C_1 = \{2, 3, 4\}$, $C_2 = \{1, 2, 3\}$, $C_3 = \{1, 3, 5\}$.
According to the reduction of Theorem~\ref{thm:D-relation-ACG}, we have $\U = V \cup \{c_1, c_2, c_3, c_4\}$ and $\is = \is_{\mathit{mod}} \cup \is_{\mathit{conf}}$ where:
\[
\is_{\mathit{mod}} = \left\{
\begin{array}{l l l}
2 c_1 \imp c_2, & 3 c_1 \imp c_2, & 4 c_1 \imp c_2, \\
1 c_2 \imp c_3, & 2 c_2 \imp c_3, & 3 c_2 \imp c_3, \\
1 c_3 \imp c_4, & 3 c_3 \imp c_4, & 5 c_3 \imp c_4
\end{array}
\right\}
\text{ and } \;
\is_{\mathit{conf}} = \left\{
\begin{array}{l l}
12 \imp c_4, & 13 \imp c_4, \\ 
15 \imp c_4, & 23 \imp c_4, \\ 
24 \imp c_4, & 34 \imp c_4, \\ 
35 \imp c_4
\end{array}
\right\}
\]
We have $\is \subseteq \is_D$, so that the $D$-relation of the corresponding closure system is a subset (once reversed) of $G(\is)$.
This subset is sufficient to recover the transitive closure of $D$.
We represent it in Figure~\ref{fig:ex-reduction-ACG}.
\begin{figure}[ht!]
    \centering
    \includegraphics[scale=0.9, page=1]{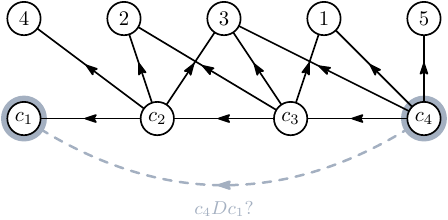}%
    \caption{The subset of the $D$-relation of $\is$ in Example~\ref{ex:ex-reduction-ACG}.
    This subset is sufficient to recover the transitive closure of $D$.
    Whether $c_4 D c_1$ holds (highlighted in grey) depends on whether the associated formula $\varphi$ has a $1$-in-$3$ assignment.}
    \label{fig:ex-reduction-ACG}
\end{figure}
Deciding whether $c_4Dc_1$ holds amounts to decide whether $\varphi$ has a 1-in-3 valid assignment.
Here, $14$ is such an assignment so that $14c_1$ is a $D$-generator of $c_4$, and $c_4 D c_1$ holds true.
\end{exam}

Remark that our reduction shows at the same time that testing $c \cd a$ is hard in acyclic closure systems, which was also unknown to our knowledge.
Understanding what makes hard to decide whether $c D a$ holds is an intriguing question.
The reduction of Theorem~\ref{thm:D-relation-ACG} and Example~\ref{ex:ex-reduction-ACG} suggest the difficulty comes from the fact that two elements can be separated by an arbitrary long $D$-path.
However, we prove in the next theorem that the problem remains $\NP$-complete even when the length of a $D$-path is bounded.
Interestingly though, this complexity comes at the cost of cycles in the $\cd$-relation and larger premises.
Again, we give the reduction, but leave the complete proof in \ref{app:D-relation-LB} for readability.

\DLB*

\begin{proofsketch}
To show $\NP$-hardness, we use another reduction from $1$-in-$3$ SAT.
Let $\varphi = \{C_1, \dots, C_m\}$ be a positive $3$-CNF over variables $V = \{v_1, \dots, v_n\}$.
Let $C = \{c_1, \dots, c_m\}$.
We use the same terminology as in Theorem~\ref{thm:D-relation-ACG}.
Let $\U = V \cup C \cup \{a, b\}$ be a new groundset, and let $\is = \is_{\mathit{mod}} \cup \is_{\mathit{conf}} \cup \is^b$ where:
\begin{align*}
\is_{\mathit{mod}} & = \{a v_i \imp c_j \st v_i \in C_j \} \cup \{C \imp b\} \\
\is_{\mathit{conf}} & = \{v_i v_j \imp b \st v_i, v_j \in V \text{ and } v_i, v_j \text{ are in conflict in } \varphi \} \\
\is^b & = \{c_i \imp a \st c_i \in C\}
\end{align*}
Informally, $\is_{\mathit{conf}}$ models all the conflicts in between two variables appearing in a clause and $\is_{\mathit{mod}}$ models the fact that taking one element in a clause together with $a$ for all the clauses yields a $1$-in-$3$ assignment of $\varphi$.
Finally, the binary implications $\is^b$ guarantee that $a$ will never appear in any minimal generator together with one of the $c_i$'s.
Finally, let $(\U, \cs)$ be the closure system associated to $(\U, \is)$.
We call $\cl$ the corresponding closure operator.
The reduction can be conducted in polynomial time.
The rest of the proof shows that: (1) $(\U, \is)$ and $(\U, \cs)$ meet the requirements of the statement, and (2) 
$bDa$ holds if and only if $\varphi$ has a valid 1-in-3 assignment. \qed
\end{proofsketch}

\begin{exam}\label{ex:ex-reduction-LB}
We illustrate the reduction on the same formula as in Example~\ref{ex:ex-reduction-ACG}.
Here, we will have $C = \{c_1, c_2, c_3\}$ and $\U = V \cup C \cup \{a, b\}$ and $\is = \is_{\mathit{mod}} \cup \is_{\mathit{conf}} \cup \is_a$ where $\is^b = \{c_1 \imp a, c_2 \imp a, c_3 \imp a\}$ and:
\[
\is_{\mathit{mod}} = \left\{
\begin{array}{l l l}
4 a \imp c_1, & 2 a \imp c_1, & 3 a \imp c_1, \\
2 a \imp c_2, & 3 a \imp c_2, & 1 a \imp c_2, \\
3 a \imp c_3, & 1 a \imp c_3, & 5 a \imp c_3, \\
c_1 c_2 c_3 \imp b
\end{array}
\right\}
\text{, } \;
\is_{\mathit{conf}} = \left\{
\begin{array}{l l}
12 \imp b, & 13 \imp b, \\ 
15 \imp b, & 23 \imp b, \\ 
24 \imp b, & 34 \imp b, \\ 
35 \imp b
\end{array}
\right\}.
\]
Again, $\is \subseteq \is_D$ so reversing the arcs of $G(\is)$ and dropping arcs of the binary implications in $\is^b$ gives a subset of the $D$-relation, see Figure~\ref{fig:ex-reduction-LB}.
\begin{figure}[ht!]
    \centering
    \includegraphics[scale=0.9, page=2]{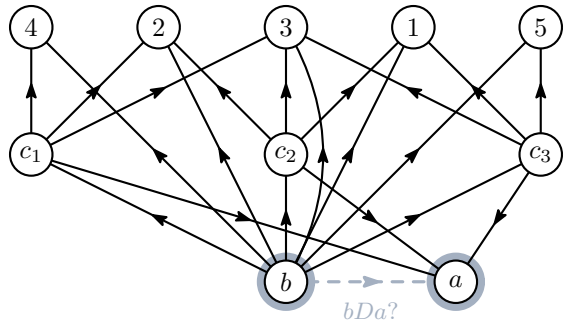}%
    \caption{The subset of the $D$-relation of $\is$ in Example~\ref{ex:ex-reduction-LB}.
    The only possible missing pair in the $D$-relation is $b D a$.
    Observe that the $D$-relation is acyclic and that all paths have size at most $2$.
    Whether $b D a$ holds (highlighted in grey) depends on whether the associated formula $\varphi$ has a $1$-in-$3$ assignment.}
    \label{fig:ex-reduction-LB}
\end{figure}

\end{exam}

Whether the $D$-relation can be computed in polynomial time when restricted to acyclic closure systems with $D$-paths of constant size remains an open question for future work.

\section{Finding the \mtt{$D$}{D}-base from meet-irreducible elements} \label{sec:D-base-Mi}

We prove that there exists an output-polynomial time algorithm for computing the $D$-base given meet-irreducible elements if and only if there is an output-polynomial time algorithm for dualization in distributive lattices, i.e., Theorem~\ref{thm:D-base-dual}.
Building on the algorithm of \cite{elbassioni2022dualization} that solves this latter problem in output-quasi-polynomial time, we deduce that the $D$-base of a closure system given by its meet-irreducible elements can be computed in output-quasi-polynomial time too.
This is Theorem~\ref{thm:D-base-Mi}.

We first give some definitions related to dualization.
Let $(\U, \cs)$ be a closure system.
An \emph{antichain} of the lattice $(\cs, \subseteq)$ is a subset $\cc{B}$ of $\cs$ consisting of pairwise incomparable closed sets.
We denote by $\idl \cc{B}$ the (order) \emph{ideal} of $\cc{B}$ in $(\cs, \subseteq)$, i.e., $\idl  \cc{B} = \{F \in \cs \st F \subseteq B \text{ for some } B \in \cc{B}\}$.
The (order) \emph{filter} of $\cc{B}$, denoted $\ftr \cc{B}$, is defined dually.
Let $\Bm$, $\Bp$ be two antichains of $(\cs, \subseteq)$.
We say that $\Bm$ and $\Bp$ are \emph{dual} in $(\cs, \subseteq)$ if $\ftr \Bm \cup \idl \Bp = \cs$ and $\ftr \Bm \cap \idl \Bp = \emptyset$.
For each antichain $\Bp$, there exists a unique dual antichain $\Bm$.
In particular, we have $\Bm = \min_{\subseteq}(\{F \in \cs \st F \nsubseteq B, \text{ for all } B \in \Bp\})$.
Symmetrically, each antichain $\Bm$ has a unique dual antichain $\Bp$.
The problem of dualizing distributive lattices now reads as follows. 

\begin{genproblem}
\problemtitle{Distributive lattice dualization}
\probleminput{An IB $(\U, \is)$ of a distributive closure system $(\U, \cs)$ and an antichain $\Bp$ of $(\cs, \subseteq)$.}
\problemquestion{The antichain $\Bm$ dual to $\Bp$ in $(\cs, \subseteq)$.}
\end{genproblem}

This problem is a generalization of hypergraph dualization (see \cite{eiter1995identifying, eiter2008computational} for further details on this task).
In hypergraph dualization, $\is$ is empty and the corresponding distributive lattice is in fact the powerset (Boolean) lattice $(\pow{\U}, \subseteq)$.
Since one can translate between the meet-irreducible elements and an IB of a distributive closure system in polynomial time, the input to dualization in \emph{distributive} lattices can be considered to be either of the two representations, or even both. 

\begin{exam} \label{ex:ex-dualization}
Let $\U = \{1, 2, 3, 4, 5\}$ and consider the (distributive) closure system $(\U, \cs)$ given by the IB $(\U, \is)$ with $\is$:
\[ 
\is = \{3 \imp 2, 2 \imp 1, 3 \imp 1, 5 \imp 4\}
\]
The corresponding closure lattice is depicted in Figure~\ref{fig:ex-dualization}.
Remark that $\is = \is^b$.
The antichains $\Bm = \{123, \allowbreak 124, 145\}$ and $\Bp = \{12, 14, 45\}$ are dual in $(\cs, \subseteq)$.
They are highlighted in Figure~\ref{fig:ex-dualization}.
\begin{figure}[ht!]
    \centering
    \includegraphics[scale=0.9]{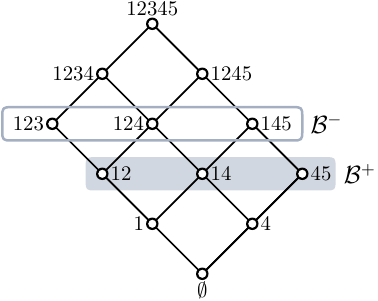}
    \caption{The distributive lattice of Example~\ref{ex:ex-dualization} with two dual antichains $\Bp$, $\Bm$.}
    \label{fig:ex-dualization}
\end{figure}

\end{exam}

In view of Theorem~\ref{thm:D-base-dual}, we first reduce dualization of distributive lattices to the problem of computing the $D$-base of a closure system given by its meet-irreducible elements.
Let $(\U, \is)$ be a binary IB of a distributive closure system $(\U, \cs)$.
Recall from Section~\ref{sec:preliminaries} that $\cl = \cl^b$ as $(\U, \cs)$ is distributive.
Moreover, recall that distributive closure systems are acyclic convex geometries.
Hence, each closed set $F \in \cs$ has a unique minimal spanning set $K_F$, consisting of the extreme elements of $F$.
Let $\Bp$ be an (non-empty) antichain of $(\cs, \subseteq)$, and let $\Bm$ be its dual antichain.

Let $\U' = \U \cup \{d\}$ where $d$ is a new gadget element.
We first construct a closure system $(\U', \cs')$ by means of its meet-irreducible elements which we define as follows:
\[ \Mi(\cs') = \Bp \cup \{M \cup \{d\} \st M \in \Mi(\cs)\} \]
Note that $\Mi(\cs')$ can be constructed in polynomial time from $(\U, \is)$ and $\Bp$, and that it is a well-defined collection of meet-irreducible elements.
Indeed, each $B \in \Bp$ is incomparable to every other set of $\Bp$ and other sets in $\Mi(\cs')$ always contain $d$.
So $B$ cannot be nontrivially obtained as the intersection of different sets in $\Mi(\cs')$.
The fact that $M \cup \{d\}$ is meet-irreducible comes from the observations that $d \in M \cup \{d\}$ and $M \in \Mi(\cs)$.
We call $\cl'$ the closure operator associated to $(\U', \cs')$.
We readily have that $\cs' = \{F \in \cs \st F \in \idl \Bp \text{ in } (\cs, \subseteq) \} \cup \{F \cup \{d\} \st F \in \cs\}$ and, for each $A \subseteq \U'$,
\[ 
\cl'(A) = \begin{cases}
\cl(A) & \text{if } d \notin A \text{ and } \cl(A) \in \idl \Bp \text{ in } (\cs, \subseteq) \\
\cl(A \setminus \{d\}) \cup \{d\} & \text{otherwise.}
\end{cases}
\]
Notice that $(\U', \cs')$ is standard as long as $(\U, \cs)$ is.
Moreover, for each $a \in \U'$, $\cl^b(a) \subseteq \cl'^b(a)$ holds.
Thus, $\cl(A) = \cl^b(A) \subseteq \cl'^b(A) \subseteq \cl'(A)$ for all $A \subseteq \U$.
If $a \in \U$ and $d \notin \cl'^b(a)$, then $\cl^b(a) = \cl'^b(a)$.
As a consequence, for every $A \subseteq \U'$ such that $d \notin \cl'(A)$, $\cl(A) = \cl^b(A) = \cl'^b(A)$.
This can be observed in Figure~\ref{fig:red-dual} of Example~\ref{ex:red-dual}.
We proceed to show that the antichain $\Bm$ dual to $\Bp$ in $(\cs, \subseteq)$ can be recovered in polynomial time from the $D$-base $(\U', \is'_D)$ of $(\U', \cs')$.
We begin with a useful claim regarding the nature of $(\U', \cs')$.

\begin{prop} \label{prop:Fp-CG}
The closure system $(\U', \cs')$ is a convex geometry.
Therefore, each closed set $F' \in \cs'$ has a unique minimal spanning set $K'_F$.
\end{prop}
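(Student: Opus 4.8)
The plan is to verify the one-point extension characterization of convex geometries recalled in Section~\ref{sec:preliminaries}: a closure system is a convex geometry if and only if every closed set $F' \neq \U'$ admits some $x \notin F'$ with $F' \cup \{x\}$ closed. Throughout I would use the explicit description $\cs' = \{F \in \cs \st F \in \idl \Bp\} \cup \{F \cup \{d\} \st F \in \cs\}$ derived just above the statement, together with the fact, also recalled in Section~\ref{sec:preliminaries}, that the distributive closure system $(\U, \cs)$ is itself a convex geometry.

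First I would take $F' \in \cs'$ with $F' \neq \U'$ and split on whether $d \in F'$. If $d \in F'$, write $F = F' \setminus \{d\}$; then $F \in \cs$ and $F \neq \U$ (otherwise $F' = \U'$), so by the convex geometry property of $(\U, \cs)$ there is $x \in \U \setminus F$ with $F \cup \{x\} \in \cs$; since $x \neq d$ this gives $x \notin F'$, and $F' \cup \{x\} = (F \cup \{x\}) \cup \{d\}$ lies in $\cs'$ by the second part of the description of $\cs'$. If $d \notin F'$, then $F' \in \cs$, and $F' \cup \{d\} \in \cs'$ again by the second part of the description, so $x = d$ works. In both cases the one-point extension property holds, hence $(\U', \cs')$ is a convex geometry; the existence of a unique minimal spanning set $K'_F$ (consisting precisely of extreme elements) for each $F' \in \cs'$ then follows from the minimal-spanning-set characterization of convex geometries recalled in Section~\ref{sec:preliminaries}.

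I do not expect a genuine obstacle: the argument is a short case analysis. The only points needing care are invoking the correct form of the description of $\cs'$, so that $F' \cup \{d\}$ is immediately seen to be closed, and recalling that standard distributive closure systems are convex geometries, which disposes of the case $d \in F'$; everything else is bookkeeping about the gadget element $d \notin \U$.
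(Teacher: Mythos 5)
Your proof is correct and follows essentially the same route as the paper's: both verify the one-point extension characterization of convex geometries with the same case split on whether $d \in F'$, using $x = d$ in one case and lifting the extension from the convex geometry $(\U, \cs)$ in the other. Your write-up is, if anything, slightly more explicit than the paper's in justifying why $F' \cup \{d\}$ is closed when $d \notin F'$.
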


\begin{proof}
To show that $(\U', \cs')$ is a convex geometry, we prove that for each closed set $F' \neq \U'$, there is $a \in \U' \setminus F'$ such that $F' \cup \{a\}$ is closed.
If $d \notin F'$, then $d = a$ meets the requirement.
Assume $d \in F'$.
Then, $F'$ lies in $\{F \cup \{d\} \st F \in \cs\}$.
Put $F' = F \cup \{d\}$ with $F \in \cs$.
Since $(\U, \cs)$ is distributive, it is a convex geometry and as $F' \neq \U'$, there exists $a \in \U \setminus F$ such that $F \cup \{a\} \in \cs$ so that $F' \cup \{a\} \in \cs'$ holds.
We deduce that $(\U', \cs')$ is a convex geometry, which concludes the proof.
\end{proof}

Then, we characterize the $D$-base of $(\U', \cs')$.
Recall that since $(\U, \cs)$ is distributive, $\is = \is^b$ faithfully represents $(\U, \cs)$.
Henceforth, every non-trivial minimal generator in $(\U, \cs)$ is a singleton, meaning that there is no $D$-generator in $(\U, \cs)$.
In the next statement we characterize the $D$-base of $(\U', \is'_D)$.
The proof is given in \ref{app:D-base-dual-DB} for readability.

\begin{prop}[restate=PDB, label=prop:D-base-dual-DB]
The $D$-base $(\U', \is'_D)$ of $(\U', \cs')$ is defined by:
\[ 
\is'_D =  \is'^b \cup \{K'_B \imp d \st B \in \Bm, \card{K'_B} \geq 2\}
\]
where $\is'^b = \is^b \cup \{a \imp d \st B \subseteq \cl(a) \text{ for some } B \in \Bm \}$.
\end{prop}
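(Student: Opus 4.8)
The plan is to read off both parts of the $D$-base $(\U', \is'_D)$ directly from the explicit description of $\cl'$. Recall that $\is'_D$ splits into its binary part $\{a \imp c \st c \in \cl'(a) \setminus \{a\},\ a, c \in \U'\}$ and its non-binary part $\{A \imp c \st A \in \genD'(c),\ c \in \U'\}$, where $\genD'$ denotes the $D$-generators relative to $(\U', \cs')$. I would argue in three steps: (i) identify the binary part; (ii) show $\genD'(c) = \emptyset$ for every $c \in \U$; and (iii) prove $\genD'(d) = \{K'_B \st B \in \Bm,\ \card{K'_B} \geq 2\}$. Throughout I use that $(\U', \cs')$ is standard and, by Proposition~\ref{prop:Fp-CG}, a convex geometry, so that each closed set of $\cs'$ has a unique minimal spanning set.

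For step (i): $\cl'(d) = \{d\}$, so $d$ yields no binary implication; for $a \in \U$ one has $\cl'(a) \cap \U = \cl(a)$, so the binary implications internal to $\U$ are exactly those of $\is^{b}$; and $d \in \cl'(a)$ holds if and only if $\cl(a) \notin \idl \Bp$, which, since $\ftr \Bm$ and $\idl \Bp$ partition $\cs$, is equivalent to $B \subseteq \cl(a)$ for some $B \in \Bm$. This yields ${\is'}^{b} = \is^{b} \cup \{a \imp d \st B \subseteq \cl(a) \text{ for some } B \in \Bm\}$. For step (ii): suppose $A \in \genD'(c)$ with $c \in \U$; then $A$ is a minimal generator of $c$ with $\card{A} \geq 2$ (since $c \notin \cl^{b}(A)$ is required). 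If $d \in A$, then $A \setminus \{d\}$ still generates $c$ because $\cl'(A') \cap \U = \cl(A' \setminus \{d\})$ for all $A' \subseteq \U'$, contradicting minimality; hence $A \subseteq \U$, and then $A$ is a minimal generator of $c$ in the distributive system $(\U, \cs)$, so $\card{A} = 1$, a contradiction. Thus $\genD'(c) = \emptyset$, and every non-binary implication of $\is'_D$ has conclusion $d$.

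Step (iii) is the core. I would first check that $K'_B = K_B$ for each $B \in \Bm$: since $\cl'(B) = B \cup \{d\}$, the element $d$ is not extreme in $B \cup \{d\}$, while for $x \in B$ extremality of $x$ in $B \cup \{d\}$ with respect to $\cl'$ reduces to extremality of $x$ in $B$ with respect to $\cl$. Using that $\Bm$ is an antichain and $K_B$ is the unique minimal spanning set of $B$, one verifies that $K_B$ is a minimal generator of $d$ in $(\U', \cs')$ for each $B \in \Bm$. For the inclusion $\genD'(d) \supseteq \{K'_B \st B \in \Bm,\ \card{K'_B} \geq 2\}$: fix $B \in \Bm$ with $\card{K_B} \geq 2$; no single element of $K_B$ forces $d$ (otherwise $B$ is generated by a singleton, so $\card{K_B} = 1$ by the antichain property), hence ${\cl'}^{b}(K_B) = \cl^{b}(K_B) = \cl(K_B) = B$ by distributivity; and any minimal generator $A'$ of $d$ with $A' \subseteq B$ satisfies $B \subseteq \cl(A') \subseteq B$, so $A' = K_B$; by the definition of a $D$-generator (or Lemma~\ref{lem:D-carac}), $K_B \in \genD'(d)$. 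For the reverse inclusion: let $A \in \genD'(d)$, so $\card{A} \geq 2$; a non-trivial minimal generator of $d$ cannot contain $d$ (deleting any other element still generates $d$), so $A \subseteq \U$ and $\cl(A) \supseteq B$ for some $B \in \Bm$; since $A$ is a $D$-generator, $d \notin {\cl'}^{b}(A)$, so every $a \in A$ has $d \notin \cl'(a)$ and therefore ${\cl'}^{b}(A) = \cl^{b}(A) = \cl(A) \supseteq B \supseteq K_B$; thus $K_B$ is a minimal generator of $d$ contained in ${\cl'}^{b}(A)$, and the dominance clause in the definition of a $D$-generator forces $K_B = A$, whence $\card{K'_B} = \card{A} \geq 2$. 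Combining (i)--(iii) gives the stated description of $\is'_D$.

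I expect the main obstacle to be the inclusion $\genD'(d) \subseteq \{K'_B \st B \in \Bm,\ \card{K'_B} \geq 2\}$. The tempting approach---showing directly that $\cl(A) = B$ for a $D$-generator $A$ of $d$ and its witnessing $B \in \Bm$---does not follow from distributivity alone, since a minimal generator of $d$ could in principle have closure strictly above every member of $\Bm$. The way around it is to avoid that claim and instead exhibit $K_B$ as a genuine minimal generator of $d$ sitting inside ${\cl'}^{b}(A)$, letting the maximality (dominance) requirement built into the definition of a $D$-generator collapse $A$ onto $K_B$. A secondary, routine difficulty is keeping straight the four closure operators $\cl$, $\cl^{b}$, $\cl'$, ${\cl'}^{b}$ and the two notions of minimal spanning set (in $(\U, \cs)$ and in $(\U', \cs')$), all controlled by distributivity of $(\U, \cs)$ together with the fact that $(\U', \cs')$ is a convex geometry.
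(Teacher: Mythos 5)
Your proposal is correct and follows essentially the same route as the paper: read the binary part off the explicit formula for $\cl'$ together with the duality of $\Bm$ and $\Bp$, use distributivity of $(\U,\cs)$ to rule out $D$-generators for conclusions in $\U$, and show $\genD'(d)=\{K'_B \st B\in\Bm,\ \card{K'_B}\geq 2\}$ by exhibiting each $K'_B$ as a $D$-generator and then, for an arbitrary $A\in\genD'(d)$, locating a witness $B\in\Bm$ with $K'_B\subseteq\cl'^b(A)$ so that the dominance clause collapses $A$ onto $K'_B$. The final collapse step you flag as the "main obstacle" is exactly the paper's resolution as well.
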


\begin{prop} \label{prop:D-base-dual-poly}
The $D$-base $(\U', \is'_D)$ has polynomial size with respect to the sizes of $(\U, \is)$ and $\Bm$.
\end{prop}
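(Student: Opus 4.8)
The plan is to use the explicit description of $\is'_D$ provided by Proposition~\ref{prop:D-base-dual-DB} and bound the size of each of its three constituent parts separately in terms of $\card{\U}$, the total size of $\is$, and $\card{\Bm}$. Recall that $\is'_D = \is^b \cup \{a \imp d \st B \subseteq \cl(a) \text{ for some } B \in \Bm\} \cup \{K'_B \imp d \st B \in \Bm, \card{K'_B} \geq 2\}$.

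First I would handle the binary part. The implications $\is^b$ are part of the input $(\U, \is)$ (indeed $\is = \is^b$ since $(\U,\cs)$ is distributive), so they contribute total size at most that of $\is$, and in any case at most $\card{\U}^2$ binary implications. The new binary implications of the form $a \imp d$ are indexed by elements $a \in \U$, hence there are at most $\card{\U}$ of them, each of size $2$; to produce one such implication we test, for each $B \in \Bm$, whether $B \subseteq \cl(a)$, which is polynomial given $(\U,\is)$ and $\Bm$. So $\is'^b$ has size polynomial in the input.

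Next I would bound the non-binary part $\{K'_B \imp d \st B \in \Bm,\ \card{K'_B} \geq 2\}$. This family is indexed by the antichain $\Bm$, so it contains at most $\card{\Bm}$ implications. Each such implication has premise $K'_B$, the unique minimal spanning set of $B$ in the convex geometry $(\U', \cs')$ guaranteed by Proposition~\ref{prop:Fp-CG}; since $K'_B \subseteq B \subseteq \U'$, its size is at most $\card{\U'} = \card{\U} + 1$. Hence this part has total size at most $\card{\Bm} \cdot (\card{\U}+2)$. Adding the three bounds, $\is'_D$ has size polynomial in $\card{\U}$, the size of $\is$, and $\card{\Bm}$, which is the claim. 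I would also note (if the surrounding argument needs it) that each $K'_B$ is computable in polynomial time from $B$ and $(\U,\is)$ by repeatedly removing extreme elements, using the closure formula for $\cl'$ displayed before Proposition~\ref{prop:Fp-CG}.

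There is no real obstacle here: the proposition is essentially a bookkeeping consequence of the structural description in Proposition~\ref{prop:D-base-dual-DB}. The only point requiring a moment's care is that the \emph{number} of non-binary implications is controlled by $\card{\Bm}$ rather than by something that could blow up — but this is immediate because the map $B \mapsto (K'_B \imp d)$ is injective on $\Bm$ (distinct $B$'s have distinct closures $\cl'(K'_B) = B \cup \{d\}$, hence distinct premises). I would state that injectivity explicitly and then the size bound follows.
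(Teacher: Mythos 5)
Your proposal is correct and follows essentially the same route as the paper, whose entire proof is the one-line observation that Proposition~\ref{prop:Fp-CG} gives each $B \in \Bm$ a unique minimal spanning set $K'_B$, so the non-binary part of $\is'_D$ contributes at most $\card{\Bm}$ implications each of size at most $\card{\U'}$; your treatment just spells out the bookkeeping for the binary parts as well. Nothing further is needed.
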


\begin{proof}
This follows from Proposition~\ref{prop:Fp-CG} that for all $B \in \Bm$, $\cl'(B)$ has a unique minimal spanning set.
\end{proof}

We are in position to conclude the first side of the reduction.

\begin{lem}\label{lem:D-base-harder-dual}
There is an output-polynomial time algorithm for dualization in distributive lattices if there is one for computing the $D$-base of a closure system given by its meet-irreducible elements.
\end{lem}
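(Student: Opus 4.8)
The plan is to turn an assumed output-polynomial oracle for the problem of computing the $D$-base from meet-irreducible elements into an output-polynomial algorithm for distributive lattice dualization, reusing the construction $(\U', \cs')$ set up above. Given a dualization instance, namely a binary IB $(\U, \is)$ of a distributive closure system $(\U, \cs)$ together with an antichain $\Bp$ of $(\cs, \subseteq)$, I would first build $\Mi(\cs') = \Bp \cup \{M \cup \{d\} \st M \in \Mi(\cs)\}$; as noted above this is a genuine family of meet-irreducible elements and can be produced in polynomial time. Feeding $\Mi(\cs')$ to the oracle yields the $D$-base $(\U', \is'_D)$.

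The heart of the argument is to recover $\Bm$ from $\is'_D$. By Proposition~\ref{prop:D-base-dual-DB}, the implications of $\is'_D$ whose conclusion is $d$ are precisely the binary ones $a \imp d$ with $B \subseteq \cl(a)$ for some $B \in \Bm$, together with $K'_B \imp d$ for each $B \in \Bm$ with $\card{K'_B} \geq 2$ (here $K'_B$ denotes the minimal spanning set of $\cl'(B) = B \cup \{d\}$, which is a subset of $\U$, and one checks that $\cl(K'_B) = B$). I would then compute, using forward chaining on $\is$ (equivalently, intersections over $\Mi(\cs)$), the family $\mathcal{C} = \{\cl(K) \st (K \imp d) \in \is'_D\}$ of closed sets of $(\U, \cs)$, and return $\min_{\subseteq}(\mathcal{C})$. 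Correctness follows from a two-sided inclusion. On one hand, every $C \in \mathcal{C}$ lies in $\ftr \Bm$: a binary implication $a \imp d$ in $\is'_D$ cannot come from $\is^b$ since $d \notin \U$, so it forces $B \subseteq \cl(a)$ for some $B \in \Bm$, hence $\cl(a) \in \ftr \Bm$; and for a non-binary $K \imp d$ one has $\cl(K) = B$ for the corresponding $B \in \Bm$. On the other hand, every $B \in \Bm$ occurs in $\mathcal{C}$, arising from $K'_B \imp d$ when $\card{K'_B} \geq 2$ and from $a \imp d$ with $B = \cl(a)$ when $K'_B = \{a\}$. Thus $\Bm \subseteq \mathcal{C} \subseteq \ftr \Bm$, and since $\Bm$ is an antichain this yields $\Bm = \min_{\subseteq}(\mathcal{C})$.

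For the running time, $\card{\Mi(\cs')} \leq \card{\Mi(\cs)} + \card{\Bp}$ is polynomial in the dualization input, and by Proposition~\ref{prop:D-base-dual-poly} the output $(\U', \is'_D)$ has size polynomial in the sizes of $(\U, \is)$ and $\Bm$. Hence an output-polynomial oracle runs in time polynomial in $\card{(\U, \is)} + \card{\Bp} + \card{\Bm}$, that is, in the combined input-plus-output size of the dualization instance; constructing $\Mi(\cs')$ and extracting $\min_{\subseteq}(\mathcal{C})$ add only a further polynomial overhead. I expect the one delicate point to be the recovery step — specifically, ensuring that the binary implications $a \imp d$ correctly account for those members of $\Bm$ whose minimal spanning set is a singleton, and verifying the identity $\Bm = \min_{\subseteq}(\mathcal{C})$; the remainder is bookkeeping built on Propositions~\ref{prop:D-base-dual-DB} and~\ref{prop:D-base-dual-poly}.
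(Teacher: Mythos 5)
Your proposal is correct and follows essentially the same route as the paper: build $\Mi(\cs')$, invoke the assumed algorithm, and read $\Bm$ off the $D$-base via Propositions~\ref{prop:D-base-dual-DB} and~\ref{prop:D-base-dual-poly}. The one place you diverge is the recovery step, where instead of the paper's two-phase extraction (non-binary implications first, then a separate search for the members of $\Bm$ hidden in binary implications $a \imp d$) you uniformly take $\min_{\subseteq}\{\cl(K) \st K \imp d \in \is'_D\}$ and justify it by the sandwich $\Bm \subseteq \mathcal{C} \subseteq \ftr\Bm$ together with $\Bm$ being an antichain; this is a valid and arguably cleaner variant of the same argument.
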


\begin{proof}
Let \ctt{A} be an output-polynomial time algorithm computing the $D$-base of a closure system given its meet-irreducible elements.
Given an instance $(\U, \is)$ and $\Bp$ of distributive lattice dualization, we show how to compute $\Bm$ in output-polynomial time using \ctt{A}.
First, we build $\U'$ and $\Mi(\cs')$ in polynomial time.
Then, we use \ctt{A} to compute $(\U', \is'_D)$ in time polynomial with respect to $(\U, \is), \Bp$ and $\Bm$ according to Proposition~\ref{prop:D-base-dual-poly}.
Based on Proposition~\ref{prop:D-base-dual-DB}, we can recover a subset of $\Bm$ by computing $\cl(K'_B)$ in polynomial time for each $K'_B \imp d$ in $\is'_D$.
Again by Proposition~\ref{prop:D-base-dual-DB}, the only missing elements of $\Bm$ are of the form $\cl(a)$ for some $a \in \U$ if there is any.
To identify such elements, we go through the following operations:
\begin{itemize}
    \item find the elements $a \in \U$ such that $d \in \cl'(a)$ but $K'_B \nsubseteq \cl'(a)$ for each $K'_B$ appearing in the non-binary part of $\is'_D$.
    This can be done in time polynomial in the size of $\Bm$ and $\Mi(\cs')$, hence in output-polynomial time.
    \item among these elements, find those whose closure with respect to $\cl'^b$ is inclusion-wise minimal.
    This can be done in polynomial time in the size of $\Mi(\cs')$ and $\U'$.
\end{itemize}
Computing the closure in $\cl$ of the resulting elements will produce the remaining parts of $\Bm$.
As long as \ctt{A} runs in output-polynomial time, the whole procedure correctly output the antichain $\Bm$ dual to $\Bp$ in $(\cs, \subseteq)$ in output-polynomial time.
This concludes the proof.
\end{proof}

We illustrate the reduction with the closure lattice of Example~\ref{ex:ex-dualization}.

\begin{exam} \label{ex:red-dual}
In Example~\ref{ex:ex-dualization} we have $\Bp = \{12, 14, 45\}$ and $\Bm = \{123, 124, 145\}$ (see Figure~\ref{fig:ex-dualization}).
As in the reduction we define $\U' = \U \cup \{d\}$ and the meet-irreducible elements of the closure system $(\U', \cs')$:
\[ 
\Mi(\cs') = \Bp \cup \{M \cup \{d\} \st M \in \Mi(\cs)\} = \{12, 14, 45, 45d, 123d, 145d, 1234d, 1245d\} 
\] 
Following Proposition~\ref{prop:D-base-dual-DB}, the $D$-base of $(\U', \cs')$ is given by:
\[
\is'_D = \{2 \imp 1, 3 \imp 2, 3 \imp 1, 3 \imp d, 5 \imp 4 \} \cup \{24 \imp d, 15 \imp d\} 
\]

We represent the closure lattice $(\cs', \subseteq)$ in Figure~\ref{fig:red-dual}.
Finding the $D$-base $(\U', \is'_D)$ from $\Mi(\cs')$ amounts to find (polynomially many) binary implications plus the unique minimal spanning sets of each element of $\Bm$ not covered by binary implications.
Here, the members $124$ and $145$ of $\Bm$, which corresponds to $124d$ and $145d$ in $(\cs', \subseteq)$ are generated by non-binary implications of the $D$-base, namely $24 \imp d$ and $15 \imp d$.
The last element of $\Bm$, $123$, corresponding to $123d$, is identified by $3 \imp d$.
\begin{figure}[ht!]
\centering 
\includegraphics[scale=0.9, page=2]{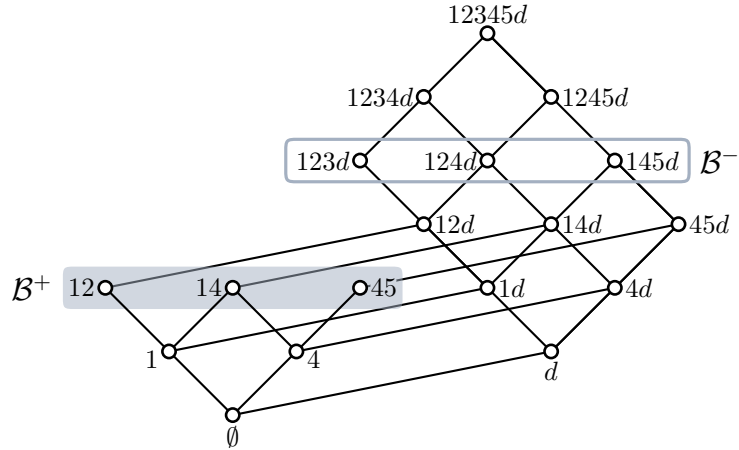}%
\caption{The lattice  $(\cs', \subseteq)$ resulting from the reduction of Lemma~\ref{lem:D-base-harder-dual} applied to Example~\ref{ex:ex-dualization}. 
Observe that $\Bp$ is now part of $\Mi(\cs')$.}
\label{fig:red-dual}
\end{figure}
\end{exam}

We turn our attention to the other side of Theorem~\ref{thm:D-base-dual}, namely, that an output-polynomial time algorithm for solving distributive lattice dualization yields an output-polynomial time algorithm for computing the $D$-base of a closure system given by its meet-irreducible elements.
To do so, we prove that finding the $D$-base of a closure system $(\U, \cs)$ given by $\Mi(\cs)$ reduces to (at most) $\card{\U}$ instances of distributive lattice dualization.
More precisely, we show that given $c \in \U$, we can recover $\genD(c)$ from the antichain dual to $\{M \in \Mi(\cs) \st c \upp M\}$ in the distributive lattice $(\cs^b, \subseteq)$.
We formalize this idea in Lemma~\ref{lem:gen-D-dual}, and we illustrate it via Example~\ref{ex:Dual-D-base}.

\begin{lem} \label{lem:gen-D-dual}
Let $(\U, \cs)$ be a closure system.
For $c \in \U$, the two following antichains are dual in $(\cs^b, \subseteq)$:
\begin{itemize}
    \item $\Bp = \{M \in \Mi(\cs) \st c \upp M\}$
    \item $\Bm = \{\cl^b(A) \st A \in \genD(c)\} \cup \{\cl^b(c)\}$
\end{itemize}
Moreover, $\cl^b$ is a one-to-one mapping between $\genD(c)$ and $\{\cl^b(A) \st A \in \genD(c)\}$.
\end{lem}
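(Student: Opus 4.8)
The plan is to verify the two defining conditions of duality in $(\cs^b, \subseteq)$ for the claimed antichains $\Bp = \{M \in \Mi(\cs) \st c \upp M\}$ and $\Bm = \{\cl^b(A) \st A \in \genD(c)\} \cup \{\cl^b(c)\}$, namely that $\ftr\Bm \cap \idl\Bp = \emptyset$ and $\ftr\Bm \cup \idl\Bp = \cs^b$. Before that, I would record two preliminary facts. First, both families are genuinely antichains in $(\cs^b, \subseteq)$: for $\Bp$ this is immediate since distinct maximal meet-irreducibles not containing $c$ are incomparable, and each such $M$ is closed in $\cs \subseteq \cs^b$; for $\Bm$, I would argue that if $A, A' \in \genD(c)$ with $\cl^b(A) \subseteq \cl^b(A')$, then since $A$ is a minimal generator of $c$ with $A \subseteq \cl^b(A) \subseteq \cl^b(A')$, the $D$-generator condition applied to $A'$ forces $A = A'$ (this is exactly where the definition of $D$-generator is used); and $\cl^b(c) = \cl(c)$ is minimal among all of these because $c \notin \cl^b(A)$ for every $D$-generator $A$ while $c \in \cl^b(c)$, so $\cl^b(c)$ is incomparable to each $\cl^b(A)$. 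Second, I would note the key translation: for $F \in \cs^b$, we have $F \in \idl\Bp$ iff $F \subseteq M$ for some $M \in \Mi(\cs)$ with $c \upp M$, and by the characterization of $\upp$ recalled in the preliminaries, $c \notin M$ whenever $c \upp M$; conversely if $c \notin \cl(F)$ then some maximal meet-irreducible $M$ with $c \notin M$ contains $F$, and such an $M$ satisfies $c \upp M$. Hence $F \in \idl\Bp$ iff $c \notin \cl(F)$.

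Given that translation, the covering condition $\ftr\Bm \cup \idl\Bp = \cs^b$ becomes: every $F \in \cs^b$ with $c \in \cl(F)$ lies above some member of $\Bm$. If $c \in \cl^b(F)$, then $\cl(c) = \cl^b(c) \subseteq \cl^b(F) = F$ (recall $F$ is closed in $\cs^b$, which is distributive, so $F = \cl^b(F)$), placing $F \in \ftr\{\cl^b(c)\}$. Otherwise $c \in \cl(F) \setminus \cl^b(F)$; then $F$, viewed as a subset of $\U$, contains a minimal generator of $c$, and by Lemma~\ref{lem:D-carac} together with the minimality process one extracts a $D$-generator $A$ of $c$ with $A \subseteq \cl^b(F) = F$ — more carefully, I would take a minimal generator $A_0 \subseteq F$ of $c$, and among all minimal generators of $c$ contained in $\cl^b(A_0)$ pick one whose $\cl^b$-closure is minimal; the $D$-generator characterization shows this is a $D$-generator $A$, and $\cl^b(A) \subseteq \cl^b(A_0) \subseteq \cl^b(F) = F$, so $F \in \ftr\{\cl^b(A)\} \subseteq \ftr\Bm$. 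For the disjointness condition $\ftr\Bm \cap \idl\Bp = \emptyset$: suppose $F$ is above some member of $\Bm$. If $F \supseteq \cl^b(c) = \cl(c) \ni c$ then $c \in \cl(F)$, so $F \notin \idl\Bp$. If $F \supseteq \cl^b(A)$ for a $D$-generator $A$, then $A \subseteq \cl^b(A) \subseteq F$, and since $c \in \cl(A) \subseteq \cl(F)$ we again get $F \notin \idl\Bp$. This establishes duality.

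Finally, the injectivity of $\cl^b$ on $\genD(c)$ follows from the antichain argument above: if $A, A' \in \genD(c)$ and $\cl^b(A) = \cl^b(A')$, then in particular $\cl^b(A) \subseteq \cl^b(A')$, so the $D$-generator condition on $A'$ forces $A = A'$; hence $\cl^b$ restricted to $\genD(c)$ is a bijection onto $\{\cl^b(A) \st A \in \genD(c)\}$. I expect the main obstacle to be the covering direction, specifically the careful extraction of an actual $D$-generator $A$ with $\cl^b(A) \subseteq F$ from an arbitrary closed set $F \in \cs^b$ containing a minimal generator of $c$ but not $c$ itself; one must make sure the minimization is performed with respect to $\cl^b$-closure (not cardinality or inclusion of generators) so that Lemma~\ref{lem:D-carac} applies and yields membership in $\genD(c)$, and one must use that $F = \cl^b(F)$ is $\cl^b$-closed so that $\cl^b(A) \subseteq \cl^b(F) = F$ genuinely lands back inside $F$. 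The remaining steps are routine once the $\idl\Bp \leftrightarrow \{c \notin \cl(F)\}$ dictionary is in place.
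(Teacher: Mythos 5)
Your proof is correct and takes essentially the same route as the paper's: both translate $F\in\idl\Bp$ into $c\notin\cl(F)$ via the characterization of $\upp$, and both establish the covering condition by extracting, from a $\cs^b$-closed $F\notin\idl\Bp$ with $c\notin F$, a $D$-generator of $c$ whose $\cl^b$-closure lies in $F$ (the paper by reducing the unique minimal spanning set $K_F$, you by minimizing $\cl^b$-closures of minimal generators inside $F$). The only details to add are that your extraction step implicitly uses that a minimal generator of $c$ is the unique minimal spanning set of its $\cl^b$-closure in the convex geometry $(\U,\cs^b)$ (so that $\cl^b(A')=\cl^b(A)$ forces $A'=A$), and that incomparability of $\cl^b(c)$ with each $\cl^b(A)$ also requires $\cl^b(A)\nsubseteq\cl(c)$, which follows from standardness since otherwise $A\subseteq\cl(c)\setminus\{c\}\in\cs$ would give $c\notin\cl(A)$.
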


\begin{proof}
First, we prove that $\Bm$ and $\Bp$ are indeed antichains of $(\cs^b, \subseteq)$.
Since $\cs \subseteq \cs^b$, each $M \in \Mi(\cs)$ belongs to $\cs^b$.
Moreover, by definition of $\upp$, all the meet-irreducible elements of $\Bp$ must be incomparable.
Thus, $\Bp$ is an antichain of $(\cs^b, \subseteq)$.
On the other hand, $\Bm \subseteq \cs^b$ holds by definition.
The fact that $\Bm$ is an antichain follows from the definition of $D$-generators.

We proceed to show that $\Bm$ and $\Bp$ are dual.
First, we prove that $\ftr \Bm \cap \idl \Bp = \emptyset$.
It is sufficient to show that $\Bm \cap \idl \Bp = \emptyset$, i.e., that for each $B \in \Bm$, $B \nsubseteq M$ for every $M \in \Bp$.
Since $\Bp = \{M \in \Mi(\cs) \st c \upp M\}$ and $\cl(c) = \cl^b(c)$, we readily have that $\cl^b(c) \nsubseteq M$ for each $M \in \Bp$. 
We consider $\genD(c)$.
Note that if $\genD(c) = \emptyset$, then $\ftr \Bm \cap \idl \Bp = \emptyset$ is already proved.
Hence, assume $\genD(c) \neq \emptyset$ and let $A \in \genD(c)$.
We have $A \in \gen(c)$.
Thus, for every $M \in \Mi(\cs)$ such that $c \upp M$, there exists $a \in A \setminus M$.
Consequently, $\cl^b(A) \nsubseteq \cl^b(M) = M$.
We derive that $\ftr \Bm \cap \idl \Bp = \emptyset$ as required.

It remains to show that $\ftr \Bm \cup \idl \Bp = \cs^b$.
It is sufficient to show that for every $F \in \cs^b$ such that $F \notin \idl \Bp$, $F \in \ftr \Bm$.
If $c \in F$, $F \in \ftr \Bm$ holds as $\cl^b(c) \in \Bm$.
Hence, suppose that $c \notin F$.
Let $K_F$ be the unique minimal spanning set of $F$ in $(\U, \cs^b)$.
Note that $K_F$ is indeed unique since $(\cs^b, \subseteq)$ is distributive.
Now, for each $M \in \Bp$, $\cl^b(K_F)= F \nsubseteq M = \cl^b(M)$ holds if and only if $K_F \nsubseteq M$ since $\cl^b$ is a closure operator.
We deduce that $\cl(K_F) \nsubseteq M$ for each $M \in \Mi(\cs)$ such that $c \upp M$ by definition of $\Bp$.
By definition of $\upp$, every closed set $F$ of $\cs$ either contains $c$ or satisfies $F \subseteq M$ for some $M$ such that $c \upp M$.
It follows that $c \in \cl(K_F)$.
As $c \notin K_F$, we obtain that $K_F$ contains a minimal generator of $c$.
Thus, $K_F$ can be modified (reduced) to obtain a $D$-generator $A$ of $c$ and $\cl^b(A) \subseteq F$ holds.
We obtain $F \in \ftr \Bm$ as expected.
Therefore, $\ftr \Bm \cup \idl \Bp = \cs^b$ holds.

The two antichains $\Bm$ and $\Bp$ satisfy $\ftr \Bm \cap \idl \Bp = \emptyset$ and $\ftr \Bm \cup \idl \Bp = \cs^b$.
We deduce that they are dual in $(\cs^b, \subseteq)$, which concludes this part of the proof.

The fact that $\cl^b$ defines a one-to-one correspondence between $D$-generators follows from the fact that each $A$ in $\genD(c)$ is the unique minimal spanning set of $\cl^b(A)$ in $(\cs^b, \subseteq)$.
\end{proof}

\begin{exam} \label{ex:Dual-D-base}
Let $\U = \{1, 2, 3, 4, 5\}$ and consider the closure system $(\U, \cs)$ with $D$-base $(\U, \is_D)$, where
\[ 
\is_D = \{3 \imp 2, 4 \imp 2\} \cup \left\{
\begin{array}{l l l}
25 \imp 1, & 34 \imp 1, & 15 \imp 2, \\
14 \imp 3, & 15 \imp 3, & 25 \imp 3, \\
13 \imp 4, & 25 \imp 4, & 15 \imp 4, \\
34 \imp 5, & 13 \imp 5, & 14 \imp 5
\end{array}
\right\}
\]
We have $\is^b= \{3 \imp 2, 4 \imp 2\}$.
The lattice $(\U, \cs)$ is given in Figure~\ref{fig:Dual-D-base-lattice}.
The meet-irreducible elements are $\Mi(\cs) = \{1, 12, 23, 24, 5\}$.
\begin{figure}[ht!]
    \centering
    \includegraphics[scale=1, page=1]{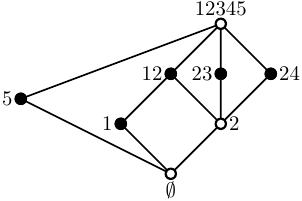}%
    \caption{The closure lattice of Example~\ref{ex:Dual-D-base}. 
    Black dots are meet-irreducible elements.}
    \label{fig:Dual-D-base-lattice}
\end{figure}
For instance, we have $\genD(5) = \{13, 14, 34\}$.
In Figure~\ref{fig:Dual-D-base-distributive} we show the distributive closure lattice $(\cs^b, \subseteq)$ associated to $(\U, \cs)$ and we highlight the two dual antichains associated to $5$:
\begin{itemize}
    \item $\Bp = \{M \in \Mi(\cs) \st 5 \upp M\} = \{12, 23, 24\}$
    \item $\Bm = \cl^b(5) \cup \{\cl^b(A) \st A \in \genD(5)\} = \{5, 123, 124, 234\}$
\end{itemize}
Since $(\cs^b, \subseteq)$ is distributive, each closed set has a unique minimal spanning set.
Moreover, $D$-generators are minimal spanning sets of their closure.
Hence, $\cl$ is a one-to-one correspondence between $\genD(5)$ and $\Bm \setminus \{\cl^b(5)\} = \{123, 124, 234\}$ as specified by Lemma~\ref{lem:gen-D-dual}.
\begin{figure}[ht!]
    \centering
    \includegraphics[page=2, scale=1]{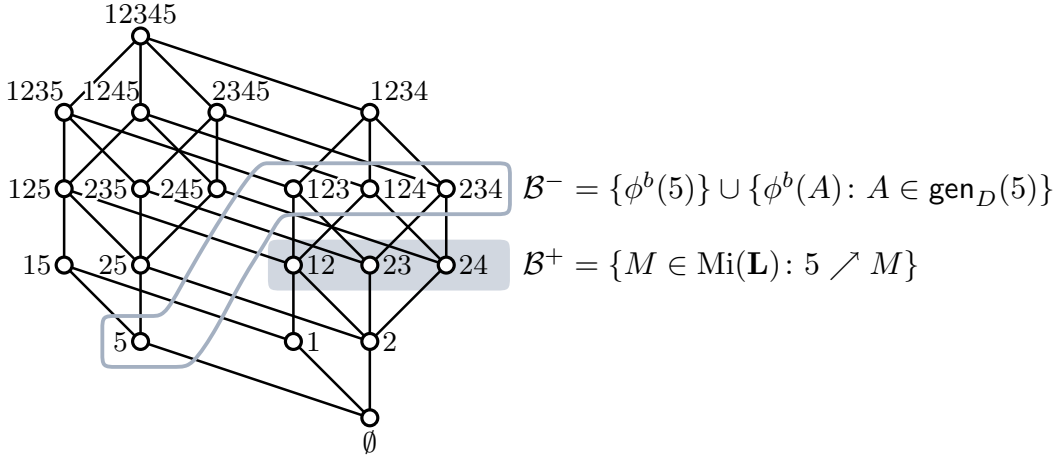}%
    \caption{The distributive closure lattice $(\cs^b, \subseteq)$ of Example~\ref{ex:Dual-D-base}.
    Two dual antichains are highlighted: the meet-irreducible elements $M$ satisfying $5 \upp M$ ($\Bp$) and the closures of $D$-generators of $5$ along with the closure of $5$ ($\Bm$).}
    \label{fig:Dual-D-base-distributive}
\end{figure}
\end{exam}

Using Lemma~\ref{lem:gen-D-dual}, we demonstrate how dualization in distributive lattices can be used to find the $D$-base from meet-irreducible elements.

\begin{lem} \label{lem:Dual-harder-D-base}
There is an output-polynomial time algorithm for computing the $D$-base of a closure system given by its meet-irreducible elements if there is one for dualization in distributive lattices.
\end{lem}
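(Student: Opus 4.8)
The plan is to reverse the reduction of Lemma~\ref{lem:D-base-harder-dual}: given an output-polynomial algorithm \ctt{A} for distributive lattice dualization, I build an algorithm that, on input $\Mi(\cs)$, outputs the $D$-base $(\U, \is_D)$ in output-polynomial time. First I would compute the binary part $\is^b = \{a \imp c \st c \in \cl(a) \setminus \{a\}\}$ directly from $\Mi(\cs)$ in polynomial time, which in turn yields $\Mi(\cs^b)$ and an IB $(\U, \is^b)$ of the distributive closure system $(\U, \cs^b)$ (recall from the preliminaries that these translations are polynomial). Then, for each $c \in \U$, I would first test in polynomial time whether $\genD(c) = \emptyset$ (as noted in the remark following the definition of the $D$-base, this is equivalent to $\{a \in \U \st c \notin \cl(a)\}$ being the unique $M \in \Mi(\cs)$ with $c \upp M$); if so, there is nothing to do for $c$.

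Otherwise I would form the antichain $\Bp = \{M \in \Mi(\cs) \st c \upp M\}$ of the distributive lattice $(\cs^b, \subseteq)$, which is computable in polynomial time, and run \ctt{A} on the instance $((\U, \is^b), \Bp)$ to obtain its dual antichain $\Bm$. By Lemma~\ref{lem:gen-D-dual}, $\Bm = \{\cl^b(A) \st A \in \genD(c)\} \cup \{\cl^b(c)\}$, and $\cl^b$ is a bijection between $\genD(c)$ and $\Bm \setminus \{\cl^b(c)\}$. Hence I recover each $D$-generator $A$ of $c$ by taking, for each $B \in \Bm$ with $B \neq \cl^b(c)$, the unique minimal spanning set of $B$ in the convex geometry $(\cs^b, \subseteq)$ — equivalently the set of extreme elements of $B$ — which is computable in polynomial time from $(\U, \is^b)$. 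Outputting $A \imp c$ for each such $A$, together with $\is^b$, produces exactly $\is_D$ by the definition of the $D$-base. The total running time is polynomial in $|\U|$, $|\Mi(\cs)|$, and $\sum_{c} |\Bm_c|$; since $|\Bm_c| \leq |\genD(c)| + 1$ and $\genD(c)$ injects into $\is_D$, this is polynomial in the combined input and output size, so the overall algorithm runs in output-polynomial time.

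Combining this lemma with Lemma~\ref{lem:D-base-harder-dual} establishes the equivalence asserted in Theorem~\ref{thm:D-base-dual}. The main subtlety I anticipate is the bookkeeping of the complexity bound: \ctt{A} is only guaranteed to be output-polynomial, so its running time on the instance for $c$ is polynomial in the input size plus $|\Bm_c|$, and I must argue that summing over all $c \in \U$ still yields a bound polynomial in $|\is_D|$ and the size of $\Mi(\cs)$. This works because each $\Bm_c$ has size at most $|\genD(c)| + 1$, the families $\genD(c)$ for distinct $c$ are disjoint and each contributes that many distinct non-binary implications to $\is_D$, and all the auxiliary steps (computing $\is^b$, $\Mi(\cs^b)$, the sets $\Bp$, minimal spanning sets, and the $\genD(c) = \emptyset$ test) are polynomial in the input alone. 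No single step is deep; the point is that the reduction is "linear" in the number of target elements and faithfully transfers the complexity guarantee.
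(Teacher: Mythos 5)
Your proposal is correct and follows essentially the same route as the paper's proof: compute $\is^b$ from $\Mi(\cs)$, then for each $c$ dualize $\Bp = \{M \in \Mi(\cs) \st c \upp M\}$ in $(\cs^b, \subseteq)$ via Lemma~\ref{lem:gen-D-dual}, and recover each $D$-generator as the unique minimal spanning set of its $\cl^b$-closure. One small correction to your complexity bookkeeping: the families $\genD(c)$ for distinct $c$ are \emph{not} disjoint in general (e.g., $\genD(3) \cap \genD(4) \neq \emptyset$ in Example~\ref{ex:solution-graph}); the bound still holds because each pair $(A, c)$ with $A \in \genD(c)$ yields a distinct implication $A \imp c$ of $\is_D$, which is the argument the paper uses when it notes that each generator is output at most $\card{\U}$ times.
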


\begin{proof}
As a preliminary step, we compute the binary part $\is^b$ of $\is_D$ using $\Mi(\cs)$.
Since $\is^b = \{a \imp c \st a, c \in \U, c \in \cl(a), a \neq c\}$ and $\cl(A)$ can be computed in polynomial time from $\Mi(\cs)$ for every $A \subseteq \U$, determining $\is^b$ from $\Mi(\cs)$ takes polynomial time.
The IB $(\U, \is^b)$ is an IB for $(\U, \cs^b)$.

Suppose there exists an output-polynomial time algorithm $\ctt{A}$ for dualization of distributive lattices.
Let $c \in \U$ and let $\Bp = \{M \in \Mi(\cs) \st c \upp M\}$.
Observe that $\Bp$ can be identified in polynomial time from $\Mi(\cs)$ as $c \upp M$ is equivalent to $M \in \max_{\subseteq}(\{M \in \Mi(\cs) \st c \notin M\})$.
Now we use \ctt{A} to compute the antichain $\Bm$ dual to $\Bp$ in $(\cs^b, \subseteq)$ in output-polynomial time.
By Lemma~\ref{lem:gen-D-dual}, $\Bm = \cl^b(c) \cup \{\cl^b(A) \st A \in \genD(c)\}$.
Instead of outputting $\cl^b(A)$ for each $A \in \genD(c)$, we output $A \imp c$.
Since  $(\cs^b, \subseteq)$ is a (standard) distributive closure lattice, $A$ is the unique minimal spanning set of $\cl^b(A)$.
Henceforth, $A$ can be computed in polynomial time by greedily removing elements.
By Lemma~\ref{lem:gen-D-dual}, $\cl^b$ defines a bijection between $\genD(c)$ and $\Bm \setminus \{\cl^b(c)\}$ so that the size of $\Bm$ is bounded by $\card{\genD(c)} + 1$.
Consequently, the subset $\{A \imp c \st A \in \genD(c)\}$ is obtained in output-polynomial time using \ctt{A}.

Applying this algorithm to each element of $\U$ yields the $D$-base of the closure system.
Note that since there are only $\card{\U}$ calls to \ctt{A}, each generator will be output at most $\card{\U}$ times.
Thus, the algorithm correctly outputs the $D$-base associated to $\Mi(\cs)$ in output-polynomial time as required.
\end{proof}

Combining Lemmas~\ref{lem:D-base-harder-dual} and~\ref{lem:Dual-harder-D-base} we finally obtain Theorem~\ref{thm:D-base-dual} which we restate below.
\DBDual*

It is shown in \cite{elbassioni2022dualization} that dualization in distributive lattices can be solved in output-quasi-polynomial time.
Therefore, we obtain Theorem~\ref{thm:D-base-Mi}

\DBMi*

\section{Finding the \mtt{$D$}{D}-base from an implicational base} \label{sec:D-base-IB}

In this part, we give an algorithm which computes with polynomial delay the $D$-base of a closure system given by an IB, i.e., Theorem~\ref{thm:D-base-IB}.
For this purpose we use results of Ennaoui and Nourine \cite{ennaoui2025polynomial} based on the solution graph traversal method.
The principle of this technique is to traverse a (directed) graph, called the \emph{solution graph}, whose vertices are the solutions to enumerate and whose arcs are devised from a suitably chosen \emph{transition function}.
Recall that a directed graph is \emph{strongly connected} if each vertex can be reached from any other vertex by following arcs of the graph.
The next theorem restates the conditions needed to obtain polynomial delay with solution graph traversal.
We note that to avoid repetitions during enumeration, the algorithm may require exponential space.
We redirect the reader to, e.g., \cite{boros2004algorithms, elbassioni2015polynomial, johnson1988generating} for further details on this folklore technique.

\begin{thm}[Folklore]\label{thm:solution-graph}
Let $\Pi$ be an enumeration problem which on input $x$ has solutions $\cc{S}(x)$.
We further assume that $\cc{S}(x)$ is a collection of subsets of a groundset $\U$ being part of the input $x$.
Let $\cc{N} : \cc{S}(x) \to \pow{\cc{S}(x)}$ be a transition function and define $\cc{G}$ as the directed graph---the solution graph---on $\cc{S}(x)$ where there is an arc from $S_1$ to $S_2$ if $S_2 \in \cc{N}(S_1)$. 
Then, $\cc{S}(x)$ can be computed with polynomial delay if the next three conditions hold:
\begin{enumerate}
    \item a first solution $S_0$ of $\cc{S}(x)$ can be computed in input polynomial time (i.e., polynomial in the size of $x$);
    \item for each $S \in \cc{S}$, $\cc{N}(S)$ can be computed in input polynomial time;
    \item $\cc{G}$ is strongly connected.
\end{enumerate}
\end{thm}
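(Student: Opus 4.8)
The plan is to enumerate $\cc{S}(x)$ by a depth-first traversal of the solution graph $\cc{G}$ started at the solution $S_0$ of condition~1, equipped with an \emph{alternating} (``ping-pong'') output rule that is known to turn a depth-first search into a polynomial-delay procedure. Since $\cc{G}$ is strongly connected by condition~3, every solution is reachable from $S_0$; hence a DFS from $S_0$ --- in which, from a solution $S$, we iterate over $\cc{N}(S)$ (computable in input-polynomial time by condition~2) and recurse into each not-yet-visited neighbour --- discovers every element of $\cc{S}(x)$ exactly once and thereby builds a spanning arborescence $\cc{T}$ rooted at $S_0$. To recognise already-visited solutions, I would keep them in a dictionary over subsets of $\U$ (say a balanced search tree or a radix tree), so that a membership test or an insertion costs $\mathrm{poly}(\card{\U})$; this dictionary is the only component that may use exponential space, as anticipated before the statement.

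The output rule is the following: when the traversal enters a solution $S$ for the first time, output $S$ immediately if its depth in $\cc{T}$ is even; when the traversal finishes processing $S$ and is about to backtrack from it, output $S$ if its depth in $\cc{T}$ is odd. As $\cc{T}$ spans $\cc{S}(x)$ and assigns each solution a unique depth, every solution is output exactly once, and the root $S_0$ (depth $0$) is output at the very start, after the time needed to build it. Correctness of the enumeration is then immediate from strong connectivity, and the entire content of the theorem lies in the delay bound.

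For the delay, put $d = \max_{S \in \cc{S}(x)} \card{\cc{N}(S)}$, which is polynomially bounded in $\card{x}$ by condition~2. The key observation is that backtracking \emph{from a node of odd depth always triggers an output}, so a ``silent'' backtracking step can only happen at a node of even depth and therefore cannot chain: starting right after any output, the traversal crosses at most a constant number of arcs of $\cc{T}$ (down into a freshly discovered node, or up while backtracking) before the next output occurs. A short case analysis also shows that the last output is necessarily produced at a node of depth at most $1$, so the post-processing after it reduces to at most one backtracking move plus the scan of the root's neighbour list. Hence, between two consecutive outputs (and before the first, and after the last) the traversal only (i)~moves across $O(1)$ arcs of $\cc{T}$, (ii)~performs $O(d)$ neighbour-skip tests, each a dictionary lookup costing $\mathrm{poly}(\card{x})$, and (iii)~computes $\cc{N}$ on $O(1)$ freshly discovered solutions, each costing $\mathrm{poly}(\card{x})$ by condition~2 --- all polynomial in $\card{x}$, which is exactly polynomial delay.

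The step I expect to be the real obstacle is (i), and it is the reason for the alternating rule rather than a plain pre-order DFS or a breadth-first search: with a naive traversal one can be forced, just after an output, into a backtracking phase (for BFS, a sequence of dequeued solutions with no new neighbour) whose length is the current stack depth and hence possibly superpolynomial, destroying the delay guarantee; the parity discipline caps this phase at constantly many moves. The remaining ingredients --- reachability from $S_0$, the dictionary, and bounding $d$ and the cost of a single $\cc{N}$-evaluation by condition~2 --- are routine, the only point requiring mild care being that the depths used in the output rule must be those of the arborescence actually produced by the traversal, which is well defined once a fixed ordering of each $\cc{N}(S)$ is chosen.
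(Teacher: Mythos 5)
Your proof is correct and is exactly the standard argument behind this folklore theorem, which the paper does not prove itself but defers to its references: a DFS of the solution graph with a visited-set dictionary, made polynomial-delay by the even/odd-depth alternating output rule. Your key claims check out --- depths alternate parity along any root-to-leaf path, so silent moves (descents into odd-depth nodes, backtracks out of even-depth nodes) cannot occur twice in a row, giving $O(1)$ tree arcs and $O(d)$ dictionary lookups between consecutive outputs, with the first output after $S_0$ is built and the last at depth at most $1$.
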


Let $(\U, \is)$ be an IB with closure system $(\U, \cs)$.
Our aim is to find an appropriate transition function to list all the $D$-generators of $(\U, \is)$ based on Theorem~\ref{thm:solution-graph}.
As an intermediate step, we will show how to list the $D$-generators of some element $c$ with the same technique.
For convenience, let us define $\U_c = \{a \in \U \st c \notin \cl(a)\}$.
Observe that $c \notin \U_c$ and that $\U_c$ contains at least all the elements of $\U$ that appear in $D$-generators of $c$.

\begin{rem} \label{rem:Xc}
Recall from Section~\ref{sec:preliminaries} that $\U_c \in \cs$ if and only if $c$ has no $D$-generators.
In this case, there is nothing to enumerate.
In what follows, we assume that $c$ has at least one $D$-generator to be found, hence that $\U_c \notin \cs$ and $c \in \cl(\U_c)$.
\end{rem}

We first relate the problem of finding $\gen_D(c)$ to the task of finding the so called \emph{$D$-minimal keys} of a (standard) closure system.
recall that a minimal key of a closure system $(\U, \cs)$ is an inclusion-wise minimal subset $K$ of $\U$ that satisfies $\cl(K) = \U$.
Similarly to $D$-generators, a minimal key $K$ of a closure system $(\U, \cs)$ is a \emph{$D$-minimal key} if for every minimal key $K'$, $\cl^b(K') \subseteq \cl^b(K)$ entails $K = K'$.
Using $\U_c$, we build an implicational base $(\U_c, \is_c)$ where $\is_c = \is_1 \cup \is_2$ with:
\begin{align*}
    \is_1 = & \{A \imp b \in \is \st A \cup \{b\} \subseteq \U_c \} \\ 
    \is_2 = & \{A \imp b \st A \imp d \in \is, A \subseteq \U_c, d \notin \U_c, b \in \U_c \setminus \cl^b(A)\}
\end{align*}

Note that $\is_1 \subseteq \is$ but $\is_2$ does not need to contain only valid implications of $(\U, \cs)$.
Moreover, observe that $(\U_c, \is_c)$ is standard.
Indeed, $\emptyset$ remains closed in $\cs_c$ as long as it is already in $\cs$, and $\cl(a) = \cl_c(a)$ by definition of $\is_c$ as long as $a \in \U_c$. 
More generally, $\is_c^b = \is^b \setminus \{a \imp d \st c \in \cl(d) \}$, so that for every $D$-generator $A$ of $c$, $\cl_c^b(A) = \cl^b(A)$.
Furthermore, $(\U_c, \is_c)$ can be computed in polynomial time in the size of $(\U, \is)$.
The reduction is illustrated in Example~\ref{ex:solution-graph}.
We argue that $D$-generators of $c$ in $(\U, \cs)$ are precisely the $D$-minimal keys of $(\U_c, \cs_c)$.

\begin{lem} \label{lem:link-minimal-D}
The $D$-generators of $c$ in $(\U, \is)$ are precisely the $D$-minimal keys of $(\U_c, \is_c)$.
\end{lem}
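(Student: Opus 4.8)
The plan is to show that the $D$-generators of $c$ in $(\U,\is)$ and the $D$-minimal keys of $(\U_c,\is_c)$ are the same subsets of $\U_c$, by first matching up the underlying \emph{minimal generators}/\emph{minimal keys} and then matching the $\cl^b$-based selection criterion that carves out the $D$-variants from each. The main technical tool will be Lemma~\ref{lem:D-carac}: $A$ is a $D$-generator of $c$ iff $c\in\cl(A)$ and $c\notin\cl(\cl^b(A)\setminus\{a\})$ for all $a\in A$; I will prove an analogous characterization is satisfied by a set iff it is a $D$-minimal key of $(\U_c,\is_c)$, namely that $K$ is a $D$-minimal key iff $\cl_c(K)=\U_c$ and $\cl_c(\cl_c^b(K)\setminus\{a\})\neq\U_c$ for every $a\in K$ (this is the ``key'' analogue of Lemma~\ref{lem:D-carac}, obtained by taking $\U_c$ in the role of the single conclusion and noting $\cl_c^b = \cl^b$ on subsets of $\U_c$, as remarked before the statement).

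First I would establish the bridge between closures: for $A\subseteq\U_c$, I claim $c\in\cl(A)$ iff $\cl_c(A)=\U_c$. The forward direction uses $\is_2$: if $c\in\cl(A)$, then forward chaining in $\is$ from $A$ reaches $c$; I mimic this run inside $(\U_c,\is_c)$, and whenever the $\is$-run would add an element $d\notin\U_c$ via some implication $A'\imp d$ with $A'\subseteq\U_c$ already derived, the implications in $\is_2$ of the form $A'\imp b$ for $b\in\U_c\setminus\cl^b(A')$ force all of $\U_c\setminus\cl^b(A')$ into the $\is_c$-closure, and together with the binary implications one checks the whole of $\U_c$ gets swallowed; conversely, since every implication of $\is_1$ is valid in $(\U,\cs)$ and every implication $A'\imp b$ of $\is_2$ has $A'\imp d\in\is$ with $c\in\cl(d)\subseteq\cl(A')$, hence $c\in\cl(A')$, one shows that if $\cl_c(A)=\U_c$ (so $\is_c$-forward-chaining from $A$ used some rule, and in particular $A\notin\cs_c$, cf.\ Remark~\ref{rem:Xc}) then $c\in\cl(A)$. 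A cleaner way to phrase the converse: every $\is_c$-implication is ``sound relative to $c$'' in the sense that if its premise is in $\cl(\cdot)$ of something containing $c$-generating power then so is its conclusion-or-$c$; I would package this as: for $A\subseteq\U_c$, if $\cl_c(A)=\U_c$ then $c\in\cl(A)$, proved by induction on the $\is_c$-forward-chaining sequence. Granting this equivalence, minimal generators of $c$ contained in $\U_c$ correspond exactly to minimal keys of $(\U_c,\is_c)$, because minimality of the spanning set is preserved in both directions (any minimal generator of $c$ lies in $\U_c$ since no element $a$ with $c\in\cl(a)$ can be part of a minimal generator of $c\ne a$).

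Next I would upgrade this correspondence from minimal generators/keys to $D$-generators/$D$-minimal keys. Here the point is that $\cl_c^b = \cl^b$ on subsets of $\U_c$ — this is exactly the observation $\is_c^b = \is^b\setminus\{a\imp d : c\in\cl(d)\}$ noted just before the lemma, since the removed binary implications have conclusions outside $\U_c$ and so never fire when closing a subset of $\U_c$ under $\cl^b$. Therefore the selection rule ``for every $a\in A$, $c\notin\cl(\cl^b(A)\setminus\{a\})$'' of Lemma~\ref{lem:D-carac} translates, via the closure-bridge of the previous paragraph applied to the set $\cl^b(A)\setminus\{a\}\subseteq\U_c$, into ``for every $a\in A$, $\cl_c(\cl_c^b(A)\setminus\{a\})\neq\U_c$'', which is precisely the $D$-minimal-key criterion. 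Combining: $A\in\genD(c)$ iff $c\in\cl(A)$ and the selection rule holds iff $\cl_c(A)=\U_c$ and the $\is_c$-selection rule holds iff $A$ is a $D$-minimal key of $(\U_c,\is_c)$.

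The main obstacle I anticipate is the soundness half of the closure-bridge, i.e.\ verifying that the artificial implications in $\is_2$ — which need not be valid in $(\U,\cs)$ — nonetheless cannot make $\cl_c(A)=\U_c$ unless $c\in\cl(A)$ already; one must argue carefully that firing an $\is_2$-rule $A'\imp b$ only ever happens once $A'$ (hence $c$, via $c\in\cl(d)\subseteq\cl(A')$) is already ``committed'', so these spurious rules add nothing that a genuine $c$-derivation in $\is$ did not already license. A secondary subtlety is checking that the forward direction really does fill \emph{all} of $\U_c$ and not just $\U_c\setminus\cl^b(A')$ for the first such $A'$; this needs the binary implications $\is_1\cap\is^b$ plus possibly iterating over several $\is_2$-triggering steps, and I would handle it by observing that once $c\in\cl(A)$ the $\is$-run from $A$ reaches every element of $\cl(A)$, in particular $\U_c\cap\cl(A)$, and that the complement $\U_c\setminus\cl(A)=\emptyset$ would contradict... — more precisely, one shows $\cl(A)\supseteq\U_c$ is impossible only if $\U_c\subseteq\cl(A)$ fails, so I instead directly prove $\cl_c(A)=\U_c$ by exhibiting, for each $x\in\U_c$, an $\is_c$-derivation, splitting on whether $x\in\cl^b(A')$ for the triggering premise or is supplied by a later $\is_2$-rule. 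This bookkeeping is routine but is where the care goes.
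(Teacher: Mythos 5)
Your proposal follows essentially the same route as the paper: the core of both arguments is the closure bridge ``$c\in\cl(S)$ iff $\cl_c(S)=\U_c$ for $S\subseteq\U_c$'', proved by tracking the forward chaining and case-splitting on whether (and when) an $\is_2$-implication first fires, with Remark~\ref{rem:Xc} covering the degenerate case, followed by transferring the $D$-selection criterion using $\cl_c^b=\cl^b$ on subsets of $\U_c$. The only cosmetic difference is that you route the final upgrade through (a key-analogue of) Lemma~\ref{lem:D-carac}, whereas the paper argues directly from the definitions of $D$-generator and $D$-minimal key; both work.
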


\begin{proof}
As a preliminary observation, remark that both $D$-minimal keys of $(\U_c, \cs_c)$ and $D$-generators of $c$ lie in $\U_c$ by definition.

We prove that for every $S \subseteq \U_c$, $\cl_c(S) = \U_c$ if and only if $c \in \cl(S)$.
Suppose first that $c \in \cl(S)$ and let $S = S_0, \dots, S_k = \cl(S)$ be the sequence of sets obtained using forward chaining on $S$ with $\is$.
Since $S \subseteq \U_c$ and $c \in \cl(S)$, there is some $1 \leq i \leq k$ and some implication $A \imp d \in \is$ such that $A \subseteq \U_c$, $A \subseteq S_{i - 1}$, $d \notin \U_c$ and $d \notin S_{i - 1}$.
Consider the least such $i$.
By assumption, $S_0, S_1, \dots, S_{i-1}$ are included in $\U_c$, and hence the implications used to build these sets belong to $\is_1$.
Thus, they will be used in the forward chaining on $S$ with $\is_c$ too so that $A \subseteq \cl_c(S)$ holds.
Because $A \imp b \in \is_2$ for every $b \in \U_c \setminus \cl^b(A)$, we deduce that $\cl_c(S) = \U_c$ as expected.

Now assume that $\cl_c(S) = \U_c$.
By Remark~\ref{rem:Xc}, $c \in \cl(\U_c)$.
Again, consider the forward chaining on $S$ with $\is_c$.
In order to obtain $\U_c$, we have two cases:
\begin{enumerate}[(1)]
    \item an implication $A \imp b$ of $\is_2$ is used.
    It corresponds to an implication $A \imp d \in \is$ such that $d \notin \U_c$.
    Consider the first such implication encountered in the forward chaining.
    Then, all implications used before $A \imp b$ belong to $\is$.
    In particular, $A \imp d$ will be used in when applying forward chaining on $S$ with $\is$, so that $d \in \cl(S)$.
    Since $d \notin \U_c$ entails $c \in \cl(d)$, we deduce $c \in \cl(S)$.
    
    \item every implication used in the forward chaining, if any, belongs to $\is_1$ (hence to $\is$).
    In this case, $\U_c \subseteq \cl_c(S)$ implies $\U_c \subseteq \cl(S)$.
    This holds in particular if $S = \U_c$, which covers the case where $\is_c = \emptyset$.
    Since $\U_c$ is assumed not to be closed in $\cs$, it contains at least one $D$-generator of $c$.
    We deduce $c \in \cl(S)$ as expected.
\end{enumerate}
This concludes the proof that $c \in \cl(S)$ if and only if $\cl_c(S) = \U_c$.
In particular, it holds that $S$ is a minimal generator of $c$ included in $\U_c$ if and only if $S$ is a minimal key of $(\U_c, \cs_c)$.
As the binary part of $\is$ involved in $D$-generators of $c$ is included in $\is_c$, it further holds that $S$ is a $D$-generator of $c$ if and only if it is a $D$-minimal key of $(\U_c, \cs_c)$.
\end{proof}

In \cite{ennaoui2025polynomial}, the authors show how to list the $D$-minimal keys of a standard closure system using solution graph traversal.
As a consequence of Lemma~\ref{lem:link-minimal-D}, we can adapt their approach to the problem of finding $\genD(c)$.
In particular, we rephrase in our framework a key statement that lays the ground for defining a suitable transition function.
For self-containment, we give a proof of this statement in our setup in \ref{app:transition}.

\begin{lem}[restate=TRA, label=lem:transition]{\cite[Theorem 8]{ennaoui2025polynomial}}
Let $(\U, \is)$ be a standard IB, let $c \in \U$ and let $\cc{S}$ be a non-empty subset of $\genD(c)$.
Then, $\cc{S} \neq \gen_D(c)$ if and only if there exists some $A$ in $\cc{S}$ and an implication $B \imp d \in \is_c$, $\card{B} \geq 2$, such that $\cl_c^b((\cl_c^b(A) \setminus \cl_c^b(d)) \cup B)$ does not contain any set of $\cc{S}$.
\end{lem}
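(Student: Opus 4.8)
By Lemma~\ref{lem:link-minimal-D}, $\genD(c)$ is exactly the set of $D$-minimal keys of the standard IB $(\U_c, \is_c)$, and for every $A \in \genD(c)$ we have $\cl_c(A) = \U_c$ (it is a minimal key) and $\cl_c^b(A) = \cl^b(A)$. I will use repeatedly that $\cl_c^b$ is union-distributive, i.e.\ $\cl_c^b(S \cup T) = \cl_c^b(S) \cup \cl_c^b(T)$; that the lattice of $\cl_c^b$-closed sets is distributive, so each of its members has a unique minimal spanning set; that a minimal key of $(\U_c, \is_c)$ is a minimal spanning set in $(\U_c, \is_c^b)$, so two distinct members of $\genD(c)$ have incomparable $\cl_c^b$-closures; and that $\cl_c$-closed sets are $\cl_c^b$-closed. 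Throughout, write $N(A, B \imp d) := \cl_c^b((\cl_c^b(A) \setminus \cl_c^b(d)) \cup B)$; this set is $\cl_c^b$-closed and contained in $\cl_c^b(A) \cup \cl_c^b(B)$.

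\textbf{The direction ($\Leftarrow$).} Suppose such an $A \in \cc{S}$ and $B \imp d$ exist and put $N := N(A, B\imp d)$. Since $B \subseteq N$ and $B \imp d \in \is_c$, one round of forward chaining gives $d \in \cl_c(N)$, hence $\cl_c^b(d) \subseteq \cl_c(N)$; combining this with $\cl_c^b(A) \setminus \cl_c^b(d) \subseteq N$ yields $\cl_c^b(A) \subseteq \cl_c(N)$, so $A \subseteq \cl_c(N)$ and therefore $\cl_c(N) = \U_c$ because $\cl_c(A) = \U_c$. Thus $N$ is a spanning set of $(\U_c, \is_c)$. Reduce $N$ to a minimal key inside it, and then, among all minimal keys whose $\cl_c^b$-closure is contained in $N$, choose one, $K^*$, with inclusion-minimal $\cl_c^b$-closure: using uniqueness of minimal spanning sets in the distributive lattice $\cs_c^b$, $K^*$ is a $D$-minimal key, hence $K^* \in \genD(c)$ by Lemma~\ref{lem:link-minimal-D}. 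Since $N$ is $\cl_c^b$-closed, $K^* \subseteq \cl_c^b(K^*) \subseteq N$, so the hypothesis that $N$ contains no member of $\cc{S}$ forces $K^* \notin \cc{S}$; hence $\cc{S} \subsetneq \genD(c)$.

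\textbf{The direction ($\Rightarrow$).} I prove the contrapositive: assuming that $N(A, B \imp d)$ contains some member of $\cc{S}$ for every $A \in \cc{S}$ and every $B \imp d \in \is_c$ with $\card{B} \geq 2$, I show $\cc{S} = \genD(c)$. If not, fix $A^* \in \genD(c) \setminus \cc{S}$ and set $P := \cl_c^b(A^*)$; note $P$ is a proper $\cl_c^b$-closed set, downward closed for the preorder $x \preceq y \iff x \in \cl_c(y)$, with $\cl_c(P) = \U_c$. Equip $\cc{S}$ with a well-founded measure $\mu(A)$ built from $\cl_c^b(A) \setminus P$ (its cardinality, or its characteristic tuple along a fixed linear extension of $\preceq$, compared lexicographically); since $A \neq A^*$ forces $\cl_c^b(A) \nsubseteq P$, we have $\mu(A) > 0$ for all $A \in \cc{S}$. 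Pick $A \in \cc{S}$ with $\mu(A)$ minimal. The key claim is that there is $B \imp d \in \is_c$ with $\card{B} \geq 2$ such that every $D$-minimal key contained in $N(A, B\imp d)$ has strictly smaller $\mu$-value than $A$. Granting it, the standing hypothesis supplies some $A'' \in \cc{S} \subseteq \genD(c)$ with $A'' \subseteq N(A, B\imp d)$, and $\mu(A'') < \mu(A)$ contradicts minimality; hence $\cc{S} = \genD(c)$.

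\textbf{The claim, and the main obstacle.} To build the transition, let $d$ be a $\preceq$-maximal element of $\cl_c^b(A)$ lying in $\cl_c^b(A) \setminus P$; such $d$ exists because $\cl_c^b(A) \setminus P$ is non-empty and upward closed inside $\cl_c^b(A)$, and maximality gives that $\cl_c^b(A) \setminus \{d\}$ is $\cl_c^b$-closed, so $\cl_c^b(\cl_c^b(A) \setminus \cl_c^b(d)) \subseteq \cl_c^b(A) \setminus \{d\}$. Since $d \in \cl_c(P) = \U_c$ but $d \notin P$, a forward-chaining run from $P$ eventually introduces $d$; I take a non-binary implication $B \imp d' \in \is_c$ used in this run with $d \in \cl_c^b(d')$, and use $B \imp d'$ as the transition. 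Then $d \notin N(A, B\imp d')$, yet $N(A, B\imp d')$ stays $\cl_c^b$-closed and spanning (same computation as in ($\Leftarrow$)), so any $D$-minimal key inside it must do without the slot freed by $d$ and hence be strictly leaner for $\mu$. The delicate point — the main obstacle — is that $B$ need only lie in the forward-chaining closure of $P$, not in $P$ itself, so the elements of $\cl_c^b(B) \setminus P$ must be kept under control (shown to be $\prec d$-bounded, or redundant in every minimal key of $N(A,B\imp d')$); reconciling the choice of $d$, the implication, and the measure $\mu$ so that this goes through is the technical heart of the argument, paralleling the analysis of \cite{ennaoui2025polynomial}, and is carried out in detail below.
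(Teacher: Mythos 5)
Your ``if'' direction is correct and essentially the paper's argument. The ``only if'' direction, however, is not a proof. You reduce everything to a ``key claim'' about a well-founded measure $\mu$ and a carefully chosen transition, and then state that the technical heart ``is carried out in detail below'' --- but nothing follows. The obstacle you name is real, and it is created by your own setup: by anchoring the argument at $P = \cl_c^b(A^*)$, the non-binary implication $B \imp d'$ you extract from a forward-chaining run need not satisfy $B \subseteq P$, so you cannot control $\cl_c^b(B) \setminus P$, and the set $N(A, B \imp d')$ has no reason to avoid $\cc{S}$. Note also that the statement requires exhibiting a pair $(A, B \imp d)$ whose neighbour set contains \emph{no} member of $\cc{S}$; your sketch never establishes this and instead aims at a $\mu$-decrease, which is a different property.

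The missing idea is a maximality argument, which is how the paper proceeds. Take $F_b$ to be an inclusion-wise \emph{maximal} $\cl_c^b$-closed subset of $\U_c$ that contains $A'$ (your $A^*$) but no member of $\cc{S}$. Then $F_b \subsetneq \U_c$ (since $\cc{S} \neq \emptyset$) while $\cl_c(F_b) = \U_c$ (since $F_b \supseteq A'$ and $A'$ is a key), so $F_b$ is not $\cl_c$-closed; because it \emph{is} $\cl_c^b$-closed, any applicable implication $B \imp d \in \is_c$ with $B \subseteq F_b$ and $d \notin F_b$ is automatically non-binary --- and its premise lies inside $F_b$, which is exactly the control you were missing. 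Maximality of $F_b$ forces some $A \in \cc{S}$ with $\cl_c^b(A) \subseteq \cl_c^b(F_b \cup \{d\}) = F_b \cup \cl_c^b(d)$ (by distributivity of $(\cs_c^b,\subseteq)$), whence $\cl_c^b(A) \setminus \cl_c^b(d) \subseteq F_b$ and $N(A, B \imp d) \subseteq F_b$, which contains no member of $\cc{S}$ by construction. This pair witnesses the claim directly, with no measure and no induction.
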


In words, Lemma~\ref{lem:transition} states that the set $\cc{S}$ is incomplete if and only if there is a closed set in $\cs_c^b$ which includes an unseen $D$-generator.
Indeed, based on Lemma~\ref{lem:gen-D-dual}, $\{\cl_c^b(A) \st A \in \genD(c) \}$ defines an antichain of $(\cs_c^b, \subseteq)$.
If the antichain $\{\cl_c^b(A) \st A \in \cc{S}\}$ associated to $\cc{S}$ is incomplete, then there is a closed set $F \in \cs_c^b$ which is in $\ftr \{\cl_c^b(A) \st A \in \genD(c) \}$ but not in $\ftr \{\cl_c^b(A) \st A \in \cc{S}\}$.
The lemma further shows that such a closed set can be identified by substituting elements from the $\cl_c^b$-closure of a solution in $\cc{S}$ for a premise in $\is_c$.
Henceforth, in view of Theorem~\ref{thm:solution-graph} and according to Lemma~\ref{lem:transition}, we define the solution graph $\cc{G}(c)$ with vertices $\genD(c)$ and transition function $\cc{N}$ defined by:
\[ 
\cc{N}(A) = \left\{ \ctt{Min}(\cl_c^b((\cl_c^b(A) \setminus \cl_c^b(d)) \cup B)) \st B \imp d \in \is_c, \card{B} \geq 2\right\}.
\]
Here, $\ctt{Min}$ is a greedy algorithm that computes a $D$-generator (of $c$) from any $\cl_c^b$-closed generator of $c$.
To do so, it removes the first extreme element of the current closed set (recall that $(\U_c, \cs_c^b)$ is a convex geometry) such that the resulting closed set still subsumes a $D$-generator of $c$.
To determine which extreme point is the first, we use a linear extension on $X_c$ of the natural order induced by $\cl^b_c$.

When the algorithm reaches a closed set of $\cs_c^b$ whose extreme elements satisfy Lemma~\ref{lem:D-carac}, it outputs the corresponding $D$-generator.
The whole procedure can be conducted in polynomial time given $(\U, \is)$ and $(\U_c, \is_c)$.

\begin{rem} \label{rem:order}
Since for every $c \in \U$, $\is_c^b \subseteq \is^b$, we can use the same ordering of $\U$ which is then restricted to each $\U_c$.
\end{rem}

Now, observe that $\cc{G}(c)$ is strongly connected.
Let $\cc{S}$ be a non-empty subset of $\genD(c)$.
Based on the definition of $\cc{N}$, Lemma~\ref{lem:link-minimal-D} states that either $\cc{S} = \genD(c)$ or $\cc{S}$ has an unseen neighbor w.r.t.~$\cc{N}$.
Henceforth, the process of starting from $\cc{S} = \{A\}$ for some $D$-generator $A$ of $c$ and repeatedly applying Lemma~\ref{lem:link-minimal-D} to obtain a new neighbor of $S$ will eventually reach each generator in $\genD(c)$.
As we can use this strategy starting from any generator, we deduce that $\cc{G}(c)$ is strongly connected.
Moreover, a first $D$-generator can be computed in polynomial time using $\ctt{Min}$ on $\U_c$.
Finally $\cc{N}(A)$ can be generated in polynomial time.
Hence, all the requirements of Theorem~\ref{thm:solution-graph} are satisfied.
Computing $(\U_c, \is_c)$ in polynomial time at pre-processing, we obtain:

\begin{thm} \label{thm:gen-c-polydelay}
Let $(\U, \is)$ be an implicational base of a standard closure system $(\U, \cs)$ and let $c \in \U$.
Then, there is a polynomial-delay algorithm which computes $\genD(c)$ from $(\U, \is)$. 
\end{thm}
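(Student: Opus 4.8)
The plan is to instantiate the solution graph traversal framework of Theorem~\ref{thm:solution-graph} with input $x=(\U,\is,c)$, groundset $\U_c$, solution set $\cc{S}(x)=\genD(c)$, and the transition function $\cc{N}$ displayed just above the statement. By Lemma~\ref{lem:link-minimal-D}, enumerating $\genD(c)$ in $(\U,\is)$ is the same as enumerating the $D$-minimal keys of $(\U_c,\is_c)$, and by Remark~\ref{rem:Xc} we may assume $\U_c\notin\cs$ (otherwise $\genD(c)=\emptyset$ and there is nothing to output). The algorithm first computes $(\U_c,\is_c)$ and a linear extension of the order induced by $\cl^b_c$ on $\U_c$; both take polynomial time in the size of $(\U,\is)$, and by Remark~\ref{rem:order} the same ordering (suitably restricted) can be reused for every $c$. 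It then remains to verify the three conditions of Theorem~\ref{thm:solution-graph}.

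For the first condition, Remark~\ref{rem:Xc} gives $c\in\cl(\U_c)$, so $\U_c$ is a $\cl^b_c$-closed generator of $c$; running $\ctt{Min}$ on $\U_c$ produces an initial element of $\genD(c)$ in polynomial time, using Lemma~\ref{lem:D-carac} to certify that the set where $\ctt{Min}$ stops is indeed a $D$-generator. For the second condition, $\cc{N}(A)$ has at most $\card{\is_c}$ members, each obtained from $A$ by a bounded number of forward-chaining computations with $\is_c$ followed by one call to $\ctt{Min}$; I would note here that every set so produced lies in $\genD(c)$ because $\cl^b_c((\cl^b_c(A)\setminus\cl^b_c(d))\cup B)$ is a $\cl^b_c$-closed generator of $c$ (this is part of the material underlying Lemma~\ref{lem:transition}, borrowed from \cite{ennaoui2025polynomial}) and $\ctt{Min}$ turns any such set into a genuine $D$-generator. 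Hence $\cc{N}(A)$ is computable in polynomial time.

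The crux is the third condition, strong connectivity of $\cc{G}(c)$. Here I would argue: let $\cc{S}$ be any non-empty proper subset of $\genD(c)$. By Lemma~\ref{lem:transition} there are $A\in\cc{S}$ and an implication $B\imp d\in\is_c$ with $\card{B}\geq 2$ such that $F:=\cl^b_c((\cl^b_c(A)\setminus\cl^b_c(d))\cup B)$ contains no member of $\cc{S}$ as a subset. Then $A':=\ctt{Min}(F)\in\cc{N}(A)$, and since $\cl^b_c(A')\subseteq F$, no member of $\cc{S}$ is contained in $A'$; in particular $A'\notin\cc{S}$. Therefore any traversal of $\cc{G}(c)$ started at a single vertex must, on termination, have visited all of $\genD(c)$: otherwise the visited set would be a non-empty proper subset with no $\cc{N}$-neighbour outside it, contradicting what we just derived. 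As the starting vertex is arbitrary, $\cc{G}(c)$ is strongly connected, and Theorem~\ref{thm:solution-graph} (with the polynomial-time pre-processing above) yields the claimed polynomial-delay algorithm.

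I expect the main obstacle to be the bookkeeping that makes $\cc{N}$ well defined with genuinely new neighbours: verifying that $\ctt{Min}$ always returns an element of $\genD(c)$ (relying on $(\U_c,\cs^b_c)$ being a convex geometry and on the fixed linear extension) and that the set $F$ supplied by Lemma~\ref{lem:transition} really yields, via $\ctt{Min}$, a $D$-generator outside $\cc{S}$. Once Lemma~\ref{lem:transition} is available (proved in~\ref{app:transition}) these are the only delicate points; the remainder is routine checking of the hypotheses of Theorem~\ref{thm:solution-graph}.
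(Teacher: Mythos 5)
Your proposal is correct and follows essentially the same route as the paper: reduce to $D$-minimal keys of $(\U_c,\is_c)$ via Lemma~\ref{lem:link-minimal-D}, use $\ctt{Min}$ for the initial solution and for normalizing neighbours, and derive strong connectivity of $\cc{G}(c)$ from Lemma~\ref{lem:transition} before invoking Theorem~\ref{thm:solution-graph}. Your reachable-set argument for strong connectivity is in fact a slightly more careful rendering of the paper's own sketch.
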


\begin{exam} \label{ex:solution-graph}
Let $\U = \{1, 2, 3, 4, 5, 6, 7, 8\}$ and consider the IB $(\U, \is)$ defined by:
\[ 
\is = 
\left\{ 
\begin{array}{l}
4 \imp 3, \\ 
3 \imp 2, \\
2 \imp 1
\end{array} \right\} \cup 
\left\{
\begin{array}{l l}
15 \imp 2, & 16 \imp 2, \\
27 \imp 3, & 28 \imp 3, \\
36 \imp 4, & 37 \imp 4
\end{array}
\right\}
\]
The corresponding closure system $(\U, \cs)$ is standard.
The elements of $\U$ admitting $D$-generators are $2$, $3$ and $4$:
\begin{align*}
\genD(2) = & \; \{15, 16\}, \\ 
\genD(3) = & \; \{157, 158, 167, 168, 27, 28\}, \\
\genD(4) = & \; \{157, 167, 168, 27, 36\}.
\end{align*}
We consider the element $4$.
We have $\U_4 = \U \setminus \{4\}$ and:
\[ 
\is_4 = 
\left\{ 
\begin{array}{l} 
3 \imp 2, \\
2 \imp 1
\end{array} \right\} \cup 
\left\{
\begin{array}{l l}
15 \imp 2, & 16 \imp 2, \\
27 \imp 3, & 28 \imp 3 \\
\end{array}
\right\}
\cup
\left\{ 
\begin{array}{l l l}
36 \imp 5, & 36 \imp 7, & 36 \imp 8, \\
37 \imp 5, & 37 \imp 6, & 37 \imp 8
\end{array}
\right\}.
\]
The solution graph $\cc{G}(4)$ is given in Figure~\ref{fig:solution-graph}.
We use the ordering $1 < 2 < \dots < 8$ for the \ctt{Min} procedure.
Remark that it complies with $\cl_4^b$.
For instance, there is a transition from $167$ to $168$, i.e., $168 \in \cc{N}(167)$, given by the implication $28 \to 3$ (highlighted in the figure).
We have that $167$ is a $D$-generator of $4$, so that $\cl_4^b((\cl_4^b(167) \setminus \cl_4^b(3)) \cup 28) = 12678$ must contain a $D$-generator of $4$ as $1 \in \cl_4^b(3)$. 
The \ctt{Min} procedure will remove $2$ and $7$ in that order.
We obtain $168$ as a new $D$-generator (since $168$ is at the same time a $D$-generator and a closed w.r.t.~ $\cl_4^b$).
\begin{figure}[ht!]
    \centering
    \includegraphics[page=1, scale=0.9]{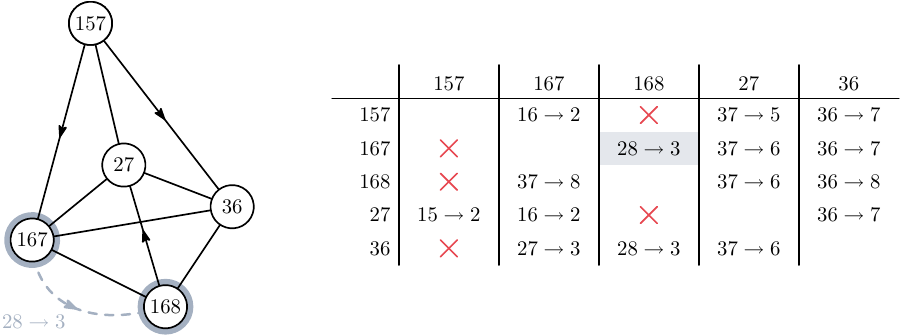}
    \caption{On the left, the solution graph $\cc{G}(4)$ of Example~\ref{ex:solution-graph}. An edge without arrows indicate transitions in both directions. On the right, a table that gives an example of implication that can be used for each transition. For instance, the implication $28 \to 3$ yields a transition from $167$ to $168$ (highlighted in grey). A cross means that the transition is not possible. There may be other possible choices of implications, and self-loops are omitted for simplicity.}
    \label{fig:solution-graph}
\end{figure}
We represent the three solution graphs $\cc{G}(2), \cc{G}(3)$, $\cc{G}(4)$ in Figure~\ref{fig:solution-graph-2}.
We note that the ordering used for $\U_2$, $\U_3$ is a restriction of the same ordering on $X$ that was used for $\U_4$, per Remark~\ref{rem:order}.

\begin{figure}[ht!]
    \centering
    \includegraphics[page=2, scale=0.9]{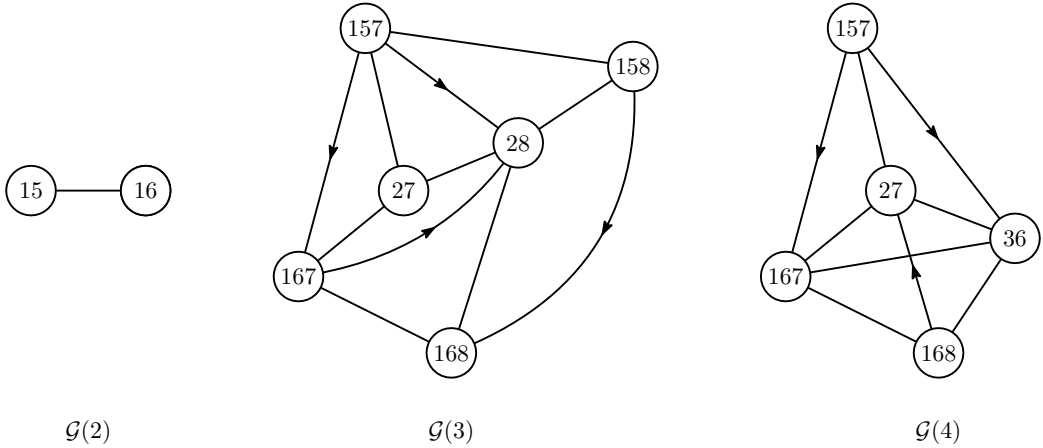}
    \caption{The three solution graphs of Example~\ref{ex:solution-graph}.}
    \label{fig:solution-graph-2}
\end{figure}
\end{exam}

We build upon previous discussions to give an algorithm finding all the $D$-generators of $(\U, \is)$ and outputting the corresponding implications of the $D$-base.
The strategy of sequentially applying Theorem~\ref{thm:gen-c-polydelay} to each $x \in \U$ may not guarantee polynomial delay as a same set $A$ may be a $D$-generator of several distinct elements.
This is the case in Example~\ref{ex:solution-graph} where we have $\genD(3) \cap \genD(4) = \{157, 167, 168, 27\}$.
Instead, we will use the graphs $\cc{G}(x)$ to build a new solution graph $\cc{G}(\U)$ on which we will be able to apply Theorem~\ref{thm:solution-graph}.
The graph $\cc{G}(\U)$ has vertices $\bigcup_{c \in \U} \genD(c)$ and its transition function is defined as:
\[
\cc{N}(A) = \bigcup_{\substack{c \in \U, \\ A \in \genD(c)}} \left\{ \ctt{Min}(\cl_c^b((\cl_c^b(A) \setminus \cl_c^b(d)) \cup B)) \st B \imp d \in \is_c, \card{B} \geq 2\right\}
\]
In other words, $\cc{G}(\U) = \bigcup_{c \in \U} \cc{G}(c)$.
In Figure~\ref{fig:solution-graph-3}, we show the solution graph $\cc{G}(\U)$ associated with Example~\ref{ex:solution-graph}.

\begin{figure}[ht!]
    \centering
    \includegraphics[page=3, scale=0.9]{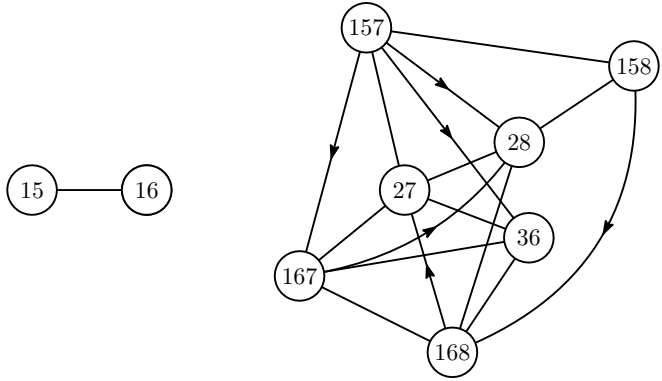}
    \caption{The solution graph $\cc{G}(\U)$ of Example~\ref{ex:solution-graph}.
    It the union of two strongly connected components: one associated to $\cc{G}(2)$ (on the left), and one associated to $\genD(3)$ and $\genD(4)$ (on the right).}
    \label{fig:solution-graph-3}
\end{figure}

The graph $\cc{G}(\U)$ is an union of strongly connected components by Lemma~\ref{lem:transition}.
The transition function can be computed in polynomial time as it is the union of polynomial-time computable transition functions.
Finally we can compute in polynomial time a $D$-generator for each strongly connected component.
Henceforth, Theorem~\ref{thm:solution-graph} applies to each such component of $\cc{G}(\U)$.
This gives an algorithm which consists in traversing each strongly connected component one after the other.
At the moment of outputting a set $A$, the algorithm outputs the implications $A \imp c$ such that $A \in \genD(c)$.
We thus obtain Theorem~\ref{thm:D-base-IB}:

\DBIB*

\begin{proof}
As a pre-processing step, we compute $(\U, \is^b)$ and $(\U_c, \is_c)$ for each $c$ such that $\genD(c) \neq \emptyset$.
This can be done in polynomial time (see Section~\ref{sec:preliminaries}).
Then, we apply Theorem~\ref{thm:solution-graph} on each strongly connected component of $\cc{G}$ one after the other.
At the end of the traversal of a strongly connected component, we have to detect that there is no component left.
On this purpose, it is sufficient to test whether a $D$-generator have been output for each $c$ such that $\genD(c) \neq \emptyset$.
If this is the case, the $D$-base has been found completely.
Otherwise, we use $\ctt{Min}(\U_c)$ on a $c$ whose $D$-generators have not been enumerated to start a new Breadth-First Search in one of the remaining components.
\end{proof}

We remark that, much as Theorem~\ref{thm:gen-c-polydelay}, the algorithm of Theorem~\ref{thm:D-base-IB} may require an exponential amount of space, notably to avoid repetitions.
The question of whether the same time complexity can be achieved with polynomial space is intriguing.
However, answering this question in the affirmative would at the same time give an algorithm to list all the minimal keys of a set of functional dependencies with polynomial delay and polynomial space, a long-standing open question \cite{ennaoui2025polynomial, lucchesi1978candidate}.

\section{Conclusion} \label{sec:conclusion}

In this paper, we have investigated complexity aspects of the $D$-base and $D$-relation of a closure system.
We have shown that computing the $D$-relation from an IB is $\NP$-complete even in the acyclic setup.
Besides, we gave output sensitive algorithms to compute the $D$-base from both an IB and meet-irreducible elements.
Namely, we gave a polynomial-delay algorithm to find the $D$-base from an arbitrary IB, and an output-quasi-polynomial time algorithm for recovering the $D$-base from meet-irreducible elements.
These results answer Questions~\ref{ques:D-relation},~\ref{ques:D-base-Mi} and~\ref{ques:D-base-IB} and hence complete the complexity landscape of the $D$-relation and $D$-base.
However, the properties of the $D$-relation could be further studied.
Indeed, the fact that lower bounded closure systems are characterized by an acyclic $D$-relation \cite{freese1995free} suggests that other properties of closure systems may be embedded in the structure of the $D$-relation.
This leads to the following intriguing question:

\begin{ques}
Are there any further properties of closure systems that can be derived directly from the $D$-relation?
\end{ques}

Another future research direction concerns the $E$-base and the $E$-relation \cite{adaricheva2014implicational, adaricheva2017optimum}.
The $E$-base is a further refinement of the $D$-base, and the $E$-relation is defined in a similar way as the $D$-relation from $D$-generators.
More precisely, a $D$-generator of an element $c$ is an $E$-generator of $c$ if its closure in the underlying closure system is minimal among the closures of the other $D$-generators of $c$.
This makes the $E$-base connected with critical circuits of convex geometries \cite{korte2012greedoids}.
Unlike the $D$-base though, the $E$-base does not always constitute a valid IB of its associated closure system.
Henceforth, the most intriguing question regarding the $E$-base is:

\begin{ques}
Which closure systems have valid $E$-base?
\end{ques}

Besides, the questions regarding the complexity of finding the $E$-base or the $E$-relation given an IB or meet-irreducible elements, much as what has been done for the $D$-base, are left to be answered.

\bibliographystyle{alphaurl}
\bibliography{biblio}

\newpage 

\appendix
\section{Proof of Theorem \ref{thm:D-relation-ACG}}
\label{app:D-relation-ACG}

In this section, we give the complete proof of Theorem \ref{thm:D-relation-ACG} which we first restate.

\DACG*  

We showed in Corollary \ref{cor:DB-M} that the problem belongs to $\NP$.
To show $\NP$-hardness, we use a reduction from the problem $1$-in-$3$-SAT:

\begin{decproblem}
\problemtitle{1-in-3 SAT}
\probleminput{A positive $3$-CNF $\varphi = \{C_1, \dots, C_m\}$ over variables $V = \{v_1, \dots, v_n\}$.}
\problemquestion{Is there a truth assignment of the variables in $V$ which is a model of $\varphi$ and such that each clause of $\varphi$ has exactly one true literal?}
\end{decproblem}

Let $\varphi = \{C_1, \dots, C_m\}$ be a non-trivial positive $3$-CNF over $V = \{v_1, \dots, v_n\}$.
For simplicity, we view clauses as sets of variables.
Moreover, we identify an assignment of $V$ to the sets of its variables assigned 1.
Consequently, a 1-in-3 assignment of $V$ is a set of variables $T \subset V$ such that $\card{C_i \cap T} = 1$ for each clause $C_i$ of $\varphi$.
Two variables $v_i, v_j$ are in conflict if there exists a clause $C_k$ in $\varphi$ that contains both $v_i$ and $v_j$.
For each $1 \leq i \leq m$, we define $\varphi_i = \{C_i, \dots, C_m\}$.
A subset $T$ of $V$ is $\varphi$-conflict-free (resp. $\varphi_i$-conflict-free) if it does not contain variables in conflict among the clauses of $\varphi$ (resp. $\varphi_i$).

We build an instance of $D$-relation identification.
Let $C = \{c_1, \dots, c_m, c_{m + 1}\}$ and let $\U = C \cup V$.
An element $c_i$, $1 \leq i \leq m$ represents the clause $C_i$.
The element $c_{m + 1}$ is another gadget element.
Now consider the set $\is = \is_{\mathit{mod}} \cup \is_{\mathit{conf}}$ of implications, where:
\begin{align*}
    \is_{\mathit{mod}} & = \{c_i v_j \imp c_{i + 1} \st 1 \leq i \leq m \text{ and } v_j \in C_i\} \\
    \is_{\mathit{conf}} & = \{v_i v_j \imp c_{m + 1} \st v_i, v_j \in V \text{ and } v_i, v_j \text{ are in conflict in } \varphi\}
\end{align*}
Informally, $\is_{\mathit{mod}}$ models the fact that whenever the $i$-th clause is satisfied, we can proceed to the $(i+1)$-th clause or $c_{m + 1}$, and $\is_{\mathit{conf}}$ represents the conflicts induced by the variables appearing together in a clause.

We take $(\U, \is)$ as input to our problem, together with $c_1$ and $c_{m + 1}$.
Note that $(\U, \is)$ can be constructed in polynomial time in the size of $\varphi$ and $V$.
The premises of $\is$ have size $2$ and $G(\is)$ has no cycle.
Hence, $(\U, \cs)$ is an acyclic closure system represented by an IB with premises of size at most $2$ as required.

We argue that $c_1$ belongs to a $D$-generator of $c_{m + 1}$ if and only if there exists a 1-in-3 assignment $T$ of $\varphi$.
As a preliminary observation, note that $\is$ has no binary implications.
Hence, $\gen(c_{m + 1}) = \gen_D(c_{m + 1})$ and we can restrict our analysis to minimal generators of $c_{m + 1}$ only.
Remark that by definition of $\is$, $\gen(c_{m + 1}) \neq \emptyset$ and for every $A \in \gen(c_{m + 1})$, $\card{A} \geq 2$.
We start our analysis with a useful proposition.

\begin{prop} \label{prop:cascade}
For any $A \subseteq \U$ such that $A \setminus C$ is $\varphi$-conflict-free and any $1 < i \leq m +1$, $c_i \in \cl(A)$ if and only if $c_i \in A$ or $c_{i - 1} \in \cl(A)$ and $C_{i - 1} \cap A \neq \emptyset$.
\end{prop}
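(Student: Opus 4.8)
The plan is to exploit two structural features of the reduction: every implication of $\is = \is_{\mathit{mod}} \cup \is_{\mathit{conf}}$ has its conclusion in $C$, and the hypothesis that $A \setminus C$ is $\varphi$-conflict-free. Together these pin down exactly how $\cl(A)$ is built by forward chaining. Concretely, I would first prove the auxiliary fact $\cl(A) \cap V = A \cap V$: since the conclusion of every implication in $\is$ lies in $C$, the forward chaining procedure started from $A$ never adds a variable, so $\cl(A) \setminus A \subseteq C$, and the reverse inclusion is immediate from $A \subseteq \cl(A)$. From this and conflict-freeness of $A \cap V = A \setminus C$, no implication of $\is_{\mathit{conf}}$ can ever be triggered: the premise of such an implication is a pair $\{v_i, v_j\}$ of conflicting variables, and for it to appear in the chain it would have to be contained in $\cl(A) \cap V = A \cap V$, contradicting conflict-freeness. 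Hence $\cl(A)$ is obtained using implications of $\is_{\mathit{mod}}$ only, and every element of $\cl(A) \setminus A$ (in particular every $c_k$ with $c_k \notin A$) is produced by some implication $c_{k-1} v_j \imp c_k$ with $v_j \in C_{k-1}$.

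With this in place, both directions of the equivalence are short. For the backward implication, if $c_i \in A$ then $c_i \in \cl(A)$ trivially; and if $c_{i-1} \in \cl(A)$ with some $v_j \in C_{i-1} \cap A$, then the implication $c_{i-1} v_j \imp c_i$ lies in $\is_{\mathit{mod}}$ (here $1 \le i-1 \le m$ because $1 < i \le m+1$), its premise $\{c_{i-1}, v_j\}$ is contained in $\cl(A)$ since $v_j \in A \subseteq \cl(A)$, and therefore $c_i \in \cl(A)$. For the forward implication, suppose $c_i \in \cl(A)$ with $c_i \notin A$; since $i > 1$, $c_i$ must have been generated during forward chaining, and by the auxiliary fact this can only have happened through an implication of $\is_{\mathit{mod}}$ of the form $c_{i-1} v_j \imp c_i$ with $v_j \in C_{i-1}$ whose premise is a subset of $\cl(A)$. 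This yields $c_{i-1} \in \cl(A)$ and $v_j \in \cl(A) \cap V = A \cap V$, hence $v_j \in C_{i-1} \cap A$ and in particular $C_{i-1} \cap A \neq \emptyset$.

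The only delicate point — and the main (modest) obstacle — is making the ``only $\is_{\mathit{mod}}$ fires'' argument airtight, especially for $i = m+1$, where $c_{m+1}$ is also the conclusion of the conflict implications; but the identity $\cl(A) \cap V = A \cap V$ combined with conflict-freeness rules those out uniformly, so the argument covers all $1 < i \le m+1$ without case distinction. I would present the auxiliary fact as a short separate claim and then deduce the proposition as above.
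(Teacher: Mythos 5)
Your proof is correct and follows essentially the same route as the paper's: both argue via forward chaining that conflict-freeness of $A \setminus C$ prevents any implication of $\is_{\mathit{conf}}$ from firing, so any $c_i \in \cl(A) \setminus A$ must be produced by an implication $c_{i-1}v_j \imp c_i$ of $\is_{\mathit{mod}}$, and that $v_j \in A$ because no implication has a variable as its conclusion. Your only addition is to isolate the identity $\cl(A) \cap V = A \cap V$ as an explicit preliminary claim, which the paper uses implicitly; this is a presentational improvement, not a different argument.
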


\begin{proof}
The if part follows from the definition of $\is$.
We prove the only if part.
Assume that $c_i \in \cl(A)$ but $c_i \notin A$.
Then, in the forward chaining $A = A_0, \dots, A_k = \cl(A)$ applied on $A$ with $\is$, there will be some $A_j$ and some implication $B \imp c_i$ such that $B \subseteq A_j$ and $c_i \notin A_j$, so that $c_i \in A_{j + 1}$. 
Since $A \setminus C$ is $\varphi$-conflict-free, $B \imp c_i$ must be of the form $c_{i - 1} v_j \imp c_i$ for some $v_j \in C_{i - 1}$.
As $B \subseteq A_j \subseteq \cl(A)$, we obtain $c_{i - 1}, v_j \in \cl(A)$.
It remains to show that $v_j \in A$ so that $A \cap C_{i - 1} \neq \emptyset$.
But this follows from the fact that no implication has $v_j$ as a conclusion.
Thus, $c_{i - 1} \in \cl(A)$ and $C_{i - 1} \cap A \neq \emptyset$.
This concludes the proof.
\end{proof}

Next, we characterize the minimal generators of $c_{m + 1}$.

\begin{prop} \label{prop:genz}
The following equality holds:
\begin{align*}
\gen(c_{m + 1}) = & \{\{v_i, v_j \} \st v_i, v_j \in V \text{ are in conflict in } \varphi\} \\
& \cup \min_{\subseteq}\{ T_i \cup \{c_i\} \st T_i \subseteq V \text{ is a $\varphi$-conflict-free 1-in-3 assignment of } \varphi_i \}
\end{align*}
\end{prop}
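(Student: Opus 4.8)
The plan is to prove the two inclusions separately, organising everything around the two mechanisms that can force $c_{m+1}$ into a closure: either one implication of $\is_{\mathit{conf}}$ fires, or a cascade $c_i \imp c_{i+1} \imp \dots \imp c_{m+1}$ built from $\is_{\mathit{mod}}$ runs to completion. First I would record two structural facts about $(\U, \is)$. Since no variable of $V$ occurs as the conclusion of an implication, $\cl(A) \cap V = A \cap V$ for every $A \subseteq \U$; consequently, if $A \cap V$ is $\varphi$-conflict-free then no implication of $\is_{\mathit{conf}}$ ever fires during forward chaining from $A$, and no element of $C$ enters $\cl(A)$ unless some $c_i$ already lies in $A$. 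Iterating Proposition~\ref{prop:cascade} then yields the criterion I would use throughout: for $A \subseteq \U$ with $A \cap V$ $\varphi$-conflict-free, $c_{m+1} \in \cl(A)$ if and only if there is an index $i$ with $1 \le i \le m$, $c_i \in A$, and $C_j \cap A \neq \emptyset$ for every $i \le j \le m$. Writing $T := A \cap V$, conflict-freeness upgrades ``$C_j \cap T \neq \emptyset$'' to ``$\card{C_j \cap T} = 1$'', i.e.\ to the statement that $T$ is a $\varphi$-conflict-free $1$-in-$3$ assignment of $\varphi_i$.

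For the inclusion $\supseteq$: a conflicting pair $\{v_i, v_j\}$ generates $c_{m+1}$ via $v_i v_j \imp c_{m+1} \in \is_{\mathit{conf}}$, and is minimal since its only proper subsets are $\emptyset$ and singletons from $V$, none of which generates $c_{m+1}$ (no implication fires on such a set and $\emptyset$ is closed). For a set $\{c_i\} \cup T$ lying in $\min_\subseteq\{T' \cup \{c\} : \dots\}$, the criterion shows it generates $c_{m+1}$, and it is a minimal generator because: removing $c_i$ leaves a $\varphi$-conflict-free subset of $V$, which generates no element of $C$; and membership in $\min_\subseteq\{\dots\}$ forces $T \subseteq \bigcup_{j \ge i} C_j$ (otherwise a junk element of $T$ could be dropped to give a strictly smaller member of the family), so each $v \in T$ lies in some $C_j$ with $i \le j \le m$, whence $C_j \cap T = \{v\}$ by conflict-freeness and removing $v$ makes the criterion fail for the resulting set.

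For the inclusion $\subseteq$: let $A$ be a non-trivial minimal generator of $c_{m+1}$. If $A \cap V$ contains conflicting variables $v, v'$, then $v v' \imp c_{m+1} \in \is_{\mathit{conf}}$, so $\{v, v'\}$ generates $c_{m+1}$ and minimality forces $A = \{v, v'\}$. Otherwise $T := A \cap V$ is $\varphi$-conflict-free and the criterion supplies an index $i$ with $c_i \in A$ and $\card{C_j \cap T} = 1$ for all $i \le j \le m$; in particular $T$ is a $\varphi$-conflict-free $1$-in-$3$ assignment of $\varphi_i$, so $\{c_i\} \cup T$ is a member of the family. Minimality of $A$ then gives $A = \{c_i\} \cup T$: indeed $A \cap C \neq \emptyset$ (else no element of $C$, hence not $c_{m+1}$, reaches $\cl(A)$), and if $A$ contained a further element $c_\ell$ then $A \setminus \{c_\ell\}$ would still meet the criterion with a surviving witness index, contradicting minimality. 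Finally, every member of the family generates $c_{m+1}$ by the criterion, so no proper subset of the minimal generator $A$ can lie in the family; hence $A \in \min_\subseteq\{\dots\}$. Since conflicting pairs contain no element of $C$ while the sets of the second family do, the two families are disjoint and their union is exactly $\gen(c_{m+1})$.

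The step I expect to be the main obstacle is the bookkeeping inside the inclusion $\subseteq$: establishing both that $A \cap C$ collapses to a single $c_i$ and that $A$ is $\subseteq$-minimal within the family (rather than merely a member of it). Both rely on re-using the minimality of $A$ together with the cascade criterion, and one must be careful because the index $i$ delivered by the criterion need not be the least index of a $C$-element occurring in $A$, so it cannot simply be read off.
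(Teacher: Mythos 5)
Your proof is correct and follows essentially the same route as the paper's: both hinge on iterating Proposition~\ref{prop:cascade} to get the ``some $c_i \in A$ and $C_j \cap A \neq \emptyset$ for all $i \le j \le m$'' criterion (using that no variable and no conflict implication can fire when $A \cap V$ is conflict-free), and then run the same two-inclusion, two-case analysis. Your version is in fact slightly more explicit than the paper's on two points it glosses over—that minimality forces $A \cap C$ to be a singleton, and that $\subseteq$-minimality of $T_i \cup \{c_i\}$ in the family matches inclusion-minimality of $T_i$—so no gaps.
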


\begin{proof}
We show double-inclusion.
We begin with the $\supseteq$ part.
The fact that each pair $\{v_i, v_j\}$ of variables in conflict is a minimal generator of $c_{m + 1}$ follows from the definition of $\is$.
Let $T_i \subseteq V$ be an inclusion-wise minimal $\varphi$-conflict-free 1-in-3 assignment of $\varphi_i$, if it exists, and put $A = T_i \cup \{c_i\}$.
By assumption, $C_{j} \cap A \neq \emptyset$ for every $i \leq j \leq m$.
Furthermore, $c_i \in A$.
Hence using Proposition \ref{prop:cascade} inductively on $i < j \leq m + 1$, we deduce that $c_{m + 1} \in \cl(A)$ holds as required.
Now we show that no proper subset $A'$ of $A$ satisfies $c_{m + 1} \in \cl(A')$.
Consider first $A' = A \setminus \{c_i\} = T_i$.
By assumption, $T_i$ is $\varphi$-conflict-free.
Hence, $T_i$ contains no premise of $\is$ so that $\cl(T_i) = T_i$ and $c_{m + 1} \notin T_i = A'$.
Let $A' = A \setminus \{v_j\}$ for some $v_j \in T_i$.
Since $T_i$ is an inclusion-wise minimal 1-in-3 assignment of $\varphi_i$, there exists a clause $C_k$ such that $C_k \cap A = \{v_j\}$ so that $C_k \cap A' = \emptyset$, with $k \geq i$.
Again using Proposition \ref{prop:cascade} inductively on $k < \ell \leq m + 1$, we deduce that $c_{m + 1} \notin \cl(A')$.
This concludes the proof that $A \in \gen(c_{m + 1})$.

We proceed to the $\subseteq$ part.
Let $A$ be a minimal generator of $c_{m + 1}$.
We have two cases: $A \cap C = \emptyset$ and $A \cap C \neq \emptyset$.
In the first case, $c_{m + 1} \in \cl(A)$ implies that $A$ is not closed.
Hence, there exists an implication $B \imp d \in \is$ such that $B \subseteq A \subseteq V$.
As $A$ is a minimal generator of $c_{m + 1}$ by assumption, we deduce $A = \{v_i, v_j\}$ for some variables $v_1, vj$ in conflict as expected.

Now we consider the case $A \cap C \neq \emptyset$.
Since $A$ is a minimal generator of $c_{m + 1}$, $A \setminus C$ is $\varphi$-conflict-free.
Let $i$ be the maximal index in $1 \leq i \leq m$ such that $c_i \in A$.
Since $A \setminus C$ is $\varphi$-conflict-free, $c_{m + 1} \in \cl(A)$ implies $c_m \in \cl(A)$ and $C_m \cap A \neq \emptyset$ by definition of $\is$ and Proposition \ref{prop:cascade}.
Using Proposition \ref{prop:cascade} inductively for $j$ ranging from $m + 1$ to $i + 1$, we deduce that $A \cap C_k \neq \emptyset$ for every $i \leq k \leq m$, so that $A \setminus C$ is a $\varphi$-conflict-free 1-in-3 assignment of $\varphi_i$.
Using $\subseteq$ part, we further deduce that $T_i = A \setminus C$ is inclusion-wise minimal for this property.
Hence, $A$ is of the form $T_i \cup \{c_i\}$ as required, which concludes the proof.
\end{proof}

We are in position to conclude that $\varphi$ has a valid $1$-in-$3$ assignment if and only if there exists a minimal generator of $c_{m + 1}$ containing $c_1$.
Since $\is^b$ is empty, this is equivalent to show that there exists a $D$-generator of $c_{m+1}$ containing $c_1$, i.e., $c_{m + 1} D c_1$.
Assume that $\varphi$ has a valid $1$-in-$3$ assignment $T$.
Then, by Proposition \ref{prop:genz}, $T \cup \{c_1\}$ belongs to $\gen(c_{m + 1})$ and $c_{m + 1} \delta c_1$ holds.
Conversely, if there exists a minimal generator $A$ of $c_{m + 1}$ containing $c_1$, then by Proposition \ref{prop:genz}, $A \setminus \{c_1\}$ is a $1$-in-$3$ assignment of $\{C_1, \dots, C_m\} = \varphi$.
This concludes the proof of the theorem.

\section{Proof of Theorem \ref{thm:D-relation-LB}}
\label{app:D-relation-LB}

We proceed to the proof of Theorem \ref{thm:D-relation-LB} which we first recall.

\DLB*

We showed in Corollary \ref{cor:DB-M} that $D$-relation identification belongs to $\NP$.
To show $\NP$-hardness, we use another reduction from $1$-in-$3$ SAT.
Let $\varphi = \{C_1, \dots, C_m\}$ be a positive $3$-CNF over variables $V = \{v_1, \dots, v_n\}$.
Let $C = \{c_1, \dots, c_m\}$.
We use the same terminology as in Theorem \ref{thm:D-relation-ACG}.
Let $\U = V \cup C \cup \{a, b\}$ be a new groundset, and let $\is = \is_{\mathit{mod}} \cup \is_{\mathit{conf}} \cup \is^b$ where:
\begin{align*}
\is_{\mathit{mod}} & = \{a v_i \imp c_j \st v_i \in C_j \} \cup \{C \imp b\} \\
\is_{\mathit{conf}} & = \{v_i v_j \imp b \st v_i, v_j \in V \text{ and } v_i, v_j \text{ are in conflict in } \varphi \} \\
\is^b & = \{c_i \imp a \st c_i \in C\}
\end{align*}
Informally, $\is_{\mathit{conf}}$ models all the conflicts in between two variables appearing in a clause and $\is_{\mathit{mod}}$ models the fact that taking one element in a clause together with $a$ for all the clauses yields a $1$-in-$3$ assignment of $\varphi$.
The binary implications $\is^b$ guarantee that $a$ will never appear in any minimal generator together with one of the $c_i$'s.
Finally, let $(\U, \cs)$ be the closure system associated to $\is$.
We call $\cl$ the corresponding closure operator.
The reduction can be conducted in polynomial time.

First, we show in the subsequent statement that the closure system $(\U, \cs)$ associated to $(\U, \is)$ is standard and without $D$-cycles.

\begin{prop} \label{prop:std}
The closure system $(\U, \cs)$ is standard and without $D$-cycles.
\end{prop}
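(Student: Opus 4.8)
The plan is to pin down the closure operator on singletons, then determine the $D$-generators of every element of $\U$, and finally read off the $D$-relation as a directed graph and observe that it cannot contain a directed cycle.

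First I would compute $\cl(x)$ for each $x \in \U$. The only implications with a singleton premise are those in $\is^b$, so $\cl(v_i) = \{v_i\}$ for $v_i \in V$, $\cl(a) = \{a\}$, $\cl(b) = \{b\}$, and $\cl(c_i) = \{c_i, a\}$ for $c_i \in C$ (forward chaining from $\{c_i\}$ adds $a$ via $c_i \imp a$ and then halts, since $a v_j \imp c_k$ needs a variable and $C \imp b$ needs all of $C$). Consequently $\cl(x) \setminus \{x\} = \emptyset$ for $x \in V \cup \{a, b\}$ and $\cl(c_i) \setminus \{c_i\} = \{a\} = \cl(a)$; both $\emptyset$ and $\{a\}$ are closed, so $(\U, \cs)$ is standard (and in particular $\emptyset$ is closed and $\cl$ is injective on $\U$).

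Next I would identify, for each element, its minimal generators and among them its $D$-generators. The $v_i$ have only the trivial minimal generator $\{v_i\}$ (no implication concludes a variable), and the non-trivial minimal generators of $a$ are the singletons $\{c_k\}$; since a $D$-generator has size at least $2$, $\genD(v_i) = \genD(a) = \emptyset$. For $c_i$, deriving $c_i$ from a set not containing it forces using some $a v_j \imp c_i$ with $v_j \in C_i$, so every minimal generator of $c_i$ has the form $\{a, v_j\}$ or $\{c_k, v_j\}$ with $v_j \in C_i$. Since $\cl^b(\{c_k, v_j\}) = \{c_k, a, v_j\} \supseteq \{a, v_j\}$ while $\{a, v_j\}$ is itself a minimal generator of $c_i$, the set $\{c_k, v_j\}$ is not a $D$-generator; using Lemma~\ref{lem:D-carac} one checks that $\{a, v_j\}$ is, so $\genD(c_i) = \{\{a, v_j\} : v_j \in C_i\}$. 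Finally, because $b$ occurs in no premise of $\is$, forward chaining from any set $A$ and from $A \setminus \{b\}$ agree except on $b$ itself; hence $b$ lies in no minimal generator of any $x \neq b$, so no element has a $D$-generator containing $b$.

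From this I would draw the $D$-relation as the digraph with an arc $x \to y$ whenever $xDy$: the vertices of $V \cup \{a\}$ are sinks (they have no $D$-generators), every arc leaving a $c_i$ enters $V \cup \{a\}$, and $b$ has no incoming arc. Therefore every directed walk is of the shape $b \to c_k \to (V \cup \{a\})$ or a suffix thereof, so the digraph is acyclic, i.e.\ $(\U, \cs)$ is without $D$-cycles; as a by-product every $D$-path has length at most $2$. The step I expect to be the main obstacle is the enumeration of the minimal generators of $c_i$ together with the verification that $\{c_k, v_j\}$, although a genuine minimal generator, fails to be a $D$-generator because its $\cl^b$-closure absorbs the true $D$-generator $\{a, v_j\}$ — this is exactly the point of including the binary implications $\is^b$ in the reduction.
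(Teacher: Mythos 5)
Your proof is correct and follows essentially the same route as the paper: standardness via the singleton closures, and acyclicity by showing $b$ lies in no $D$-generator, that $V\cup\{a\}$ have no $D$-generators, and that $D$-generators of each $c_i$ avoid $C$. The only difference is cosmetic — you compute $\genD(c_i)=\{\{a,v_j\}\st v_j\in C_i\}$ explicitly, where the paper only derives $D(c_i)\cap C=\emptyset$ by contradiction — and your by-product that $D$-paths have length at most $2$ is exactly what the theorem statement needs.
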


\begin{proof}
Recall that a closure system $(\U, \cs)$ is standard if and only if $\cl(e) \setminus \{e\} \in \cs$ for each $e \in \U$.
For each $e \in \U \setminus C$, we have $\cl(e) = \{e\}$ and $\cl(e) \setminus \{e\} = \emptyset \in \cs$ by definition of $\is$.
For each $c_i \in C$, $\cl(c_i) = \{c_i, a\}$.
Since $\cl(a) = \{a\}$, $\cl(c_i) \setminus \{c_i\} \in \cs$ follows.
We conclude that $(\U, \cs)$ is standard.

We demonstrate that the $D$-relation has no cycles.
For $e \in \U$, we denote by $D(e)$ the elements $f$ of $\U$ such that $e D f$.
These elements are the $D$-neighbors of $e$.
We have the following properties:
\begin{enumerate}
    \item For every $e \in \U \setminus \{b\}$, $b \notin D(e)$.
    Let $A \subseteq \U$ and assume that $b \in A$, $e \notin A$ and $e \in \cl(A)$.
    Then, there is a sequence of implications in $\is$ used in the forward chaining applied to $A$ in $\is$ that allows to derive $e$.
    However, $b$ belongs to no premise of $\is$.
    Hence $e \in \cl(A \setminus \{b\})$.
    We deduce that $b$ belongs to no minimal generator of $e$, and that $b \notin D(e)$.
    
    \item $D(a) = \emptyset$ and $D(v_j) = \emptyset$ for every $v_j \in V$.
    Let $v_j \in V$.
    By definition of $\is$, no implication has $v_j$ has a conclusion.
    Hence $\U \setminus \{v_j\}$ is closed, so that $\gen(v_j) = \emptyset$ and $D(v_j) = \emptyset$.
    Now consider $a$.
    Again by definition of $\is$, $\U \setminus (C \cup \{a\})$ is closed.
    Hence, any minimal generator of $a$ must intersect $C$.
    However, for every $c_j  \in C$, we have $c_j \imp a \in \is$ due to $\is^b$.
    We conclude that $\gen(a) = C$ and hence $D(a) = \emptyset$.
    
    \item For every $c_j \in C$, $D(c_j) \subseteq \U \setminus C$.
    Assume for contradiction there exists $c_j \in C$ such that $D(c_j) \cap C \neq \emptyset$.
    In order to derive $c_j$ from $A$ in $\is$, the forward chaining procedure must use an implication of the form $a v_i \imp c_j$ where $v_i \in C_j$, that is $a, v_i \in \cl(A)$.
    Since no implication of $\is$ has $v_i$ as a conclusion, we also have $v_i \in A$.
    Now, let $c_k \in A \cap C$.
    Such a $c_k$ exists by assumption.
    We have $\{a\} = \cl^b(a) \subset \cl^b(c_k) \subseteq \cl^b(A)$.
    Since $\cl^b(v_i) = \{v_i\}$, we obtain $\cl^b(\{a, v_i\}) \subset \cl^b(A)$.
    However, $c_j \in \cl^b(\{a, v_i\})$.
    This contradicts $A$ being a $D$-generator of $c_j$.
    We deduce that $D(c_j) \subseteq \U \setminus C$ for all $c_j \in C$.
\end{enumerate}
Using items 1. and 2., we deduce that no cycle in the $D$-relation can pass through any elements of $V \cup \{a, b\}$.
In other words, any $D$-cycle, if any, must be included in $C$.
However, by item 3., for every $c_j \in C$, $D(C_j) \cap C = \emptyset$ so that no $D$-cycle can contain $2$ adjacent elements of $C$.
We must conclude that the $D$-relation has no cycle as required.
\end{proof}

It remains to show that $bDa$ holds if and only if $\varphi$ admits a valid 1-in-3 assignment.
Recall that $bDa$ holds exactly if there exists a $D$-generator $A$ of $b$ such that $a \in A$.
Since $c_j \imp a \in \is$ for each $c_j \in C$, it follows that if such a $A$ exists, it satisfies $A \subseteq V \cup \{a\}$.
Moreover, $\cl(d) = \cl^b(d) = \{d\}$ for each $d \in V \cup \{a\}$.
Henceforth, any minimal generator of $b$ included in $V \cup \{a\}$ is already $\cl^b$-minimal and hence a $D$-generator of $b$.
Thus, $bDa$ is equivalent to $a$ belonging to a minimal generator of $b$.

We prove the if part of the statement.
Assume that $bDa$ holds.
We show that $\varphi$ has a valid 1-in-3 assignment $T$.
Following previous discussion, let $A$ be a minimal generator of $b$ with $a \in A$ and let $T = A \cap V$.
Because $A$ contains $a$ and $A$ is a minimal generator of $b$, $b \notin \cl(A \setminus \{a\})$.
By construction of $\is_{\mathit{conf}}$ we deduce that $T$ is $\varphi$-conflict-free.
Thus, for every $C_i \in \varphi$, $\card{T \cap C_i} \leq 1$.
On the other hand, $b \in \cl(A)$.
Since $\cl(A) \cap V = A \cap V = T$, we deduce that in $\is$, $b$ is derived from $A$ with the implication $C \imp b$.
Again using $\is$ and the fact that $A \subseteq V \cup \{a\}$, we deduce that $T$ must contain at least one variable for each clause $C_i$ of $\varphi$.
Consequently, $\card{T \cap C_i} = 1$ holds for each $C_i \in \varphi$, and $T$ is a valid 1-in-3 assignment of $\varphi$.

We move to the only if part.
Let $T$ be an inclusion-wise minimal valid 1-in-3 assignment of $\varphi$.
We show that $T \cup \{a\}$ is a minimal generator of $b$.
Since $T$ contains one variable for each clause $C_j$ in $\varphi$, $c_j \in \cl(T \cup \{a\})$ holds for every $c_j \in C$ by definition of $\is$.
Since $C \imp b \in \is$, we have $b \in \cl(T \cup \{a\})$.
We show that no proper subset $T'$ of $T \cup \{a\}$ implies $b$.
If $T' = T \setminus \{a\}$, then $\cl(T') = T' = T$ as $T$ is $\varphi$-conflict-free.
Let $T' = T \cup \{a\} \setminus \{v_i\}$ for some $v_i \in T$.
Since $T$ is inclusion-wise minimal, there exists $C_j \in \varphi$ such that $T \cap C_j = \{v_j\}$.
Thus, $T' \cap C_j = \emptyset$.
It follows that $c_i \notin \cl(T')$.
As $T$ is $\varphi$-conflict-free, it contains no premise of $\is_{\mathit{conf}}$.
We deduce that $b \notin \cl(T')$.
Applying this reasoning to each $v_j \in T \cap V$, we conclude that that $T \cup \{a\}$ is a minimal generator of $b$.
By previous discussion, we obtain $bDa$.
This concludes the proof of the theorem.

\section{Proof of Proposition \ref{prop:D-base-dual-DB}} 
\label{app:D-base-dual-DB}

In this section we prove Proposition \ref{prop:D-base-dual-DB} which we first restate.

\PDB*  

\begin{proof}
We first show that $\is'^b = \is^b \cup \{a \imp d \st B \subseteq \cl(a) \text{ for some } B \in \Bm \}$ holds.
We begin with the $\supseteq$ part.
Let $a \imp b$ be an implication of $\is^b$.
By definition of $\is^b$, we have $a \neq b$ and $b \in \cl(a)$.
According to the construction of $\cl'$, we deduce $b \in \cl'(a)$ so $\is^b \subseteq \is'^b$.
Let $a \in \U$ such that $B \subseteq \cl(a)$ for some $B \in \Bm$.
By definition, $\cl(a) \subseteq \cl'(a) \nsubseteq B$ for every $B \in \Bp$.
Hence, by construction of $\cs'$, we deduce that $d \in \cl'(a)$ and $a \imp d \in \is'^b$ which proves the $\supseteq$ part of the equality.
We turn to the $\subseteq$ part.
Let $a \imp b$ be an implication of $\is'^b$, i.e., an implication derived from $b \in \cl'(a)$ and $a \neq b$.
We have two possible cases, $b = d$ or $b \neq d$.
Assume first that $b = d$.
By definition of $\cl'$, and since $\Bm$, $\Bp$ are dual in $(\cs, \subseteq)$, $d \in \cl'(a)$ implies that $B \subseteq \cl(a)$ for some $B \in \Bm$.
Thus, $a \imp b \in \{a \imp d \st B \subseteq \cl(a) \text{ for some } B \in \Bm\}$.
Remark that this holds in particular for the elements $a$ such that $\cl(a) \in \Bm$.
Now, suppose that $b \neq d$.
We show that $a \imp b \in \is^b$.
We have $b \in \cl'(a)$ by assumption.
By definition of $\cl'$, we also have $\cl(a) = \cl'(a) \setminus \{d\}$ and $b \in \cl(a)$ holds as $b \neq d$.
Hence, $a \imp b \in \is^b$.
We deduce that $\is'^b = \is^b \cup \{a \imp d \st B \subseteq \cl(a) \text{ for some } B \in \Bm \}$ is true as required.

It remains to prove that the non-binary part of $\is'_D$ is $\{K'_B \imp d \st B \in \Bm, \card{K'_B} \geq 2\}$.
We first show that $d$ is the unique element of $\U'$ that can have $D$-generators.
Let $c \in \U$ (so that $c \neq d$) and let $A$ be a non-trivial minimal generator of $c$ in $(\U', \cs')$.
Again using the construction of $\cl'$, we have $\cl^b(A) = \cl(A) = \cl'(A) \setminus \{d\}$ so that $A$ is a non-trivial minimal generator of $c$ in $(\U, \cs)$.
Since $(\U, \cs)$ is distributive, we deduce that $A$ must be a singleton.
In particular, $c$ cannot have $D$-generators in $(\U', \cs')$ unless $c = d$.

We proceed to the analysis of $D$-generators of $d$.
We prove that for every $B \in \Bm$, the unique minimal spanning set $K'_B$ of $\cl'(B)$ obtained from Proposition \ref{prop:Fp-CG} is a $D$-generator of $d$ provided $\card{K'_B} \geq 2$.
Note that if $\card{K'_B} < 2$, then $K'_B$ is a singleton since $(\U', \cs')$ is standard, and hence there is an implication of the form $K'_B \imp d$ in $\is'^b \subseteq \is'_D$.
Let us assume $\card{K'_B} \geq 2$.
First, we show that $K'_B$ is a minimal generator of $d$.
We have $\cl'(K'_B) = B \cup \{d\}$ where $B \in \Bm$.
Hence, for every $K \subset K'_B$, $d \notin \cl'(K) = \cl(K)$ by construction of $\cl'$.
Thus, $K'_B$ is a minimal generator of $d$.
We prove that it is now a $D$-generator, i.e., that for every $K \subseteq \U$ such that $\cl'^b(K) \subset \cl'^b(K'_B)$, $d \notin \cl'(K)$.
Hence, let $K \subseteq \U$ be such that $\cl'^b(K) \subset \cl'^b(K'_B)$.
We have $\cl'^b(K) = \cl(K) \subset \cl'^b(K'_B) = \cl(K'_B) = B$.
In particular, $\cl'^b(K) \in \idl \Bp$ in $(\cs, \subseteq)$.
Hence, $\cl'(K) = \cl(K)$ so that $d \notin \cl'(K)$.
We deduce that $K'_B$ is a $D$-generator of $d$ in $(\U', \cs')$ as expected.
Consequently, $\{K'_B \imp d \st B \in \Bm, \card{K'_B} \geq 2\} \subseteq \is'_D$ holds.

It remains to show that each $D$-generator of $A$ is the minimal spanning set $K'_B$ of some $B \in \Bm$, with $\card{K'_B} \geq 2$.
The fact that $\card{A} \geq 2$ follows from the definition of $D$-generators.
Since $d \in \cl'(A)$ and $d \notin A$, we deduce by construction of $\cl'$ that there exists $B \in \Bp$ such that $B \subseteq \cl(A) \subseteq \cl'^b(A) \subseteq \cl'(A)$.
In particular, $K'_B \subseteq \cl'^b(A)$ so that $\cl'^b(K'_B) \subseteq \cl'^b(A)$.
Since both $A$ and $K'_B$ are $D$-generators of $d$ by assumption and previous discussion, we deduce that $A = K'_B$ must hold.
As a consequence, $\{K'_B \imp d \st B \in \Bm, \card{K'_B} \geq 2\}$ precisely captures the $D$-generators of $d$ in $(\U', \cs')$.
This shows that the $D$-base is complete, and concludes the proof.
\end{proof}

\newpage

\section{Proof of Lemma \ref{lem:transition}} \label{app:transition}

In this section we give, for self-containment, a proof of Lemma \ref{lem:transition} within our phrasing.
The proof closely follows \cite{ennaoui2025polynomial, lucchesi1978candidate}.

\TRA*

\begin{proof}
We start with the if part.
Let $C = \cl_c^b((\cl_c^b(A) \setminus \cl_c^b(d)) \cup B)$ for some $A$ and $B \imp d$ as in the statement and assume that it does not include any set of $\cc{S}$.
Note that $c \notin C$ holds as $C \subseteq X_c$ by definition.
Since $d \in \cl_c(B)$ and $B \subseteq C$, we deduce $A \subseteq \cl_c(C)$ and hence $\cl_c(C) = \U_c$.
By Lemma \ref{lem:link-minimal-D} and since $C$ is $\cl_c^b$-closed, we deduce that $C$ contains a $D$-generator of $c$ that is not in $\cc{S}$.
This concludes this part of the proof.

We move to the only if part.
Assume $\cc{S} \neq \gen_D(c)$ with $\cc{S}$ non-empty.
Let $A' \in \gen_D(c) \setminus \cc{S}$.
We have $\cl^b(A') = \cl_c^b(A') \subseteq \U_c$.
Let $F_b$ be an inclusion-wise maximal $\cl_c^b$-closed set that includes $A'$ but not the $D$-generators of $\cc{S}$.
Since $\cc{S} \neq \emptyset$, $F_b \subset \U_c$.
On the other hand, $F_b$ contains a $D$-generator of $c$, so that $\cl_c(F_b) = \U_c$ by Lemma~\ref{lem:link-minimal-D}.
Moreover, $F_b$ is $\cl_c^b$-closed.
Therefore, there exists $B \imp d \in \is_c$ with $\card{B} \geq 2$ such that $B \subseteq F_b$ and $d \notin F_b$.
Consider $\cl_c^b(F_b \cup d)$.
By assumption on the maximality of $F_b$, there must be some $A \in \cc{S}$ such that $\cl_c^b(A) \subseteq \cl_c^b(F_b \cup d)$.
Given that $(\U_c, \cl_c^b)$ is standard by assumption and defines a distributive closure lattice, we have $\cl_c^b(F_b \cup d) = \cl_c^b(F) \cup \cl_c^b(d) = F_b \cup \cl_c^b(d)$.
Now, we have $\cl_c^b(A) \nsubseteq F_b$ but $\cl_c^b(A) \subseteq F_b \cup \cl_c^b(d)$.
Given that $B \subseteq F_b$, we thus deduce $\cl_c^b(A) \setminus \cl_c^b(d) \cup B \subseteq F_b$ and $\cl_c^b((\cl_c^b(A) \setminus \cl_c^b(d) \cup B) \subseteq \cl_c^b(F_b) = F_b$. 
But then, $A \subseteq \cl_c(\cl_c^b((\cl_c^b(A) \setminus \cl_c^b(d) \cup B))$ since $d \in \cl_c(B)$ as $B \imp d \in \is_c$.
We deduce that $\cl_c(\cl_c^b((\cl_c^b(A) \setminus \cl_c^b(d) \cup B)) = \U_c$ again by Lemma \ref{lem:link-minimal-D}.
As $\cl_c^b((\cl_c^b(A) \setminus \cl_c^b(d) \cup B)$ is $\cl_c^b$-closed it must contain a $D$-minimal key of $(\U_c, \cl_c)$, that is, a $D$-generator of $c$.
Because $\cl_c^b((\cl_c^b(A) \setminus \cl_c^b(d) \cup B) \subseteq F_b$, this $D$-generator does not belong to $\cc{S}$ as required.
This concludes the proof.
\end{proof}

\end{document}